\documentclass[reprint,aps,prl,amsmath,amssymb,superscriptaddress,nofootinbib]{revtex4-2}

\usepackage[T1]{fontenc}
\usepackage[utf8]{inputenc}

\usepackage{xcolor}
\usepackage{graphicx}
\usepackage{braket}
\usepackage{caption}
\usepackage{subcaption}
\usepackage{xr}
\usepackage{amsthm}
\usepackage{dsfont}
\usepackage{float}
\usepackage{quantikz}

\usepackage{hyperref}
\hypersetup{
  colorlinks=true,
  linkcolor=blue,
  citecolor=blue,
  urlcolor=blue
}

\theoremstyle{definition}
\newtheorem{Def}{Definition}
\newtheorem{Prop}{Proposition}
\newtheorem{The}{Theorem}

\newtheorem{Cor}{Corollary}

\begin{document}
\title{Logarithmic-depth quantum state preparation of polynomials}
\author{Baptiste Claudon}
\email{baptiste.claudon@qubit-pharmaceuticals.com}
\affiliation{Qubit Pharmaceuticals, Advanced Research Department, 75014 Paris, France}
\affiliation{Sorbonne Universit\'e, LJLL, UMR 7198 CNRS, 75005 Paris, France}
\affiliation{Sorbonne Universit\'e, LCT, UMR 7616 CNRS, 75005 Paris, France}
\author{Alexis Lucas}
\affiliation{Sorbonne Universit\'e, LCT, UMR 7616 CNRS, 75005 Paris, France}
\author{Jean-Philip Piquemal}
\affiliation{Qubit Pharmaceuticals, Advanced Research Department, 75014 Paris, France}
\affiliation{Sorbonne Universit\'e, LCT, UMR 7616 CNRS, 75005 Paris, France}
\author{César Feniou}
\email{cesarf@qubit-pharmaceuticals.com}
\affiliation{Qubit Pharmaceuticals, Advanced Research Department, 75014 Paris, France}
\author{Julien Zylberman}
\email{julien.zylberman@cerfacs.fr}
\affiliation{CERFACS, Toulouse, France}
\begin{abstract}
Quantum state preparation is a central primitive in many quantum algorithms, yet it is generally resource intensive, with efficient constructions known only for structured families of states. This work introduces a method for preparing quantum states whose amplitudes are given by a degree$-d$ polynomial, using circuits with logarithmic depth in the number $n$ of qubits and only $\mathcal O(n)$ ancilla qubits, improving previous approaches that required linear-depth circuits. The construction first relies on a block-encoding of an affine diagonal operator based on its Pauli-basis decomposition, which involves only $n$ terms. A modified linear-combination-of-unitaries (LCU) technique is introduced to implement this decomposition in logarithmic depth, together with a novel circuit for the EXACT-one oracle that flags basis states in which exactly one qubit is in the state $\ket{1}$.
It then uses a generalized quantum eigenvalue transformation (GQET) to promote this affine operator to an arbitrary degree polynomial.
Theoretical analysis and numerical simulations are reported along with a proof-of-principle implementation on a trapped-ion quantum processor using $14$ qubits and more than $500$ primitive quantum gates.
Because polynomial approximations are ubiquitous in scientific computing, this construction provides a scalable and resource-efficient approach to quantum state preparation, further improving the potential of quantum algorithms in fields such as chemistry, physics, engineering, and finance.
\end{abstract}

\maketitle

\section{Introduction}

Preparing a $n$-qubit quantum state with prescribed components, a process commonly referred to as \textit{quantum state preparation} or \textit{quantum data loading}, is a fundamental subroutine in many quantum algorithms, used both to initialise qubits in carefully chosen states and to perform linear combination of unitaries (LCU) via the \texttt{PREPARE-SELECT} procedure \cite{gilyen2019quantum}. Since the space of $n$-qubit states has dimension $2^n$, the initialisation of arbitrary states generically requires exponential resources ~\cite{barenco1995elementary, vartiainen2004efficient, plesch2011quantum}, which can significantly limit the practical performance of quantum algorithms. Consequently, only structured states can be prepared efficiently, \textit{i.e.} states whose components are specified by a function $f$ that exhibits exploitable features such as regularity, smoothness, or sparsity. The quest for efficient quantum circuits to prepare such states has therefore been an active area of research in recent years, driven by applications across a wide range of domains, including quantum machine learning~\cite{biamonte2017quantum, zhao2021smooth, das2024role, han2025enqode}, quantum simulation in chemistry~\cite{berry2025rapid, feniou2024sparse, huggins2025efficient, baker2025universal}, physics~\cite{reac_diffu, wave_equation, madelung, zylberman2025trotter}, and stochastic simulations~\cite{grover2002creatingsuperpositionscorrespondefficiently, herbert2021classicalsamplingcircuitquantum, Montanaro_2015}. In many such contexts, the classical data to be loaded into a quantum register can be accurately approximated by low-degree polynomials.

In the following, we introduce a method for preparing quantum states on \(n\) qubits whose amplitudes are given by a polynomial function of the computational-basis index, namely
\[
\ket{p}\ \propto\ \sum_x p(x)\,\ket{x}, 
\qquad 
p(x)=\sum_{k=0}^{d} a_k x^k,
\]
where \(p(x)\) is a degree-\(d\) polynomial with real or complex coefficients \(a_k\).
Our main result is a construction with circuit depth\footnote{The depth of a quantum circuit is the number of layers of primitive quantum gates that can be executed in parallel. It therefore represents the minimum number of sequential time steps required to implement a given quantum operation, assuming parallel gate execution is allowed.} \(\mathcal O(d\log n)\) and circuit size\footnote{The size of a quantum circuit is defined as the total number of primitive quantum gates it contains.} \(\mathcal O(d n)\) while the number of ancillary qubits\footnote{Ancillary qubits (ancilla) are additional qubits used to assist in implementing a desired quantum operation, for instance by enabling parallelization.} scale linearly in $n$.

Most existing quantum algorithms that include a state-preparation step proceed by implementing an oracle \(O_f\) that coherently evaluates a function \(f\), followed by controlled rotations that convert this value into amplitude on an ancilla register:
\[
\ket{x}\ket{0}^{\otimes m}
\ \xrightarrow{\,O_f\,}\
\ket{x}\ket{\tilde f(x)}\ket0
\ \xrightarrow{\,\mathrm{}\,}\
\tilde f(x)\ket{x}\ket{0}^{\otimes m}+\ket{\perp}.
\]

where $\tilde f(x)\in[0,1]$ denotes a finite-precision encoding of \(f(x)\), and $\ket{\perp}$ ensures that the quantum state is normalized~\cite{grover2002creating, sanders2019black, wang2021fast, wang2022quantum, rattew2022learning, lin2026quantumalgorithms}.
In practice, such oracles are typically realised either via coherent arithmetic circuits~\cite{vedral1996quantum,bhaskar2016quantum, haner2018optimizing} or via quantum lookup-table techniques~\cite{gidney2019windowed, krishnakumar2019quantum, sunderhauf2024block}.

Alternative approaches avoid coherent arithmetic by directly synthesising the target amplitudes for restricted families of functions \(f\).
This includes, for instance, Fourier modes~\cite{moosa2024linear, rosenkranz2025quantum}, Walsh modes~\cite{zylberman2024efficient,zylberman2025efficient}, Gaussian states~\cite{kuklinski2025simpler}, Chebyshev-based constructions~\cite{mcardle2022quantum}, tensor network algorithms to implement high-dimensional functions~\cite{ballarin2025efficientquantumstatepreparation}, and sparse states~\cite{tubman2018postponing, quantum_dac, de_Veras_2022, feniou2024sparse}, among others~\cite{o2025quantum, guseynov2026efficient}.

Within the specific setting of polynomial amplitudes \(p(x)\), straightforward arithmetic-based constructions remain depth-linear, since the phase-kickback step typically requires an \(\Omega(n)\) sequence of qubit \(R_Z\)-type rotations, while non-arithmetic approaches either still incur linear depth together with linear ancilla overhead and circuit size~\cite{gonzalez2024efficient} (including efficient decompositions into Clifford+$T$~\cite{o2025quantum}), or reduce the number of ancillas to \( \mathcal O(\log n)\) at the cost of maintaining linear depth and incurring an additional logarithmic factor in circuit size~\cite{guseynov2026efficient}. Note that logarithmic depth has been previously proposed, but only for polynomials of \(\sin(x)\)~\cite{mcardle2022quantum} rather than directly of \(x\).

Our approach achieves logarithmic depth while keeping a qubit count in $\mathcal O(n)$, yielding an asymptotic depth improvement over prior state-preparation methods for polynomial amplitudes.

The strategy consists in preparing \(\ket{p}\propto\ \sum_x p(x)\,\ket{x}\) by applying a diagonal operator that weights each computational basis state \(\ket{x}\) according to the target polynomial \(p(x)\).
Concretely, we apply a block-encoding\footnote{Block-encoding is a technique that embeds a target operator, up to a normalisation factor, as a sub-block of a larger unitary acting on additional ancilla qubits. This is necessary because the target operator is non-unitary whereas quantum gates are unitary. More precisely, a $(n+m)$-qubit unitary $U$ such that $\braket{O^m|U|O^m}=A/\alpha$ is said to be a block-encoding of $A$, with normalisation $\alpha>0$.} of the diagonal operator
\[
D_p=\sum_x \frac{p(x)}{\alpha}\,\ket{x}\!\bra{x},
\]
for a known normalisation \(\alpha\), to the uniform superposition
\[
\ket{s}=H^{\otimes n}\ket{0}=\frac{1}{\sqrt{N}}\sum_x \ket{x},
\]
where $N=2^n$. Postselecting in the correct ancilla subspace gives
\[
D_p\ket{s}\ \propto\ \sum_x p(x)\ket{x},
\] i.e., the desired polynomial-amplitude state up to normalisation.

The construction of $D_p$ proceeds in two steps: we first obtain a block-encoding of a linear diagonal operator, and then use the generalized quantum eigenvalue transform (GQET) protocol \footnote{The GQET protocol is an extension of the generalized quantum signal processing (GQSP) protocol ~\cite{motlagh2024generalized} to implement polynomials of block-encoding, or Projected Unitary Encoding, matrices. More details are presented in the Appendix and reference \cite{gqsvt}.} \cite{gqsvt} to promote this linear construction to an arbitrary degree-\(d\) polynomial \(p(x)\), \[
\sum_x x \,|x\rangle\langle x|
\;\xrightarrow{\;\mathrm{GQET}\;}
\sum_x p(x)\,|x\rangle\langle x|.
\]

Note that such diagonal operators can be used beyond state preparation, in particular for quantum chemistry, quantum Monte Carlo methods, and in the simulation of partial differential equations ~\cite{chan2023grid, motlagh2024generalized, feniou2025real,zylberman2025efficient, zylberman2025trotter}. 

The central technical contribution lies in the construction of a log-depth block-encoding of the linear diagonal operator, achieved through a modified version of the \texttt{PREPARE-SELECT} framework, by exploiting the Pauli basis decomposition \[\sum_x x\,|x\rangle\langle x| \propto \alpha \mathbb{I} + \sum_{k=1}^n \frac{Z_k}{2^k}.\]

The EXACT-one (or one-hot) oracle—flagging the Hamming-weight-one subspace of some of the ancilla qubits—constitutes a central subroutine of our state preparation algorithm. By revisiting the EXACT-one circuit decomposition, we derive a logarithmic-depth implementation requiring only two ancilla qubits. Although this refinement does not modify the overall asymptotic scaling of the state preparation procedure, it improves upon previous constructions of the EXACT-one oracle, which required \(\mathcal O(n)\) ancillae in the same depth regime~\cite{logdepth_hamming, Piroli_2024, zi2024shallowquantumcircuitimplementation}.

The manuscript is organized as follows. The first section summarizes the main contributions. We then present the circuit constructions, followed by numerical simulations that validate the asymptotic scaling of the proposed circuits. We also report a proof-of-principle demonstration on the Quantinuum H2 trapped-ion quantum processor~\cite{Pino_2021}, involving 14 qubits and more than 500 primitive quantum gates. Finally, we discuss the implications and limitations of our approach. Technical proofs and supplementary derivations are provided in the Appendix.

\section{Contributions}
In this section, we summarize the main contributions of the present work. Let $n\ge 1$ denote the number of qubits, and define $B_n=\{0.x_1...x_n:x_1,...,x_n\in \{0, 1\}\}$ as the set of points corresponding to the uniform discretization of the interval $[0,1]$. The first key result is a quantum circuit that block-encodes the linear position operator, as summarized in the following theorem.
\begin{The}{(Informal)} We construct an exact block-encoding of the $n$-qubit operator
\begin{equation}
\sum_{x\in B_n}x\ket{x}\bra x,
\end{equation}
with a quantum circuit of $\mathcal O(\log(n))$ depth, $\mathcal O(n)$ size, using $\mathcal O(n)$ ancillae and a normalization $\Theta(1)$ asymptotically independent of $n$.
\label{the:informal_xketxbrax}
\end{The}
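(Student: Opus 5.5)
The starting point is the Pauli decomposition of the position operator. Writing $x=\sum_{k=1}^n x_k 2^{-k}$ and $x_k=(1-Z_k)/2$ on the basis state $\ket{x_k}$, we get
\begin{equation}
\sum_{x\in B_n} x\ket{x}\bra{x} \;=\; \Big(\tfrac12\sum_{k=1}^n 2^{-k}\Big)\mathbb I \;-\; \sum_{k=1}^n 2^{-k-1} Z_k ,
\end{equation}
which has $n+1$ terms and $\ell_1$-norm of coefficients $\alpha=\sum_k 2^{-k}=1-2^{-n}\in[1/2,1)$. So the plan is an LCU with normalization $\alpha=\Theta(1)$, independent of $n$. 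We absorb the minus signs into the unitaries, using $-Z_k$ or equivalently $XZX$.

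The obstacle is that the standard \texttt{PREPARE-SELECT} with a binary index register of $\lceil\log_2(n+1)\rceil$ qubits needs a \texttt{SELECT} whose controlled operations are applied sequentially, giving linear depth. I would instead use a \emph{unary (one-hot) index register} of $n+1$ ancillae $a_0,\dots,a_n$.

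\textbf{\texttt{PREPARE}.} It maps $\ket{0^{n+1}}$ to $\sum_k \sqrt{c_k/\alpha}\,\ket{e_k}$, a W-type state with prescribed weights. This can be prepared in $\mathcal O(\log n)$ depth and $\mathcal O(n)$ gates by a binary tree of controlled-$R_Y$ and SWAP/CNOT splittings. Each node splits the amplitude of its subtree between two halves, with angles determined by partial sums of the $c_k$, so the depth equals the tree depth.

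\textbf{\texttt{SELECT}.} It becomes $\prod_k C_{a_k}(-Z_k)$. On the one-hot subspace this acts correctly, and all the controlled-$Z$ gates act on disjoint qubit pairs, so they run in depth $\mathcal O(1)$. The $\mathbb I$ term corresponds to $a_0$ and needs no gate.

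\textbf{Unpreparation.} We apply $\texttt{PREPARE}^\dagger$, and the projection onto $\ket{0^{n+1}}$ then gives $\bra{0}\texttt{PREP}^\dagger\,\texttt{SEL}\,\texttt{PREP}\ket{0}=\sum_k (c_k/\alpha) U_k$. This holds because \texttt{PREPARE} outputs only one-hot states, on which the product of controlled gates coincides with the ideal selector. This gives an exact block-encoding with normalization $\alpha$.

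\textbf{Obstacle.} The main thing to verify is the log-depth tree \texttt{PREPARE}. First I would build a unary-encoded version: start with $a_0$ in $\ket1$, then at level $j$ apply, in parallel across all active nodes, a controlled rotation moving amplitude from a node to its partner, e.g.\ $\sqrt{q}\ket{10}+\sqrt{1-q}\ket{01}$ via a controlled-$R_Y$ followed by a CNOT. This doubles the number of possibly occupied sites each level, giving $\lceil\log_2(n+1)\rceil$ levels with $\mathcal O(n)$ total gates. The EXACT-one oracle mentioned in the introduction is not needed for correctness here, since the support stays one-hot by construction. If the authors' modified LCU instead uses a product-state \texttt{PREPARE} (independent $R_Y$ on each ancilla), the postselection would not stay within the one-hot subspace. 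In that case I would invoke the log-depth EXACT-one oracle to flag the Hamming-weight-one subspace, and check that the resulting success amplitude is still $\Theta(1)$ for the geometric weights $2^{-k}$. Either route gives depth $\mathcal O(\log n)$, size $\mathcal O(n)$, $\mathcal O(n)$ ancillae, and normalization $\Theta(1)$.
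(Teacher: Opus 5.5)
Your argument is correct, but it produces the coefficient state differently from the paper. The paper never prepares $\ket{\beta_n}$ exactly. One layer of uncontrolled $R_Y$ rotations gives the product state $\ket{\alpha_n}$, whose Hamming-weight-one component is exactly $\sqrt{p_n}\ket{\beta_n}$ with $p_n\ge 1/6$. The new two-ancilla, log-depth EXACT-one oracle then flags that component. Since the unflagged branch is not one-hot, \texttt{SELECT} must itself be controlled by the flag, which costs a log-depth fan-out of the flag onto $n$ extra ancillae. The unflagged branch therefore acts as the identity, and $\Psi_n^\dagger S_n\Psi_n$ encodes the affine operator $1-2p_nL_n$. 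A further Hadamard-controlled LCU with the identity then gives $p_nL_n$, i.e.\ normalization $1/p_n\le 6$.

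You instead load the weights exactly with the log-depth binary-tree (weighted W-state) loader, and you give the identity term its own index $a_0$. So you need no EXACT-one oracle, no flag and no fan-out; \texttt{SELECT} is a single layer of controlled-$(-Z)$ gates on disjoint pairs, and you use only $n+1$ ancillae. Your route also gives a strictly better normalization: $\alpha=1-2^{-n}$ for $\sum_x x\ket{x}\bra{x}$, i.e.\ $L_n$ itself with no subnormalization. Since $\mathrm{PREP}^\dagger\,\mathrm{SEL}\,\mathrm{PREP}$ is Hermitian, it is directly a SPUE usable in GQET. That would make the paper's later amplitude-amplification step, with its $\log(1/\epsilon)$ depth factor and approximation error, unnecessary.

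In exchange, the paper's route keeps the loading layer free of controlled rotations and showcases the EXACT-one construction. Both routes need arbitrary, including exponentially small, rotation angles.

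Your fallback sentence is slightly incomplete. With a product-state \texttt{PREPARE}, flagging alone does not suffice: you would also need the flag-controlled \texttt{SELECT} and the affine-to-linear LCU step just described. Your main route does not rely on it.
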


The formal version is presented in Theorem~\ref{the:formal_xketxbrax} of the Appendix.

Using the GQET protocol detailed below, one can construct an efficient block-encoding of the diagonal operator $\sum_{x\in B_n}p(x)\ket{x}\bra x$. When applied to the uniform superposition state $\ket{s}=(1/\sqrt N)\sum_x \ket{x}$, this block-encoding prepares the desired target state up to a normalization asymptotically independent of $n$, as written in the following theorem:

\begin{The}{(Informal)}
Let $\epsilon>0$ and $p$ be a degree-$d$ complex (suitably normalized) polynomial. Then, we can prepare
\begin{equation}
\ket{p}=\frac1{\|p\|_{2,N}}\sum_{x\in B_n}p(x)\ket x,
\end{equation}
 up to an error $\epsilon$ in two-norm with a quantum circuit of depth $\mathcal O(d\log(\frac{\|p'\|_2}{\|p\|_2}/\epsilon)\log(n))$, size $\mathcal O(nd\log(\frac{\|p'\|_2}{\|p\|_2}/\epsilon))$, ancillae $\mathcal O(n)$, and a probability of success $\Theta(1)$ asymptotically independent of $n$ and $\epsilon$, where $\|p\|_{2,N}=\sqrt{\sum_{x\in B_n} |p(x)|^2}$, $\|p\|_2=\sqrt{\int_0^1|p(x)|^2}$ and $p'$ is the derivative function of $p$. 
\end{The}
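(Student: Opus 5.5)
Take the block-encoding $U_A$ of $A=\sum_{x\in B_n} x\ket{x}\bra{x}$ given by Theorem~\ref{the:informal_xketxbrax}. It has depth $\mathcal O(\log n)$, size $\mathcal O(n)$, $\mathcal O(n)$ ancillae and normalization $\alpha=\Theta(1)$. Feed it into the GQET protocol. GQET, like GQSP, implements a degree-$D$ polynomial $q$ of the encoded operator using $D$ queries to the controlled block-encoding, interleaved with single-qubit rotations on one extra ancilla. The only constraint is $|q|\le 1$ on the relevant spectral domain; here that domain is the rescaled interval $[0,1/\alpha]$, or the unit circle after the GQSP embedding.

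The costs then follow directly. The depth is $\mathcal O(D\log n)$ and the size is $\mathcal O(Dn)$. Controlling $U_A$ only adds a control on the $\mathcal O(\log n)$-depth pieces, which can be done without changing the asymptotics, for instance by controlling only the PREPARE/reflection part. The ancilla count stays $\mathcal O(n)$. Finally, apply the result to $\ket{s}=H^{\otimes n}\ket 0$ and postselect.

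\textbf{Choice of $D$.} Since $p$ is itself polynomial, one could take $D=d$ exactly. The $\log(\frac{\|p'\|_2}{\|p\|_2}/\epsilon)$ factor in the statement comes from the normalization: GQET needs $q(y)=p(\alpha y)/K$ with $K\ge\max_{[0,1]}|p|$, or needs to handle the fact that the encoding uses a shifted variable. My plan is to follow the route where a degree-$\mathcal O(d\log(\cdot/\epsilon))$ approximation is needed. One source is a polynomial approximation of a filter or of the exact embedding map $x\mapsto\alpha y$ when the GQET signal is $e^{i\arccos}$-type. Another is an amplitude amplification step whose length depends on these ratios. I will determine which from the formal GQET statement in the Appendix, and in either case the extra factor multiplies $d$.

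\textbf{Error and success probability.} The postselected unnormalized state is $\frac{1}{\sqrt N K}\sum_x p(x)\ket x$, so the success probability is $\|p\|_{2,N}^2/(NK^2)$. This quantity is a Riemann sum approximating $\|p\|_2^2/K^2$. The main obstacle is a uniform-in-$n$ bound on it, and that bound needs a Riemann-sum error estimate. The natural estimate is
\[
\Big|\tfrac1N\sum_{x\in B_n}|p(x)|^2-\int_0^1|p|^2\Big|\le \tfrac1N\int_0^1 \big|(|p|^2)'\big|\le \tfrac{2}{N}\|p\|_2\|p'\|_2 ,
\]
obtained by Cauchy--Schwarz. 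This is where the ratio $\|p'\|_2/\|p\|_2$ enters. For $N\gtrsim \|p'\|_2/\|p\|_2$ the discrete norm is within a constant factor of $\|p\|_2$, so the success probability is $\Theta(\|p\|_2^2/K^2)$, which is $\Theta(1)$ under the ``suitably normalized'' assumption $K=\Theta(\|p\|_2)$. The degree overhead arises if one must absorb approximation error $\epsilon$ relative to $\|p\|_{2,N}$: an $\epsilon'$-uniform polynomial error becomes a two-norm error $\epsilon'\sqrt N K/\|p\|_{2,N}$ after normalization. Choosing $\epsilon'$ accordingly costs only a logarithmic factor in $\frac{\|p'\|_2}{\|p\|_2}/\epsilon$, using the Riemann bound above.

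If the success probability is only $\Theta(1)$ rather than near $1$, a constant number of rounds of amplitude amplification boosts it without changing the stated scalings.
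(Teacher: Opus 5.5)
There is a genuine gap. It sits exactly where the factor $\log(\frac{\|p'\|_2}{\|p\|_2}/\epsilon)$ has to be produced: you list candidate sources and defer the choice, and the route you do spell out fails. The GQET theorem needs $\max_{|z|=1}|\Upsilon(z)|\le 1$, hence $|q|\le 1$ on all of $[-1,1]$, not only on the spectrum $[0,1/\alpha]$. Here $\alpha=(1-2^{-n})/p_n=\prod_{j=1}^n(1+2^{-j})\in[3/2,2.4)$, so $q(y)=p(\alpha y)/K$ forces $K\ge\max_{[-\alpha,\alpha]}|p|$, not merely $K\ge\max_{[0,1]}|p|$. Take $p=T_d$ as an example. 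It meets the paper's normalization: $\max_{[-1,1]}|p|=1$, scaling factor $\beta=1$, and $\|p\|_2^2\ge 1/3$. Yet $|T_d(\alpha)|\ge\frac12(\alpha+\sqrt{\alpha^2-1})^d$, so your exact degree-$d$ route has success probability exponentially small in $d$ rather than $\Theta(1)$. This is the obstruction the paper describes as a success probability $\propto p_n^d$. (The exact route is valid, and even $\epsilon$-free, only if ``suitably normalized'' means normalized on $[-\alpha,\alpha]$, which hides this exponential loss.)

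You mention amplitude amplification as one candidate, but it is the step that carries the argument, and it must be made concrete. Replace $\Psi_n$ by $\Psi_n^\delta$, built from $\mathcal O(\log(1/\delta))$ calls to $\Psi_n$. This gives a SPUE of $(1-\epsilon)L_n$ with $0\le\epsilon\le\delta$, i.e.\ normalization essentially one. GQET then runs with $p$ itself on $[-1,1]$, using $d$ queries of depth $\mathcal O(\log(1/\delta)\log n)$ each. The price is that you implement $p((1-\epsilon)L_n)$ instead of $p(L_n)$. The mean value inequality $|p((1-\epsilon)x)-p(x)|\le\epsilon\max_{[(1-\epsilon)x,x]}|p'|$ bounds the normalized-state error by roughly $2\epsilon\|p'\|_2/\|p\|_2$ as $N\to\infty$. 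Choosing $\delta$ of order $\epsilon\|p\|_2/\|p'\|_2$ then yields the stated depth and size.

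So the ratio $\|p'\|_2/\|p\|_2$ measures the sensitivity of $p$ to the residual normalization error; it is not a Riemann-sum effect. Your Riemann estimate relates $N$ to this ratio but contains no $\epsilon$. Your last step is an unspecified $\epsilon'$-uniform approximation with normalized error about $\epsilon' K/\|p\|_2$, which would give at best $\log(K/(\|p\|_2\epsilon))$.

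Finally, the controlled block-encoding is obtained by controlling only SELECT, $\mathcal C(U_{\mathrm{aff}})=\Psi_n^\dagger\,\mathcal C(S_n)\,\Psi_n$, through the flag qubit. Controlling PREPARE instead does not act as the identity on the control-$0$ branch.
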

The formal statement is provided in Theorem~\ref{thm:maintheorem} in the Appendix.

In addition to the logarithmic-depth quantum state preparation, we design novel quantum circuits for the EXACT-one operator, a key subroutine of our method, which determines whether an input bitstring has Hamming weight one. By leveraging the concept of conditionally clean ancillae~\cite{Khattar2025riseofconditionally}, our EXACT-one construction achieves logarithmic depth using only two ancilla qubits, whereas previous logarithmic-depth implementations required a linear number of ancilla qubits (in the number of input bits).

\begin{The}
For every positive integer $n\ge1$, define the function $f_n:\{0, 1\}^n\to\{0, 1\}$ by:
\begin{equation}
f_n(x)=\left\{\begin{aligned}
&1&\text{if $\sum_{i=1}^nx_i=1$,}\\
&0&\text{else.}
\end{aligned}\right.
\end{equation}
We can construct a circuit $O_n$ such that $O_n\ket{x}\ket0=\ket{x}\ket{f_n(x)}$ with $\mathcal O(\log(n))$ depth, $\mathcal O(n)$ size, using $2$ ancillae.
\end{The}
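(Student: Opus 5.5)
The plan is to reduce EXACT-one to a \emph{single} multi-controlled operation on a register that has been transformed in place, and to borrow the conditionally clean ancilla technique of~\cite{Khattar2025riseofconditionally} for that operation. I rely on one result from that reference: an $m$-controlled Toffoli can be implemented in $\mathcal O(\log m)$ depth and $\mathcal O(m)$ size with only two clean ancillae. This holds because, once the first controls are established, the input qubits themselves serve as conditionally clean workspace. The input register $x$ will be modified reversibly with ancilla-free gates of logarithmic depth (CNOT and Toffoli layers on the data qubits). Then one log-depth multi-controlled gate writes $f_n(x)$ into the output qubit, and the modification is undone. All uncomputations double the depth and size at most, so the bounds $\mathcal O(\log n)$ and $\mathcal O(n)$ are preserved.

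\textbf{Step 1 (pairwise compression).} Group the bits into pairs $(u_i,v_i)$ and apply $\mathrm{CNOT}(v_i\to u_i)$ in parallel, giving $w_i=u_i\oplus v_i$. One checks that $\sum_j x_j=1$ holds if and only if $\sum_i w_i=1$ and no pair equals $(1,1)$. After the CNOT, a $(1,1)$ pair is exactly a pair with $w_i=0$ and $v_i=1$. Iterating on the $w$-register over $\lceil\log_2 n\rceil$ levels yields a binary tree of CNOT layers, so this stage has depth $\mathcal O(\log n)$ and size $\mathcal O(n)$. At the end the data register contains a single root bit $r$, which is the parity, together with the \emph{residual} bits $v$ from every level. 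Then $f_n(x)=1$ if and only if $r=1$ and, at every tree node, the local constraint ``$v=1\Rightarrow w=0$ is forbidden'' holds.

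\textbf{Step 2 (turning the constraints into a single AND).} The multi-controlled Toffoli needs a conjunction of literals. The local constraints are implications $v\Rightarrow w$, so I want each node's bits rewritten so that the constraint becomes a single literal that must be $0$. I will try to do this with an in-place Toffoli at each node, targeting the residual bit and controlled on $v$ and $\neg w$ of the children. The order must be chosen so that, on inputs of weight $\le 1$, every residual bit off the root-to-leaf path of the unique $1$ is forced to $0$. On inputs of weight $\ge 2$, some residual bit must be left at $1$, or the root parity must be left at $0$. Because everything is a permutation of basis states, reversibility is automatic. The work is to verify the invariant ``weight-one $\Leftrightarrow$ root $=1$ and all residual literals $=0$'' level by level, by induction on the tree height. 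Once this holds, $f_n$ is a single $\mathcal O(n)$-controlled Toffoli with mixed polarities, using $X$ conjugations on the zero-controls. It is implemented with the two ancillae as described in the first paragraph, and Steps 1--2 are then reversed.

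\textbf{Main obstacle.} The delicate point is Step 2. The naive map $(u,v)\mapsto(u\oplus v,u\wedge v)$ is not reversible, so ``no two ones'' cannot be compressed without keeping some positional information. That information has to live in the data qubits themselves, since only two ancillae are available. If the in-place Toffoli bookkeeping fails, my fallback is the recursion $f(LR)=(f(L)\wedge Z(R))\vee(Z(L)\wedge f(R))$, with $Z$ the all-zero test. I would evaluate all levels simultaneously, using the input halves as conditionally clean ancillae for one another, so that the depths add as $\mathcal O(\log n)$ overall rather than the $\mathcal O(\log^2 n)$ of a naive recursion.
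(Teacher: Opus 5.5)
\textbf{Step 2 cannot work.} Your main route fails for a counting reason that no choice of in-place Toffoli bookkeeping can get around. Steps 1--2 act only on the $n$ data qubits, so they realise a bijection $T$ of $\{0,1\}^n$. A single multi-controlled Toffoli with a fixed control set $S$ and fixed polarities $c$ then fires exactly on $T^{-1}(\{y : y_S=c\})$, which is a set of $2^{n-|S|}$ inputs. Since $f_n$ accepts exactly $n$ inputs, this is impossible whenever $n$ is not a power of two.

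Your stated invariant (``root $=1$ and all residual literals $=0$'') is worse still: with all $n-1$ residual bits as controls, exactly one input is accepted. Letting the on-path bits be free does not help, because the control set cannot depend on where the $1$ sits. This is the sharp form of the obstacle you identified: the $\log_2 n$ bits of position information must live somewhere, and in general they cannot be packed into a subcube. Separately, in Step 1 the $w$-value of a non-root node is overwritten by the next level's CNOT, so the node constraints are not literals of the final register to begin with.

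\textbf{The fallback is missing the key idea.} It has the right ingredient---all-zero portions of the input serve as conditionally clean ancillae---but not what makes it logarithmic, and ``evaluating all levels simultaneously'' is not substantiated. Conditioned on $R=0$ you gain only $|R|=n/2$ clean qubits. The log-depth tree circuit computing the ternary weight class ($0$, $1$, $\ge 2$) of $L$ uses about $2|L|\approx n$ fresh qubits. So halving does not free enough workspace, and you are pushed back into the recursion you yourself estimate at $\mathcal O(\log^2 n)$.

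The paper avoids recursion by splitting $x$ into \emph{three} parts of size $\approx n/3$. For each part in turn:
\begin{enumerate}
\item A zero-controlled MCX over the other two parts records in the second ancilla whether those parts are all-zero. It uses Khattar--Gidney in $\mathcal O(\log n)$ depth, with one zeroed ancilla and the output qubit as workspace.
\item Conditioned on this, the $\approx 2n/3$ zero qubits are enough workspace to run the tree circuit $G_k$ on the remaining part; its first output bit equals $f_k$.
\item A Toffoli on these two bits targets the output qubit, and everything else is uncomputed.
\end{enumerate}
For weight-one $x$, exactly one of the three rounds fires. For every other $x$, all three contributions vanish, since a round whose condition holds has all the ones in its own part. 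Hence the output is $f_n(x)$. Each round is a constant number of log-depth blocks, giving depth $\mathcal O(\log n)$ and size $\mathcal O(n)$ directly.
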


\section{Methods}
In this section, we sketch the construction and its resource scaling; a schematic view of the associated quantum circuit is presented in Figure \ref{fig:overall_circuit} and additional technical details are given in the Appendix.

\paragraph{\textbf{Block-encoding of the linear function.}} We first build a block-encoding of the linear diagonal operator acting on $n$ qubits
\begin{equation}
L_n=\frac1{1-2^{-n}}\sum_{x\in B_n}x\ket x\bra x ,
\end{equation}
where the factor $(1-2^{-n})^{-1}$ implies that the largest eigenvalue  of $L_n$ is one.
A key feature of the operator $L_n$ lies in its decomposition in the Pauli basis, which contains only $n$ terms, as given by:
\begin{equation}
1-2L_n \;=\; \frac{1}{1-2^{-n}}\sum_{k=1}^n \frac{Z_k}{2^k},
\label{eq:pauli_decomp_method}
\end{equation}
where $Z_k$ is the $Z$-Pauli operator $Z=\begin{pmatrix} 1&0 \\0&-1\end{pmatrix}$ acting on the $k-$th qubit. This decomposition reduces the problem to implementing a dyadically-weighted sum of single-qubit \(Z_k\) operators.

To do so, we consider a modified approach of the usual linear combination of unitaries\footnote{The linear combination of unitaries $\sum_{k=1}^K\lambda_kU_k$ is usually made of a PREPARE routine to load the $\lambda_k$ coefficients on $\lceil\log_2(K)\rceil$ ancilla qubits, a SELECT routine $\sum_{k=1}^K U_k\otimes \ket{k}\bra{k}$ that correspond to a block-diagonal unitary, and a last $\text{PREPARE}^\dagger$ operation to reset the ancilla in their $\ket 0$ state.}, where the \texttt{PREPARE} routine loads the dyadic weights $1/2^k$ into a one-hot control register in logarithmic depth $\mathcal O(\log(n))$. The \texttt{PREPARE} is composed of the two following steps.

\paragraph{(i) Constant-depth initialization \(\ket{\alpha_n}\).}
First, prepare the state
\begin{equation}
\ket{\alpha_n}=\bigotimes_{k=1}^n\left(\sqrt{1-\frac1{2^k+1}}\ket0+\sqrt{\frac1{2^k+1}}\ket1\right),
\label{eq:alpha_state_method}
\end{equation}
using a single parallel layer of single-qubit \(R_Y\) rotations (depth \(O(1)\), size \(O(n)\)).

\paragraph{(ii) Log-depth EXACT-one filtering and flagging.}
Let remark that \(\ket{1/2^k}\equiv\ket{0\cdots 01_k0\cdots 0}\) corresponds to the computational basis states of Hamming weight one.
The orthogonal projection of \(\ket{\alpha_n}\) onto the span of \(\{\ket{1/2^k}\}_{k=1}^n\) is proportional to
\begin{equation}
\ket{\beta_n}=\frac1{\sqrt{1-2^{-n}}}\sum_{k=1}^{n}2^{-k/2}\ket{1/2^k}.
\label{eq:beta_state_method}
\end{equation}
The projection is operated using a \textit{EXACT-one} oracle that flips a flag qubit on basis states of Hamming weight one, which gives from $\ket{\alpha_n}$ the following state
\begin{equation}
\ket{\psi_n}=\sqrt{p_n}\ket{\beta_n}\ket{1}+\sqrt{1-p_n}\ket{\gamma_n}\ket{0},
\label{eq:psi_state_method}
\end{equation}
where \(\ket{\gamma_n}\perp \ket{\beta_n}\) and the overlap \(p_n\) is bounded below by $1/6$, a constant independent of \(n\) (see the Appendix).
In this construction, the one-hot flagging is implemented in depth \(O(\log n)\) and size \(O(n)\) using only two ancillas.
We denote by \(\Psi_n\) the full \texttt{PREPARE} unitary satisfying 
\begin{equation}
    \Psi_n\ket{0}=\ket{\psi_n}.
\end{equation}

Then, we implement a unary \texttt{SELECT} operator \(S_n\) that applies \(Z_k\) to the target register whenever the control register is in \(\ket{1/2^k}\), and only when the one-hot flag is set.
This realizes the dyadic-weighted Pauli sum in Eq.~\eqref{eq:pauli_decomp_method}.
Since the flag qubit cannot control multiple gates at the same time, we copy the flag qubit into $n$ zeroed ancilla qubit, with $n$ CNOT gates aranged in $\mathcal O(\log(n))$ layers (see Lemma A.1 of \cite{zylberman2024fast} for more details). Therefore, the \texttt{SELECT} construction has depth \(\mathcal O(\log n)\) and size \( \mathcal O(n)\).

A block-encoding of $1-2p_nL_n$ is obtained by sandwiching \texttt{SELECT} between \texttt{PREPARE} and its inverse, i.e.,
\begin{equation}
U_{\mathrm{aff}} \;=\; \Psi_n^\dagger\, S_n\, \Psi_n,
\end{equation}
and then projecting onto the all-zero ancilla state. This yields a block-encoding of \(1-2p_nL_n\), where $p_n=\Omega(1)$. The quantum circuit associated with $U_{\mathrm{aff}}$ is represented in Figure \ref{fig:overall_circuit}. Then, by introducing an additional ancilla qubit $\ket{a=0}$ to control $U_{\mathrm{aff}}$, one can construct a block-encoding of $L_n$:
\begin{equation}
\begin{split}
U_{\mathrm{linear}}
&= (H\otimes 1)\,\mathcal C(U_{\mathrm{aff}})\,(H\otimes 1),
\end{split}
\end{equation}
with $\bra{a=0} U_{\mathrm{linear}} \ket{a=0}=p_n L_n$. Here, $(H\otimes 1)$ denotes a Hadamard gate acting on the newly introduced ancilla qubit $\ket{a}$ and the identity acting on the remaining qubits on which $U_{\mathrm{aff}}$ operates.

Importantly, controlling $U_{\mathrm{aff}}$ does not compromise the logarithmic-depth construction. First, only the \texttt{SELECT} operator needs to be controlled,
\begin{equation}
\mathcal C(U_{\mathrm{aff}})=\Psi_n^\dagger\,\mathcal C(S_n)\,\Psi_n .
\end{equation}
Second, $\mathcal C(S_n)$ can be implemented straightforwardly by leveraging the flag qubit, which already controls the $Z$-Pauli gate in the unary \texttt{SELECT} operator. In particular, applying a CNOT gate
\begin{equation}
\mathcal C(X)=1_{\text{Flag}}\otimes\ket{a=0}\bra{a=0}+X_{\text{Flag}}\otimes\ket{a=1}\bra{a=1},
\end{equation}
followed by a NOT gate $X_{\text{Flag}}$, which allows the flag qubit to correctly mark the desired state, thereby implementing the controlled operation.

These results are summarized in the formal theorem presented in the Appendix (see Theorem~\ref{the:formal_xketxbrax}).

\begin{figure*}[t]
    \centering
    \includegraphics[
        width=1.1\textwidth,
        trim=2cm 16.8cm 0 1.5cm,
        clip
    ]{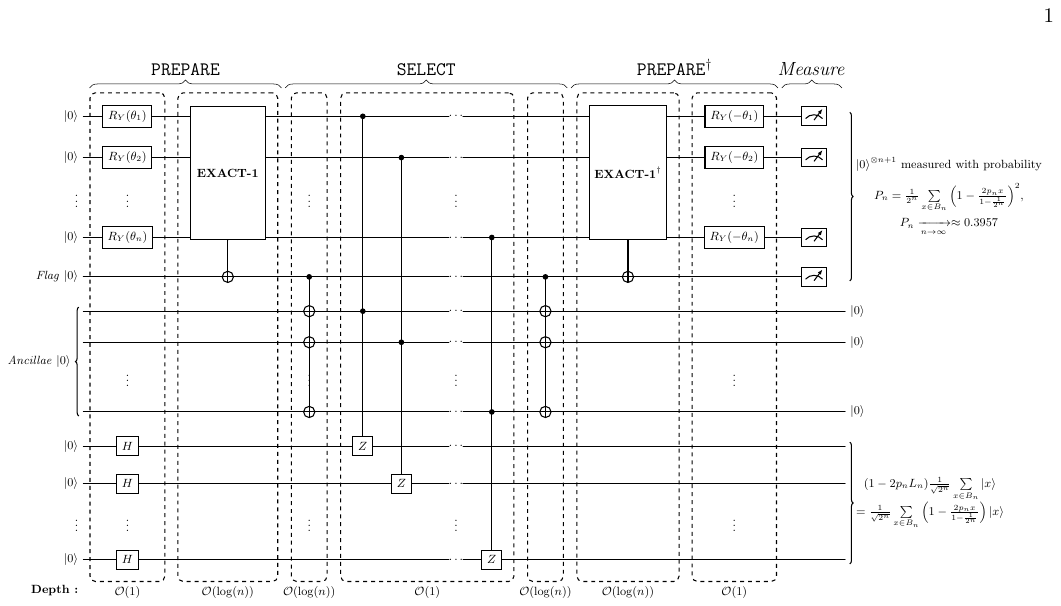}
    \caption{High-level structure of the affine block-encoding $U_{\text{aff}}$ construction of $1-2p_nL_n$ where $\theta_k = 2 \arcsin{\left(\sqrt{\frac{1}{1+2^k}}\right)}$. The circuit consists of \texttt{PREPARE} (\(\Psi_n\)), a unary \texttt{SELECT} (\(S_n\)) controlled on the one-hot subspace, and \(\texttt{PREPARE}^{\dagger}\), followed by projection (postselection) onto the designated ancilla outcome. Notice that the projection is not necessary when performing the amplitude amplification with $\mathcal O (1)$ steps and the GQET protocol.} 
    \label{fig:overall_circuit}
\end{figure*}

\paragraph{\textbf{Block-encoding polynomial functions.}}

\begin{figure*}[t]
\centering
\resizebox{0.8\textwidth}{!}{
\begin{quantikz}
   & \gate[1]{R(\theta_0, \phi_0^{'}, \lambda)} & \octrl{1} & \octrl{1} & \gate[1]{R(\theta_1, \phi_1^{'}, 0)} & \octrl{1} & \octrl{1} & \ldots & \gate[1]{R(\theta_{d-1}, \phi_{d-1}^{'}, 0)} & \octrl{1} & \octrl{1} & \gate[1]{R(\theta_{d}, \phi_{d}^{'}, 0)} & \qw &  \\
& \qwbundle{Ancillae} & \gate[2]{U} & \octrl{-1} & \qw & \gate[2]{U} & \octrl{-1} & \ldots & \qw & \gate[2]{U} & \octrl{-1} & \qw & \qw & \\
 &  & \qw& \qw& \qw & \qw & \qw & \ldots & \qw & \qw& \qw & \qw & \qw & 
\end{quantikz}
}
\caption{Quantum circuit for the generalized quantum eigenvalue transform (GQET) of a matrix $A$ block-encoded in a unitary $U=\begin{pmatrix}
    A/\alpha & * \\ * & *
\end{pmatrix}=(A/\alpha)\otimes \ket{0}\bra{0}+ \hdots$ by a polynomial $p(x)=\sum_{k=0}^da_kx^k=\sum_{k=0}^db_kT_k(x)$, where $T_k$ is the $k$-th Chebyshev polynomial of the first kind. The phase $(\{\theta_i\},\{\phi_i=\phi'_i-\pi\},\lambda)$ are those computed for the generalized quantum signal processing (GQSP) associated with the polynomial $P(z)=\sum_{k=0}^db_kz^k$.}
\label{fig:quantum_circuit}
\end{figure*}
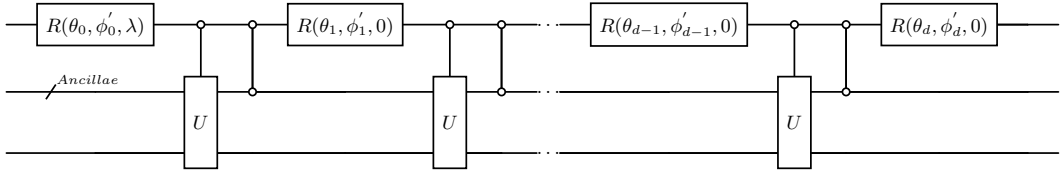

Thanks to the construction of a block-encoding of $L_n$, one can derive a block-encoding of a diagonal operator, whose eigenvalues are directly proportional to $p(x)$ by using the GQET protocol represented in Figure~\ref{fig:quantum_circuit}.

The GQET protocol \cite{sunderhauf2024block} uses an additional ancilla qubit on which different rotations alternating with controlled version of $U_{\text{linear}}$ produce a target polynomial $p$ of $L_n$.  However, performing directly the GQET protocol on $U_{\text{linear}}$ would lead to a prohibitively small probability of success $\propto(p_n)^d$, where $d$ is a the polynomial degree. To overcome this issue, the first step is to perform an amplitude amplification to obtain a block-encoding of
\begin{equation}
    (1-\epsilon)L_n,
\end{equation} 
with normalization one.
The amplitude amplification can be performed using standard techniques~\cite{Brassard_2002} or by using a GQET protocol where the associated polynomial approximates a Heaviside function up to an error $\epsilon$ with a polynomial degree $d=\mathcal O (\log(1/\epsilon))$ (more details are presented in the Appendix and in~\cite{Martyn_2021, eremenko2006uniformapproximationsgnxpolynomials}).

Then, the GQET protocol associated with the $p$ polynomial can be implemented to produce $p((1-\epsilon)L_n)$ which, by Rolle's theorem, gives an $\epsilon$-approximation of $p(L_n)$:
\begin{equation}
\|p((1-\epsilon)L_n)-p(L_n)\|_{\infty}\leq \epsilon\|p'\|_\infty
\end{equation}

The operator resulting from the GQET protocol is the block-encoding of $\sum_{x\in B_n}p((1-\epsilon)x)\ket{x}\bra{x}$, with a normalization factor $\beta\|p\|_{\infty} $, where $\|p\|_{\infty} =\max_{x\in [0,1]}|p(x)|$ and $\beta$ is the scaling factor, defined in Equation \ref{def:scalingfac},  ensuring the GQET protocol to be well defined. Then, one can apply this operator on the full superposition state $\ket{s}=(1/2^{n/2})\sum_{x\in B_n}\ket{x}$, giving, after success, an $\epsilon$-approximation of the target state $\ket{p}=\frac{1}{\|p\|_{2,N}}\sum_{x\in B_n}p(x)\ket x$. The probability of success being asymptotically independent of $n$ and $\epsilon$\footnote{The probability of success can scale with the polynomial degree $d$ according to the best known bound on the scaling factor $\beta=O(\log(d))$\cite{gqsvt}.}:
\begin{equation}
\begin{split}
     \mathbb P =&\left\|  \frac1{\beta\|p\|_{\infty}2^{n/2}}\sum_{x\in B_n}p((1-\epsilon)x)\ket x\right\|^2 \\&\xrightarrow{n\rightarrow +\infty, \textbf{ }\epsilon\rightarrow 0}\frac{\|p\|_2^2}{\beta^2\|p\|_{\infty}^2}=\Theta(1),
\end{split}
\end{equation}
with $\|p\|_2=\int_0^1|p(x)|^2dx$.

Notably, the GQET protocol preserves the logarithmic depth scaling since the different controlled-$U_{\text{linear}}$ can be implemented in $\mathcal{O}(1)$ gates by leveraging the flag qubit, as explained in the previous paragraph when controlling $U_{\text{aff}}$.
More details about these results and the corresponding quantum circuits are presented in the Appendix along with the main Theorem \ref{thm:maintheorem}.

\section{Implementation}

\begin{figure*}[t]
    \centering
    \begin{subfigure}[t]{0.48\textwidth}
        \vspace{0pt}
        \centering
        \includegraphics[width=\linewidth]{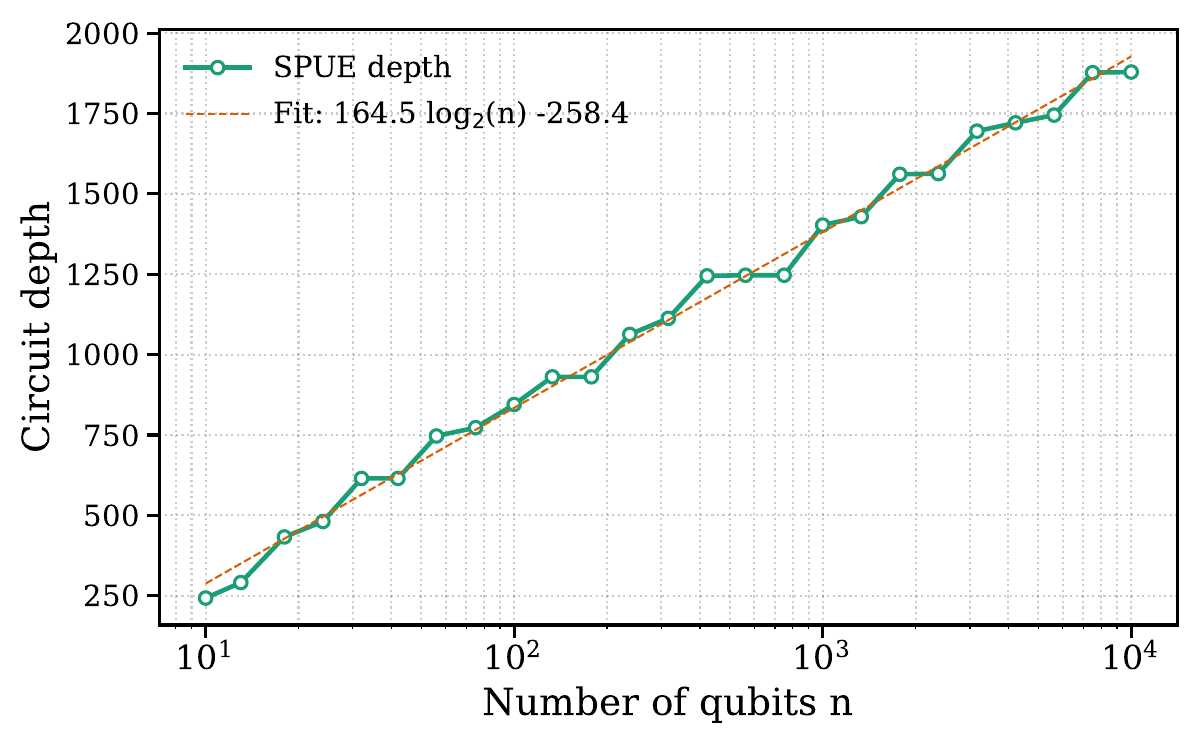}
        \caption{Circuit depth of the SPUE as a function of the number of qubits \(n\). The dashed line shows a fit of the form \(\alpha \log_2(n)+\beta\).}
        \label{fig:spue_depth}
    \end{subfigure}
    \hfill
    \begin{subfigure}[t]{.48\textwidth}
        \vspace{0pt}
        \centering
        \includegraphics[width=\linewidth]{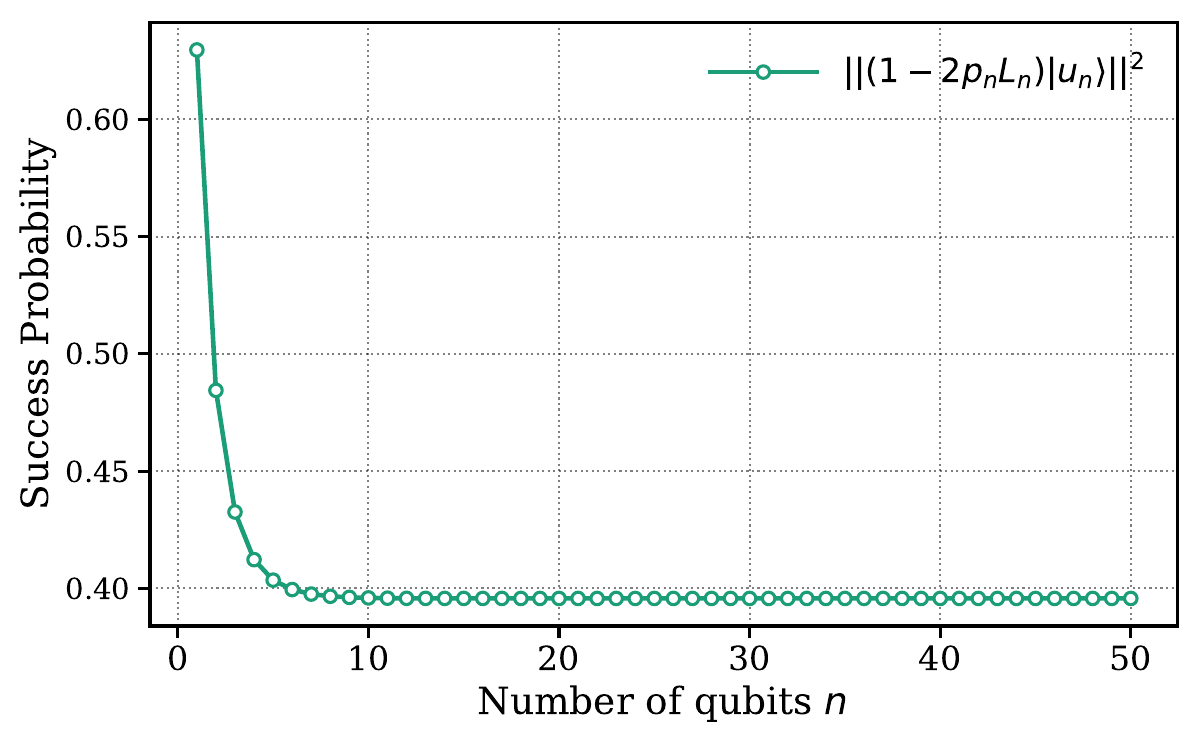}
        \caption{Convergence of the success probability \(P_n=\frac1{2^n}\sum_{x\in B_n}\left(1-\frac{2p_n x}{1-1/2^n}\right)^2\) as a function of \(n\), where $\ket{u_n} = \frac1{2^{n/2}}\sum_{x\in B_n} \ket x $.}
        \label{fig:pn_conv}
    \end{subfigure}
    \caption{Scaling behavior of the SPUE construction~\cite{codegit}.
    (\subref{fig:spue_depth}) Circuit depth grows logarithmically with the number of qubits.
    (\subref{fig:pn_conv}) The success probability \(P_n = \frac1{2^n} \sum\limits_{x \in B_n} \left(1-\frac{2p_nx}{1-\frac1{2^n}}\right)^2\) rapidly converges to a constant value \( P  \approx 0.3957\) .}
    \label{fig:spue_scaling}
\end{figure*}

We numerically validated the asymptotic scaling of the proposed block-encoding construction using circuit-level simulations. In particular, we benchmarked the circuit depth as a function of the number of qubits \(n\) and estimated the corresponding postselection success probability \(p_n\). As shown in Fig.~\ref{fig:spue_scaling}, the depth follows a logarithmic trend consistent with a fit of the form \(\alpha \log_2(n)+\beta\), while the success probability rapidly converges to a constant value as \(n\) increases. 

We additionally tested the construction on Quantinuum's H2 trapped-ion quantum processor by preparing a five-qubit state whose amplitudes follow a degree-one polynomial profile, and measuring the output distribution in the computational basis. Figure~\ref{fig:experimental_state_preparation} reports the resulting amplitudes (estimated as the square roots of the measured frequencies after postselection). The compiled implementation used \(14\) qubits in total, comprising \(274\) \texttt{PhasedX} gates, \(226\) \texttt{ZZPhase} gates, and \(11\) \(R_Z\) gates. We observe close agreement between the measured and target distributions, consistent with the reported single- and two-qubit gate fidelities of the H2 device. 

\begin{figure}[H]
    \centering
    \includegraphics[width=\linewidth]{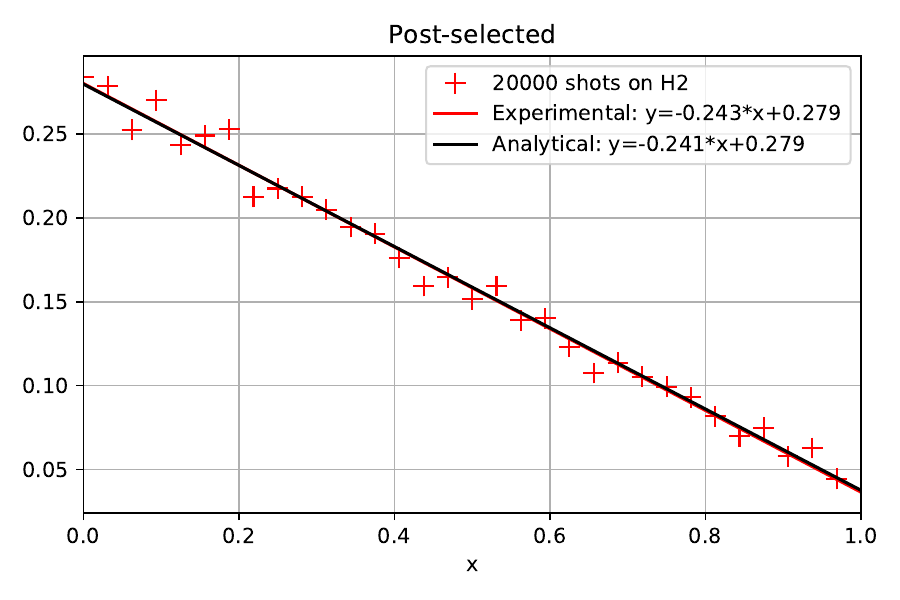}
    \caption{Hardware demonstration on Quantinuum H2 of a degree-one polynomial-amplitude state preparation. The plotted values correspond to the square roots of the postselected measurement frequencies of each computational-basis outcome.}
    \label{fig:experimental_state_preparation}
\end{figure}

\section{Discussion and conclusion}
This study was motivated by the observation that, despite the ubiquity of structured amplitude/phase profiles, there appears to be no \emph{sublinear-depth} method for preparing quantum states as structured as \(\sum_x x\,\ket{x}\) or, more generally, any affine function of the computational basis index.

Existing constructions fall into two broad categories, both seemingly in linear depth.
\begin{enumerate}
  \item[(i)] \emph{Coherent arithmetic:} compute the affine function of the index using coherent arithmetic with \(O(\log n)\) ancillae, then propagate the result into phases. This typically entails an \(\Omega(n)\) depth overhead.
  \item[(ii)] \emph{Non-arithmetic methods:} other state-preparation strategies either lead to linear-depth circuits in known realizations, or implement only polynomials of \(\sin(x)\) (rather than of \(x\)) in logarithmic depth~\cite{mcardle2022quantum}.
\end{enumerate}

\paragraph{\textbf{Logarithmic-depth polynomial state preparation.}} We introduce a construction that prepares the target affine state with \emph{linear} ancilla count and overall circuit size, while achieving \emph{logarithmic depth}, the first log-depth method for affine profiles to the best of our knowledge.
Moreover, the extension from affine functions to general polynomials is immediate via GQET, at an additional overhead scaling with the polynomial degree and the logarithm of the inverse of the target error.

\paragraph{\textbf{A compact EXACT-one oracle.}}\leavevmode\\
A key ingredient is an explicit implementation of the EXACT-one, also known as hot-one, oracle, which flags computational basis states of Hamming weight exactly one.
Our construction uses only \emph{two ancillae} while retaining logarithmic depth, significantly improving over prior art in the same depth regime that required \emph{linear} ancilla overhead.

\paragraph{\textbf{Practical considerations and assumptions.}}\leavevmode\\
(i) \emph{Gate set:}
The first operation in \texttt{PREPARE} assumes access to \(R_Y\) rotations with arbitrary angles, including exponentially small angles.
When restricted to a finite, fault-tolerant gate set, approximating such rotations may introduce non-negligible synthesis overhead and can effectively restore linear depth in practice.
That said, \(R_Z\) rotations are often implemented \emph{virtually} on hardware (e.g., via frame updates~\cite{Pino_2021, helios}), which can substantially mitigate this concern and may yield experimentally best-in-class implementations.
\newline (ii) \emph{Parallelism and connectivity:} the effort to compress quantum circuits and minimize depth relies on the assumptions of ideal parallel execution of quantum operations and full connectivity of the qubits. In fault-tolerant settings, however, constraints arising from gate synthesis (e.g., T-gate compilation), scheduling overhead, and realistic architecture may limit the effective parallelism, as discussed in \cite{claudon2024polylogarithmic, kuklinski2025simpler}.

Overall, this novel logarithmics-depth quantum state preparation provides an efficient framework for a wide range of applications in physics, engineering, chemistry, finance, quantum machine learning and related disciplines. In many such contexts, the classical data can be represented to high accuracy by low-degree polynomials, or more generally by functions that admit efficient polynomial approximations. By leveraging this structure, our approach enables scalable and resource-efficient state preparation, thereby expanding the range of practically accessible quantum algorithms. Future work should address optimal-depth state preparation under specific connectivity, and native gate set and an extension to multivariate polynomials.

\section{Acknowledgments}
This work has received funding from the European Research Council (ERC) under the European Union's Horizon 2020 research and innovation program (grant agreement No 810367), project EMC2 (JPP). Computations on the H2 machines have been made possible through the Quantinuum Startup Partner Program.

\bibliography{bib}

@article{haner2018optimizing,
  title={Optimizing quantum circuits for arithmetic},
  author={H{\"a}ner, Thomas and Roetteler, Martin and Svore, Krysta M},
  journal={arXiv preprint arXiv:1805.12445},
  year={2018}
}

@article{sanders2019black,
  title={Black-box quantum state preparation without arithmetic},
  author={Sanders, Yuval R. and Low, Guang Hao and Scherer, Artur and Berry, Dominic W.},
  journal={Physical Review Letters},
  volume={122},
  pages={020502},
  year={2019},
  doi={10.1103/PhysRevLett.122.020502}
}

@article{wang2021fast,
  title={Fast black-box quantum state preparation based on linear combination of unitaries},
  author={Wang, Shengbin and Wang, Zhimin and Cui, Guolong and Shi, Shangshang and Shang, Ruimin and Fan, Lixin and Li, Wendong and Wei, Zhiqiang and Gu, Yongjian},
  journal={Quantum Information Processing},
  volume={20},
  number={8},
  pages={270},
  year={2021},
  publisher={Springer}
}

@article{rattew2022learning,
  title={Learning to prepare quantum states},
  author={Rattew, Adam G. and Koczor, Bálint},
  journal={arXiv:2205.00519},
  year={2022}
}

@article{wang2022quantum,
  title={Quantum state preparation with optimal circuit depth},
  author={Wang, Sheng and Wang, Zhiqiang and He, Rui and Cui, Guoping and Shi, Shijie and Shang, Rong and Li, Jing and Li, Yi and Li, Wei and Wei, Zhaohui and others},
  journal={New Journal of Physics},
  volume={24},
  pages={103004},
  year={2022},
  doi={10.1088/1367-2630/ac95fc}
}

@article{logdepth_hamming,
  title = {Logarithmic-depth quantum circuits for Hamming weight projections},
  author = {Rethinasamy, Soorya and LaBorde, Margarite L. and Wilde, Mark M.},
  journal = {Phys. Rev. A},
  volume = {110},
  issue = {5},
  pages = {052401},
  numpages = {14},
  year = {2024},
  month = {Nov},
  publisher = {American Physical Society},
  doi = {10.1103/PhysRevA.110.052401},
  url = {https://link.aps.org/doi/10.1103/PhysRevA.110.052401}
}

@misc{zi2024shallowquantumcircuitimplementation,
      title={Shallow Quantum Circuit Implementation of Symmetric Functions with Limited Ancillary Qubits}, 
      author={Wei Zi and Junhong Nie and Xiaoming Sun},
      year={2024},
      eprint={2404.06052},
      archivePrefix={arXiv},
      primaryClass={quant-ph},
      url={https://arxiv.org/abs/2404.06052}, 
}

@article{Piroli_2024,
   title={Approximating Many-Body Quantum States with Quantum Circuits and Measurements},
   volume={133},
   ISSN={1079-7114},
   url={http://dx.doi.org/10.1103/PhysRevLett.133.230401},
   DOI={10.1103/physrevlett.133.230401},
   number={23},
   journal={Physical Review Letters},
   publisher={American Physical Society (APS)},
   author={Piroli, Lorenzo and Styliaris, Georgios and Cirac, J. Ignacio},
   year={2024},
   month=dec }

@misc{herbert2021classicalsamplingcircuitquantum,
      title={Every Classical Sampling Circuit is a Quantum Sampling Circuit}, 
      author={Steven Herbert},
      year={2021},
      eprint={2109.04842},
      archivePrefix={arXiv},
      primaryClass={quant-ph},
      url={https://arxiv.org/abs/2109.04842}, 
}

@article{Khattar2025riseofconditionally,
  doi = {10.22331/q-2025-05-21-1752},
  url = {https://doi.org/10.22331/q-2025-05-21-1752},
  title = {Rise of conditionally clean ancillae for efficient quantum circuit constructions},
  author = {Khattar, Tanuj and Gidney, Craig},
  journal = {{Quantum}},
  issn = {2521-327X},
  publisher = {{Verein zur F{\"{o}}rderung des Open Access Publizierens in den Quantenwissenschaften}},
  volume = {9},
  pages = {1752},
  month = may,
  year = {2025}
}

@article{gidney2019windowed,
  title={Windowed quantum arithmetic},
  author={Gidney, Craig},
  journal={arXiv preprint arXiv:1905.07682},
  year={2019}
}

@article{bhaskar2016quantum,
  title={Quantum algorithms and circuits for scientific computing},
  author={Bhaskar, M. K. and Hadfield, Stuart and Papageorgiou, Anargyros and Petras, Ivo},
  journal={Quantum Information and Computation},
  volume={16},
  number={3-4},
  pages={197--234},
  year={2016}
}

@article{krishnakumar2019quantum,
  title={Efficient quantum circuit synthesis for arithmetic},
  author={Krishnakumar, Ranjith and Soeken, Mathias and Roetteler, Martin},
  journal={arXiv:1905.01460},
  year={2019}
}

@misc{lin2026quantumalgorithms,
  author       = {Lin, Lin},
  title        = {Lecture Notes on Quantum Algorithms for Scientific Computation},
  year         = {2026},
  note         = {Live notes, updated February 2, 2026},
  howpublished = {\url{https://math.berkeley.edu/~linlin/qasc/live_notes_0202.pdf}},
  institution  = {Department of Mathematics, University of California, Berkeley}
}

@article{sunderhauf2024block,
  title={Block-encoding structured matrices for data input in quantum computing},
  author={S{\"u}nderhauf, Christoph and Campbell, Earl and Camps, Joan},
  journal={Quantum},
  volume={8},
  pages={1226},
  year={2024},
  publisher={Verein zur F{\"o}rderung des Open Access Publizierens in den Quantenwissenschaften}
}

@misc{helios,
      title={Helios: A 98-qubit trapped-ion quantum computer}, 
      author={Anthony Ransford and M. S. Allman and Jake Arkinstall and J. P. Campora III and Samuel F. Cooper and Robert D. Delaney and Joan M. Dreiling and Brian Estey and Caroline Figgatt and Alex Hall and Ali A. Husain and Akhil Isanaka and Colin J. Kennedy and Nikhil Kotibhaskar and Ivaylo S. Madjarov and Karl Mayer and Alistair R. Milne and Annie J. Park and Adam P. Reed and Riley Ancona and Molly P. Andersen and Pablo Andres-Martinez and Will Angenent and Liz Argueta and Benjamin Arkin and Leonardo Ascarrunz and William Baker and Corey Barnes and John Bartolotta and Jordan Berg and Ryan Besand and Bryce Bjork and Matt Blain and Paul Blanchard and Robin Blume-Kohout and Matt Bohn and Agustin Borgna and Daniel Y. Botamanenko and Robert Boutelle and Natalie Brown and Grant T. Buckingham and Nathaniel Q. Burdick and William Cody Burton and Varis Carey and Christopher J. Carron and Joe Chambers and John Children and Victor E. Colussi and Steven Crepinsek and Andrew Cureton and Joe Davies and Daniel Davis and Matthew DeCross and David Deen and Conor Delaney and Davide DelVento and B. J. DeSalvo and Jason Dominy and Ross Duncan and Vanya Eccles and Alec Edgington and Neal Erickson and Stephen Erickson and Christopher T. Ertsgaard and Bruce Evans and Tyler Evans and Maya I. Fabrikant and Andrew Fischer and Cameron Foltz and Michael Foss-Feig and David Francois and Brad Freyberg and Charles Gao and Robert Garay and Jane Garvin and David M. Gaudiosi and Christopher N. Gilbreth and Josh Giles and Erin Glynn and Jeff Graves and Azure Hansen and David Hayes and Lukas Heidemann and Bob Higashi and Tyler Hilbun and Jordan Hines and Ariana Hlavaty and Kyle Hoffman and Ian M. Hoffman and Craig Holliman and Isobel Hooper and Bob Horning and James Hostetter and Daniel Hothem and Jack Houlton and Jared Hout and Ross Hutson and Ryan T. Jacobs and Trent Jacobs and Melf Johannsen and Jacob Johansen and Loren Jones and Sydney Julian and Ryan Jung and Aidan Keay and Todd Klein and Mark Koch and Ryo Kondo and Chang Kong and Asa Kosto and Alan Lawrence and David Liefer and Michelle Lollie and Dominic Lucchetti and Nathan K. Lysne and Christian Lytle and Callum MacPherson and Andrew Malm and Spencer Mather and Brian Mathewson and Daniel Maxwell and Lauren McCaffrey and Hannah McDougall and Robin Mendoza and Michael Mills and Richard Morrison and Louis Narmour and Nhung Nguyen and Lora Nugent and Scott Olson and Daniel Ouellette and Jeremy Parks and Zach Peters and Jessie Petricka and Juan M. Pino and Frank Polito and Matthias Preidl and Gabriel Price and Timothy Proctor and McKinley Pugh and Noah Ratcliff and Daisy Raymondson and Peter Rhodes and Conrad Roman and Craig Roy and Ciaran Ryan-Anderson and Fernando Betanzo Sanchez and George Sangiolo and Tatiana Sawadski and Andrew Schaffer and Peter Schow and Jon Sedlacek and Henry Semenenko and Peter Shevchuk and Susan Shore and Peter Siegfried and Kartik Singhal and Seyon Sivarajah and Thomas Skripka and Lucas Sletten and Ben Spaun and R. Tucker Sprenkle and Paul Stoufer and Mariel Tader and Stephen F. Taylor and Travis H. Thompson and Raanan Tobey and Anh Tran and Tam Tran and Grahame Vittorini and Curtis Volin and Jim Walker and Sam White and Douglas Wilson and Quinn Wolf and Chester Wringe and Kevin Young and Jian Zheng and Kristen Zuraski and Charles H. Baldwin and Alex Chernoguzov and John P. Gaebler and Steven J. Sanders and Brian Neyenhuis and Russell Stutz and Justin G. Bohnet},
      year={2025},
      eprint={2511.05465},
      archivePrefix={arXiv},
      primaryClass={quant-ph},
      url={https://arxiv.org/abs/2511.05465}, 
}

@misc{Brassard_2002,
   title={Quantum amplitude amplification and estimation},
   ISBN={9780821878958},
   ISSN={0271-4132},
   url={http://dx.doi.org/10.1090/conm/305/05215},
   DOI={10.1090/conm/305/05215},
   journal={Quantum Computation and Information},
   publisher={American Mathematical Society},
   author={Brassard, Gilles and Høyer, Peter and Mosca, Michele and Tapp, Alain},
   year={2002},
   pages={53–74} }

@misc{gqsvt,
      title={Generalized Quantum Singular Value Transformation}, 
      author={Christoph Sünderhauf},
      year={2023},
      eprint={2312.00723},
      archivePrefix={arXiv},
      primaryClass={quant-ph},
      url={https://arxiv.org/abs/2312.00723}, 
}

@article{gqsp,
  title = {Generalized Quantum Signal Processing},
  author = {Motlagh, Danial and Wiebe, Nathan},
  journal = {PRX Quantum},
  volume = {5},
  issue = {2},
  pages = {020368},
  numpages = {16},
  year = {2024},
  month = {Jun},
  publisher = {American Physical Society},
  doi = {10.1103/PRXQuantum.5.020368},
  url = {https://link.aps.org/doi/10.1103/PRXQuantum.5.020368}
}

@article{vedral1996quantum,
  title={Quantum networks for elementary arithmetic operations},
  author={Vedral, Vlatko and Barenco, Adriano and Ekert, Artur},
  journal={Physical Review A},
  volume={54},
  number={1},
  pages={147},
  year={1996},
  publisher={APS}
}

@inproceedings{quantum_dac,
  author={Gleinig, Niels and Hoefler, Torsten},
  title={An Efficient Algorithm for Sparse Quantum State Preparation},
  booktitle={2021 58th ACM/IEEE Design Automation Conference (DAC)},
  pages={433-438},
  year={2021}
}

@article{de_Veras_2022,
  title={Double sparse quantum state preparation},
  author={Tiago M. L. de Veras and Leon D. da Silva and Adenilton J. da Silva},
  journal={Quantum Information Processing},
  volume={21},
  pages={204},
  year={2022}
}

@article{grover2002creating,
  title={Creating superpositions that correspond to efficiently integrable probability distributions},
  author={Grover, Lov and Rudolph, Terry},
  journal={arXiv:quant-ph/0208112},
  year={2002}
}

@article{zhao2021smooth,
  title={Smooth input preparation for quantum and quantum-inspired machine learning},
  author={Zhao, Zhikuan and Fitzsimons, Jack K and Rebentrost, Patrick and Dunjko, Vedran and Fitzsimons, Joseph F},
  journal={Quantum Machine Intelligence},
  volume={3},
  pages={14},
  year={2021}
}

@article{moosa2024linear,
  title={Linear-depth quantum circuits for loading fourier approximations of arbitrary functions},
  author={Moosa, Mudassir and Watts, Thomas W and Chen, Yiyou and Sarma, Abhijat and McMahon, Peter L},
  journal={Quantum Science and Technology},
  volume={9},
  number={1},
  pages={015002},
  year={2024},
  publisher={IOP Publishing}
}

@article{rosenkranz2025quantum,
  title={Quantum state preparation for multivariate functions},
  author={Rosenkranz, Matthias and Brunner, Eric and Marin-Sanchez, Gabriel and Fitzpatrick, Nathan and Dilkes, Silas and Tang, Yao and Kikuchi, Yuta and Benedetti, Marcello},
  journal={Quantum},
  volume={9},
  pages={1703},
  year={2025},
  publisher={Verein zur F{\"o}rderung des Open Access Publizierens in den Quantenwissenschaften}
}

@article{Pino_2021,
   title={Demonstration of the trapped-ion quantum CCD computer architecture},
   volume={592},
   ISSN={1476-4687},
   url={http://dx.doi.org/10.1038/s41586-021-03318-4},
   DOI={10.1038/s41586-021-03318-4},
   number={7853},
   journal={Nature},
   publisher={Springer Science and Business Media LLC},
   author={Pino, J. M. and Dreiling, J. M. and Figgatt, C. and Gaebler, J. P. and Moses, S. A. and Allman, M. S. and Baldwin, C. H. and Foss-Feig, M. and Hayes, D. and Mayer, K. and Ryan-Anderson, C. and Neyenhuis, B.},
   year={2021},
   month=apr, pages={209–213} }

@article{zylberman2024fast,
  title={Fast Laplace transforms on quantum computers},
  author={Zylberman, Julien},
  journal={arXiv preprint arXiv:2412.05173},
  year={2024}
}

@article{zylberman2024efficient,
  title={Efficient quantum state preparation with walsh series},
  author={Zylberman, Julien and Debbasch, Fabrice},
  journal={Physical Review A},
  volume={109},
  number={4},
  pages={042401},
  year={2024},
  publisher={APS}
}

@article{mcardle2022quantum,
  title={Quantum state preparation without coherent arithmetic},
  author={McArdle, Sam and Gily{\'e}n, Andr{\'a}s and Berta, Mario},
  journal={arXiv preprint arXiv:2210.14892},
  year={2022}
}

@article{gonzalez2024efficient,
  title={Efficient quantum amplitude encoding of polynomial functions},
  author={Gonzalez-Conde, Javier and Watts, Thomas W and Rodriguez-Grasa, Pablo and Sanz, Mikel},
  journal={Quantum},
  volume={8},
  pages={1297},
  year={2024},
  publisher={Verein zur F{\"o}rderung des Open Access Publizierens in den Quantenwissenschaften}
}

@inproceedings{gilyen2019quantum,
  title={Quantum singular value transformation and beyond: exponential improvements for quantum matrix arithmetics},
  author={Gily{\'e}n, Andr{\'a}s and Su, Yuan and Low, Guang Hao and Wiebe, Nathan},
  booktitle={Proceedings of the 51st annual ACM SIGACT symposium on theory of computing},
  pages={193--204},
  year={2019}
}

@article{motlagh2024generalized,
  title={Generalized quantum signal processing},
  author={Motlagh, Danial and Wiebe, Nathan},
  journal={PRX Quantum},
  volume={5},
  number={2},
  pages={020368},
  year={2024},
  publisher={APS}
}

@article{o2025quantum,
  title={Quantum state preparation via piecewise QSVT},
  author={O'Brien, Oliver and S{\"u}nderhauf, Christoph},
  journal={Quantum},
  volume={9},
  pages={1786},
  year={2025},
  publisher={Verein zur F{\"o}rderung des Open Access Publizierens in den Quantenwissenschaften}
}

@article{guseynov2026efficient,
  title={Efficient explicit circuit for quantum state preparation of piecewise continuous functions},
  author={Guseynov, Nikita and Liu, Nana},
  journal={Physical Review A},
  volume={113},
  number={1},
  pages={012604},
  year={2026},
  publisher={APS}
}

@article{kuklinski2025simpler,
  title={A simpler Gaussian state-preparation},
  author={Kuklinski, Parker and Rempfer, Benjamin and Obenland, Kevin and Elenewski, Justin},
  journal={arXiv preprint arXiv:2508.03987},
  year={2025}
}

@article{zylberman2025efficient,
  title={Efficient quantum circuits for non-unitary and unitary diagonal operators with space-time-accuracy trade-offs},
  author={Zylberman, Julien and Nzongani, Ugo and Simonetto, Andrea and Debbasch, Fabrice},
  journal={ACM Transactions on Quantum Computing},
  volume={6},
  number={2},
  pages={1--43},
  year={2025},
  publisher={ACM New York, NY}
}

@article{feniou2024sparse,
  title={Sparse quantum state preparation for strongly correlated systems},
  author={Feniou, C{\'e}sar and Adjoua, Olivier and Claudon, Baptiste and Zylberman, Julien and Giner, Emmanuel and Piquemal, Jean-Philip},
  journal={The Journal of Physical Chemistry Letters},
  volume={15},
  number={11},
  pages={3197--3205},
  year={2024},
  publisher={ACS Publications}
}

@article{tubman2018postponing,
  title={Postponing the orthogonality catastrophe: efficient state preparation for electronic structure simulations},
  author={Tubman, Norm M. and others},
  journal={arXiv:1809.05523},
  year={2018}
}

@article{barenco1995elementary,
  title={Elementary gates for quantum computation},
  author={Barenco, Adriano and Bennett, Charles H and Cleve, Richard and DiVincenzo, David P and Margolus, Norman and Shor, Peter and Sleator, Tycho and Smolin, John A and Weinfurter, Harald},
  journal={Physical review A},
  volume={52},
  number={5},
  pages={3457},
  year={1995},
  publisher={APS}
}

@article{vartiainen2004efficient,
  title={Efficient decomposition of quantum gates},
  author={Vartiainen, Juha J and M{\"o}tt{\"o}nen, Mikko and Salomaa, Martti M},
  journal={Physical review letters},
  volume={92},
  number={17},
  pages={177902},
  year={2004},
  publisher={APS}
}

@article{plesch2011quantum,
  title={Quantum-state preparation with universal gate decompositions},
  author={Plesch, Martin and Brukner, {\v{C}}aslav},
  journal={Physical Review A—Atomic, Molecular, and Optical Physics},
  volume={83},
  number={3},
  pages={032302},
  year={2011},
  publisher={APS}
}

@article{biamonte2017quantum,
  title={Quantum machine learning},
  author={Biamonte, Jacob and Wittek, Peter and Pancotti, Nicola and Rebentrost, Patrick and Wiebe, Nathan and Lloyd, Seth},
  journal={Nature},
  volume={549},
  number={7671},
  pages={195--202},
  year={2017},
  publisher={Nature Publishing Group UK London}
}

@article{das2024role,
  title={The role of data embedding in equivariant quantum convolutional neural networks},
  author={Das, Sreetama and Martina, Stefano and Caruso, Filippo},
  journal={Quantum Machine Intelligence},
  volume={6},
  number={2},
  pages={82},
  year={2024},
  publisher={Springer}
}

@article{han2025enqode,
  title={EnQode: Fast Amplitude Embedding for Quantum Machine Learning Using Classical Data},
  author={Han, Jason and DiBrita, Nicholas S and Cho, Younghyun and Luo, Hengrui and Patel, Tirthak},
  journal={arXiv preprint arXiv:2503.14473},
  year={2025}
}

@article{berry2025rapid,
  title={Rapid Initial-State Preparation for the Quantum Simulation of Strongly Correlated Molecules},
  author={Berry, Dominic W and Tong, Yu and Khattar, Tanuj and White, Alec and Kim, Tae In and Low, Guang Hao and Boixo, Sergio and Ding, Zhiyan and Lin, Lin and Lee, Seunghoon and others},
  journal={PRX Quantum},
  volume={6},
  number={2},
  pages={020327},
  year={2025},
  publisher={APS}
}

@article{huggins2025efficient,
  title={Efficient state preparation for the quantum simulation of molecules in first quantization},
  author={Huggins, William J and Leimkuhler, Oskar and Stetina, Torin F and Whaley, K Birgitta},
  journal={PRX Quantum},
  volume={6},
  number={2},
  pages={020319},
  year={2025},
  publisher={APS}
}

@article{baker2025universal,
  title={Universal initial state preparation for first quantized quantum simulations},
  author={Baker, Jack S and Saxena, Gaurav and Kyaw, Thi Ha},
  journal={arXiv preprint arXiv:2510.07278},
  year={2025}
}

@article{chan2023grid,
  title={Grid-based methods for chemistry simulations on a quantum computer},
  author={Chan, Hans Hon Sang and Meister, Richard and Jones, Tyson and Tew, David P and Benjamin, Simon C},
  journal={Science Advances},
  volume={9},
  number={9},
  pages={eabo7484},
  year={2023},
  publisher={American Association for the Advancement of Science}
}

@misc{grover2002creatingsuperpositionscorrespondefficiently,
      title={Creating superpositions that correspond to efficiently integrable probability distributions}, 
      author={Lov Grover and Terry Rudolph},
      year={2002},
      eprint={quant-ph/0208112},
      archivePrefix={arXiv},
      primaryClass={quant-ph},
      url={https://arxiv.org/abs/quant-ph/0208112}, 
}

@article{feniou2025real,
  title={Real-Space Chemistry on Quantum Computers: A Fault-Tolerant Algorithm with Adaptive Grids and Transcorrelated Extension},
  author={Feniou, C{\'e}sar and Cherfan, Christopher and Zylberman, Julien and Claudon, Baptiste and Piquemal, Jean-Philip and Giner, Emmanuel},
  journal={arXiv preprint arXiv:2507.20583},
  year={2025}
}

@article{claudon2024polylogarithmic,
  title={Polylogarithmic-depth controlled-NOT gates without ancilla qubits},
  author={Claudon, Baptiste and Zylberman, Julien and Feniou, C{\'e}sar and Debbasch, Fabrice and Peruzzo, Alberto and Piquemal, Jean-Philip},
  journal={Nature Communications},
  volume={15},
  number={1},
  pages={5886},
  year={2024},
  publisher={Nature Publishing Group UK London}
}

@article{wave_equation,
  title = {Quantum algorithm for simulating the wave equation},
  author = {Costa, Pedro C. S. and Jordan, Stephen and Ostrander, Aaron},
  journal = {Phys. Rev. A},
  volume = {99},
  issue = {1},
  pages = {012323},
  numpages = {22},
  year = {2019},
  month = {Jan},
  publisher = {American Physical Society},
  doi = {10.1103/PhysRevA.99.012323},
  url = {https://link.aps.org/doi/10.1103/PhysRevA.99.012323}
}

@article{reac_diffu,
    author = "Liu, Jin-Peng and An, Dong and Fang, Di and Wang, Jiasu and Low, Guang Hao and Jordan, Stephen",
    title = "{Efficient Quantum Algorithm for Nonlinear Reaction{\textendash}Diffusion Equations and Energy Estimation}",
    eprint = "2205.01141",
    archivePrefix = "arXiv",
    primaryClass = "quant-ph",
    doi = "10.1007/s00220-023-04857-9",
    journal = "Commun. Math. Phys.",
    volume = "404",
    number = "2",
    pages = "963--1020",
    year = "2023"
}

@article{Montanaro_2015,
   title={Quantum speedup of Monte Carlo methods},
   volume={471},
   ISSN={1471-2946},
   url={http://dx.doi.org/10.1098/rspa.2015.0301},
   DOI={10.1098/rspa.2015.0301},
   number={2181},
   journal={Proceedings of the Royal Society A: Mathematical, Physical and Engineering Sciences},
   publisher={The Royal Society},
   author={Montanaro, Ashley},
   year={2015},
   month=sep, pages={20150301} }

@misc{codegit,
    title = {Data for "Logarithmic-depth quantum state preparation of polynomials" (https://zenodo.org/records/19064460)},
    author={Claudon, Baptiste and Feniou, César and Zylberman, Julien and Lucas, Alexis},
    year={2026},
    url={https://zenodo.org/records/19064460},
}

@article{madelung,
  title = {Quantum simulations of hydrodynamics via the Madelung transformation},
  author = {Zylberman, Julien and Di Molfetta, Giuseppe and Brachet, Marc and Loureiro, Nuno F. and Debbasch, Fabrice},
  journal = {Phys. Rev. A},
  volume = {106},
  issue = {3},
  pages = {032408},
  numpages = {10},
  year = {2022},
  month = {Sep},
  publisher = {American Physical Society},
  doi = {10.1103/PhysRevA.106.032408},
  url = {https://link.aps.org/doi/10.1103/PhysRevA.106.032408}
}

@article{zylberman2025trotter,
  title={Trotter-based quantum algorithm for solving transport equations with exponentially fewer time-steps},
  author={Zylberman, Julien and Fredon, Thibault and Loureiro, Nuno F and Debbasch, Fabrice},
  journal={Quantum Science and Technology},
  year={2025}
}

@inproceedings{qsvt, series={STOC ’19},
   title={Quantum singular value transformation and beyond: exponential improvements for quantum matrix arithmetics},
   url={http://dx.doi.org/10.1145/3313276.3316366},
   DOI={10.1145/3313276.3316366},
   booktitle={Proceedings of the 51st Annual ACM SIGACT Symposium on Theory of Computing},
   publisher={ACM},
   author={Gilyén, András and Su, Yuan and Low, Guang Hao and Wiebe, Nathan},
   year={2019},
   month=jun, pages={193–204},
   collection={STOC ’19} }

@article{Martyn_2021,
   title={Grand Unification of Quantum Algorithms},
   volume={2},
   ISSN={2691-3399},
   url={http://dx.doi.org/10.1103/PRXQuantum.2.040203},
   DOI={10.1103/prxquantum.2.040203},
   number={4},
   journal={PRX Quantum},
   publisher={American Physical Society (APS)},
   author={Martyn, John M. and Rossi, Zane M. and Tan, Andrew K. and Chuang, Isaac L.},
   year={2021},
   month=dec }

@misc{eremenko2006uniformapproximationsgnxpolynomials,
      title={Uniform approximation of sgn(x) by polynomials and entire functions}, 
      author={Alexandre Eremenko and Peter Yuditskii},
      year={2006},
      eprint={math/0604324},
      archivePrefix={arXiv},
      primaryClass={math.CA},
      url={https://arxiv.org/abs/math/0604324}, 
}

@misc{ballarin2025efficientquantumstatepreparation,
      title={Efficient quantum state preparation of multivariate functions using tensor networks}, 
      author={Marco Ballarin and Juan José García-Ripoll and David Hayes and Michael Lubasch},
      year={2025},
      eprint={2511.15674},
      archivePrefix={arXiv},
      primaryClass={quant-ph},
      url={https://arxiv.org/abs/2511.15674}, 
}

\onecolumngrid
\appendix

\newpage
\appendix
\section{Quantum Signal Processing}\label{app:quantum_signal_processing}

In this section, we recall quantum signal processing theorems. We use the notion of Project Unitary Encoding (PUE) which generalizes the concept of block-encoding: one recovers the target operation by projecting a unitary onto certain subspaces while block-encodings embed an operator as a top-left block of a larger unitary. Therefore, PUE gives more flexibility in how the unitary is constructed.

\begin{Def}[(Symmetric) Projected Unitary Encoding] \index{Symmetric Projected Unitary Encoding}\index{Projected Unitary Encoding}
Let $U$ be a unitary operator and $\square_L, \square_R$ be partial isometries. $(U, \square_L, \square_R)$ is said to be a Projected Unitary Encoding (PUE) of $A=\square_L^\dag U\square_R$. If $\square_L=\square_R=\square$ and $U$ is also symmetric, $(U, \square)$ is called a Symmetric Projected Unitary Encoding (SPUE) of $A=\square^\dag U\square$.
\end{Def}

Theorem~\ref{the:gqsp} details single-ancilla constructions to encode polynomials of unitaries~\cite{gqsp}.

\begin{The}{(Generalized Quantum Signal Processing)}
Let $\Upsilon$ be a degree $d\in \mathbb N$ complex polynomial and $U$ be a unitary.
If $\|\Upsilon(V)\|\leq 1$ for all unitaries $V$, then we can construct a PUE $(W, \ket0, \ket0)$ of $\Upsilon(U)$
using $d$ controlled-$U$ operations, $\mathcal O(d)$ additional single-qubit gates and $1$ ancilla qubit.
\label{the:gqsp}
\end{The}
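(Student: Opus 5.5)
The proof follows Motlagh and Wiebe's GQSP: reduce to a structural statement about $2\times2$ polynomial matrices in one variable, then peel off one degree at a time.

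\textbf{Step 1: circuit ansatz.} Consider circuits of the form
\[
W = R(\theta_0,\phi_0,\lambda)\prod_{j=1}^{d} A\, R(\theta_j,\phi_j,0),
\]
where $A=\ket0\bra0\otimes U+\ket1\bra1\otimes I$ is the $0$-controlled $U$ and $R(\theta,\phi,\lambda)$ is a general $SU(2)$-type rotation on the ancilla. Since $U$ is normal, I diagonalise it, $U=\sum_j e^{i\omega_j}\ket{j}\bra{j}$. On each eigenspace, $A$ acts on the ancilla as $\mathrm{diag}(z,1)$ with $z=e^{i\omega_j}$. So $W$ block-diagonalises into $2\times2$ matrices $W(z)$ whose entries are polynomials of degree at most $d$ in $z$. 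Its top-left entry is $P(z)$, and $\bra0 W\ket0=P(U)$ follows by functional calculus. It therefore suffices to show that every admissible $\Upsilon$ with $|\Upsilon(z)|\le1$ on the unit circle (the hypothesis applied to scalar unitaries $V=z$) arises as such a $P$.

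\textbf{Step 2: complementary polynomial.} Set $F(z)=1-|\Upsilon(z)|^2$, a Laurent polynomial that is nonnegative on $|z|=1$. By the Fejér–Riesz theorem there is a degree-$d$ polynomial $Q$ with $|Q(z)|^2=F(z)$ on the circle. Hence the matrix
\[
\begin{pmatrix}\Upsilon & Q\\ \cdot & \cdot\end{pmatrix}
\]
can be completed to a unitary-valued polynomial matrix $M(z)$, with the second row taken as the conjugate-reflected pair.

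\textbf{Step 3: degree reduction by induction.} Choose the last rotation so that $R^\dagger$ applied to $(\Upsilon,Q)$ cancels the leading coefficient of one component and the constant coefficient of the other. This works because $|P|^2+|Q|^2=1$ forces the top-degree coefficient vector and the bottom-degree coefficient vector to be orthogonal, so a single $SU(2)$ rotation can align them. Stripping off $\mathrm{diag}(z,1)$ then lowers the degree by one. Induction down to degree $0$, where only a constant rotation remains, yields the phases $(\theta_j,\phi_j,\lambda)$.

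\textbf{Resource count.} The circuit uses exactly $d$ controlled-$U$ gates, $d+1$ single-qubit rotations and one ancilla, and $(W,\ket0,\ket0)$ is then a PUE of $\Upsilon(U)$.

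The main obstacle is the degree-reduction step, specifically checking the orthogonality of the leading and trailing coefficient vectors and handling degenerate cases where a leading coefficient vanishes. The Fejér–Riesz step I take as a classical result.
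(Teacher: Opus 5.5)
Your proposal is correct. It is essentially the Motlagh--Wiebe GQSP argument: a Fej\'er--Riesz complementary polynomial, then layer stripping driven by the orthogonality of the $z^0$ and $z^d$ coefficient vectors (the degenerate cases are trivial). That is exactly what the paper relies on, since it states this theorem without proof and simply cites \cite{gqsp}.

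One small caveat concerns your top-row, right-peeling orientation. There, the restricted Motlagh--Wiebe rotations $R(\theta,\phi,0)$ cannot by themselves achieve the required alignment for an arbitrary pair $(\Upsilon,Q)$, so your closing claim that the induction yields phases $(\theta_j,\phi_j,\lambda)$ is not justified as written. Either use a general $SU(2)$ rotation at each step, which is all the stated $\mathcal O(d)$ single-qubit gate count requires, or exploit the phase freedom of $Q$ and push diagonal phases through the controlled-$U$ gates.
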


Theorem~\ref{the:gqsp} can also be used to encode polynomials of symmetric matrices. The procedure uses the notion of signal processing polynomial, given in Definition~\ref{def:signal_processing_polynomial}.

\begin{Def}[Signal processing polynomial]
Consider a degree-$d$ complex polynomial $\upsilon(x)=\sum_{n=0}^da_nT_n(x),$ written in the basis of Chebyshev polynomials. Its associated signal processing polynomial is defined to be $\Upsilon(z)=\sum_{n=0}^da_nz^n$. 
\label{def:signal_processing_polynomial}
\end{Def}

The complexity of implementing polynomial ultimately depends on its scaling factor, defined as follows.

\begin{Def}[Scaling factor]\index{Scaling factor}
The scaling factor $\beta$ of $\upsilon$ is defined by:
\begin{equation}
\beta = \frac{\max_{z\in \partial B_1(0)}\left|\Upsilon(z)\right|}{\max_{x\in [-1, 1]}|\upsilon(x)|},
\end{equation}
where $\Upsilon$ is the signal processing polynomial of $\upsilon$.
\label{def:scalingfac}
\end{Def}

Theorem~\ref{the:gqet} is derived from Theorem~\ref{the:gqsp} and used in our construction~\cite{gqsvt}.

\begin{The}{(Generalized Quantum Eigenvalue Transform)}
Let $(U, \square)$ be a SPUE of $A$.
Consider a degree-$d$ complex polynomial $\upsilon(x)=\sum_{n=0}^d a_n T_n(x)$ written in the basis of Chebyshev polynomials.
Consider its associated signal processing polynomial $\Upsilon(z)=\sum_{n=0}^d a_n z^n$.
Assume that $\max_{z\in\partial B_1(0)}|\Upsilon(z)|\leq1$, namely that its scaling factor satisfies $\beta\leq 1$.
Let $\mathcal G$ be the GQSP (Theorem~\ref{the:gqsp}) transformation of the qubitized walk operator of $(U, \square)$ applying the polynomial $\Upsilon$.
Then, $\left(\mathcal G, \ket0\otimes \square\right)$ is a PUE of $\upsilon(A)$.
\label{the:gqet}
\end{The}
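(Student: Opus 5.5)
Since the statement is the Generalized Quantum Eigenvalue Transform (Theorem~\ref{the:gqet}), the plan is to reduce it to Theorem~\ref{the:gqsp} via qubitization. First, from the SPUE $(U,\square)$ of $A$ (with $U$ unitary and Hermitian, so $U^2=1$, and $\square$ an isometry onto a subspace $\Pi=\square\square^\dag$), I would form the qubitized walk operator $W=(2\Pi-1)U$, which is unitary. The standard Jordan-lemma argument then gives, for each eigenvector $\ket{\lambda}$ of the Hermitian matrix $A=\square^\dag U\square$ (Hermitian because $U$ is) with eigenvalue $\lambda\in[-1,1]$, a subspace spanned by $\square\ket\lambda$ and $\ket{\perp_\lambda}\propto(1-\Pi)U\square\ket\lambda$ that is invariant under $W$. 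On it, $W$ acts as a rotation by angle $\theta=\arccos\lambda$, with eigenvalues $e^{\pm i\theta}$. When $|\lambda|=1$ the subspace degenerates to one dimension and $W$ acts there as $\lambda$; I would check this degenerate case separately.

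Second, I would prove the key identity $\square^\dag W^k\square=T_k(A)$ for all $k\ge0$. In the two-dimensional block, $W$ restricted to that block is the rotation $R(\theta)$, whose $(1,1)$ entry after $k$ applications is $\cos(k\theta)=T_k(\cos\theta)=T_k(\lambda)$. Equivalently, I would verify the recurrence $\square^\dag W^{k+1}\square=2A\,\square^\dag W^k\square-\square^\dag W^{k-1}\square$, which follows from the Cayley--Hamilton relation $W^2-2\lambda W+1=0$ on each block. By linearity, this gives $\square^\dag\Upsilon(W)\square=\sum_n a_nT_n(A)=\upsilon(A)$.

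Third, since $W$ is unitary, the hypothesis $\max_{|z|=1}|\Upsilon(z)|\le1$ together with the spectral theorem gives $\|\Upsilon(V)\|\le1$ for every unitary $V$. Theorem~\ref{the:gqsp} then supplies $\mathcal G$ with $(\bra0\otimes 1)\mathcal G(\ket0\otimes 1)=\Upsilon(W)$. Composing the projections, $(\bra0\otimes\square^\dag)\mathcal G(\ket0\otimes\square)=\square^\dag\Upsilon(W)\square=\upsilon(A)$, which is exactly the PUE claim.

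The main obstacle will be the second step: I need the invariant-subspace decomposition to cover all of $\mathrm{ran}\,\square$ with consistent phases, including the degenerate eigenvalues $\pm1$ and possible sign conventions in $W$, such as whether to use $(2\Pi-1)U$ or $U(2\Pi-1)$. To avoid the block bookkeeping altogether, I would try the purely algebraic recurrence route. The argument would start from the identity $\square^\dag W=\square^\dag U$, which holds because $(2\Pi-1)$ fixes $\mathrm{ran}\,\Pi$, and from $U^2=1$, and would then show by induction that $\square^\dag W^k\square$ satisfies the Chebyshev recurrence.
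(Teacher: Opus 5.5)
Your proposal is correct and follows the route the paper indicates. The paper gives no proof of its own: it only says the GQET is derived from the GQSP theorem and defers to \cite{gqsvt}, where the argument is the one you give, namely applying GQSP to the qubitized walk $W=(2\Pi-1)U$ and using $\square^\dag W^k\square=T_k(A)$, and your worry about that step is unnecessary. Each $W^k\square\ket\lambda$ stays in $\mathrm{span}\{\square\ket\lambda,\ket{\perp_\lambda}\}$ with $\square^\dag\ket{\perp_\lambda}=0$, so no orthogonality between different blocks is needed, and the other convention $U(2\Pi-1)=W^\dag$ simply gives $T_k(A)^\dag=T_k(A)$.
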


If a polynomial $\upsilon$ has a scaling factor $\beta>1$, the polynomial $\upsilon/\beta$ can be implemented. Moreover, as shown in~\cite{gqsvt}, $\beta$ grows at most logarithmically with the degree of $\upsilon$.

\section{Proofs}\label{app:proofs}
In the following, technical details and formal proofs of the formal version of the theorems are presented.

\subsection{Objective}
Let us begin by defining the operator for which we aim to construct a symmetric projected unitary encoding.

\begin{Def}[Linear diagonal operator]\index{Linear diagonal operator}
Let $n\geq1$. Define
\begin{equation}
L_n=\frac1{1-1/2^n}\sum_{x\in B_n}x\ket x\bra x.
\end{equation}
\end{Def}

The key observation is the following.

\begin{Prop} For every integer $n\geq1$, the following identity holds true:
\begin{equation}
\frac1{1-1/2^n}\sum_{k=1}^n\frac{Z_k}{2^k}=1-2L_n.
\end{equation}
\label{prop:pauli_decomposition}
\end{Prop}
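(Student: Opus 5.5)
Both sides of Proposition~\ref{prop:pauli_decomposition} are diagonal in the computational basis, so I will prove the identity by checking that they agree on every basis vector $\ket{x}$ with $x=0.x_1\dots x_n\in B_n$. The convention is that the $k$-th qubit carries the bit $x_k$, which has dyadic weight $2^{-k}$, so $x=\sum_{k=1}^n x_k 2^{-k}$. The only fact about the Pauli operator needed is
\begin{equation}
Z_k\ket{x}=(-1)^{x_k}\ket{x}=(1-2x_k)\ket{x}.
\end{equation}

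\textbf{Left-hand side.} Acting on $\ket{x}$, the left-hand side gives the eigenvalue
\begin{equation}
\frac{1}{1-2^{-n}}\sum_{k=1}^n\frac{1-2x_k}{2^k}.
\end{equation}
I split this sum into two pieces:
\begin{equation}
\sum_{k=1}^n 2^{-k}=1-2^{-n},
\qquad
\sum_{k=1}^n 2x_k2^{-k}=2x.
\end{equation}
The first is a geometric sum and the second is just the binary expansion of $x$. The eigenvalue is therefore
\begin{equation}
\frac{1-2^{-n}-2x}{1-2^{-n}}=1-\frac{2x}{1-2^{-n}}.
\end{equation}

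\textbf{Right-hand side.} By the definition of $L_n$, the right-hand side $1-2L_n$ has exactly this eigenvalue on $\ket{x}$. Since both operators are diagonal in the same basis with equal eigenvalues, they are equal.

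There is no real obstacle here. The only point that needs care is the qubit-ordering convention: the identification $x=\sum_k x_k2^{-k}$ with qubit $k$ must match the indexing of $Z_k$ in the statement.
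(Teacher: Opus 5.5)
Your proof is correct and follows the paper's argument: both verify the identity on each computational basis state $\ket{x}$ using $Z_k\ket{x}=(1-2x_k)\ket{x}$, the geometric sum $\sum_{k=1}^n 2^{-k}=1-2^{-n}$, and the binary expansion $x=\sum_k x_k 2^{-k}$, then conclude by linearity.
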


\begin{proof}
For every $a\in \{0, 1\}$, 
\begin{equation}
(-1)^a=1-2a.
\end{equation}
Therefore,
\begin{equation}
\begin{split}
\forall x\in B_n:\frac1{1-1/2^n}\sum_{k=1}^n\frac{Z_k}{2^k}\ket x&=\frac1{1-1/2^n}\sum_{k=1}^n\frac{(-1)^{x_k}}{2^k}\ket x\\
&=\frac1{1-1/2^n}\sum_{k=1}^n\frac{1-2x_k}{2^k}\ket x\\
&=\ket x-\frac2{1-1/2^n}\sum_{k=1}^n\frac{x_k}{2^k}\ket x\\
&=\ket x-\frac2{1-1/2^n}x\ket x\\
&=(1-2L_n)\ket x.
\end{split}
\end{equation}
Because the identity holds true for every computational basis state in $B_n$, the identity holds true in general.
\end{proof}

Our strategy to construct $L_n$ is to design a low-depth implementation of $\frac1{1-1/2^n}\sum_{k=1}^n\frac{Z_k}{2^k}$. To do so, we adopt a PREPARE/SELECT strategy where the coefficients $1/2^k$ needs to be loaded in a quantum state via the following quantum state: 
\begin{equation}
\ket{\beta_n}=\frac1{\sqrt{1-1/2^n}}\sum_{k=1}^n\frac1{\sqrt{2^k}}\ket{1/2^k},
\end{equation}
so that, one can produce the $\sum Z_k/2^k$ using a unary SELECT operator. Thus, one can show
\begin{equation}
\left(\sum_{k=1}^n\ket{1/2^k}\bra{1/2^k}\otimes Z_k+\left(1-\sum_{k=1}^n\ket{1/2^k}\bra{1/2^k}\right)\otimes 1, \ket{\beta_n}\right)
\end{equation}
is a SPUE of $\frac1{1-1/2^n}\sum_{k=1}^n\frac{Z_k}{2^k}$.

\subsection{PREPARE operator}
\subsubsection{First layer of quantum gates}
In order to prepare $\ket{\beta_n}$, we note that we can prepare a state $\ket{\alpha_n}$ with constant overlap with $\ket{\beta_n}$ at constant cost, before projecting onto $\ket{\beta_n}$ with logarithmic depth.

\begin{Def}[$\ket{\alpha_n}$]\index{$\ket{\alpha_n}$} Define the $n$-qubit state $\ket{\alpha_n}$ by:
\begin{equation}
\ket{\alpha_n}=\bigotimes_{k=1}^n\left(\sqrt{1-\frac1{2^k+1}}\ket0+\sqrt{\frac1{2^k+1}}\ket1\right).
\end{equation}
\end{Def}

\begin{Prop} We can construct a constant depth unitary $A_n$ such that $A_n\ket 0=\ket{\alpha_n}$.
\end{Prop}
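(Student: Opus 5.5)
The state $\ket{\alpha_n}$ is a product state, so the unitary can be taken as a tensor product of single-qubit rotations, one per qubit. Set
\[
A_n=\bigotimes_{k=1}^n R_Y(\theta_k),\qquad \theta_k=2\arcsin\left(\sqrt{\frac1{2^k+1}}\right),
\]
with the convention $R_Y(\theta)=\begin{pmatrix}\cos(\theta/2)&-\sin(\theta/2)\\ \sin(\theta/2)&\cos(\theta/2)\end{pmatrix}$. These are the angles already shown in Figure~\ref{fig:overall_circuit}.

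First I would check the single-qubit action. We have $R_Y(\theta_k)\ket0=\cos(\theta_k/2)\ket0+\sin(\theta_k/2)\ket1$. By construction $\sin(\theta_k/2)=\sqrt{1/(2^k+1)}$. Since $\theta_k/2\in[0,\pi/2]$, the cosine is nonnegative, so $\cos(\theta_k/2)=\sqrt{1-1/(2^k+1)}$. This reproduces exactly the $k$-th factor in the definition of $\ket{\alpha_n}$. The arcsine is well defined because $1/(2^k+1)\in(0,1)$.

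Next, tensoring the $n$ single-qubit identities gives $A_n\ket{0}^{\otimes n}=\ket{\alpha_n}$, by multiplicativity of the tensor product.

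Finally, for the resources: the $n$ rotations act on disjoint qubits, so they fit in a single parallel layer. The depth is therefore $1=\mathcal O(1)$, the size is $n$, and no ancilla is used.

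The argument has no real obstacle. The only points needing care are the sign and branch convention of $R_Y$, which fixes the positive sign of both amplitudes, and the observation that the angles $\theta_k$ become exponentially small. That affects gate synthesis cost, not depth in the gate model where $R_Y$ with arbitrary angle is a primitive, as the paper assumes.
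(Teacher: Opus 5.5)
Your proof is correct and is essentially the paper's: the paper takes $A_n=\bigotimes_{k=1}^n\exp\left(-i\sin^{-1}\left(\sqrt{1/(2^k+1)}\right)Y\right)$, which is exactly your $\bigotimes_k R_Y(\theta_k)$ with $\theta_k=2\arcsin\left(\sqrt{1/(2^k+1)}\right)$. Your check of the cosine's sign and your count of depth one, size $n$, and no ancillae fill in details the paper leaves implicit.
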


\begin{proof}
Take
\begin{equation}
A_n=\bigotimes_{k=1}^n\begin{pmatrix}
\sqrt{1-\frac1{2^k+1}}  & -\sqrt{\frac1{2:^k+1}}\\
\sqrt{\frac1{2^k+1}}    & \sqrt{1-\frac1{2^k+1}}
\end{pmatrix}\\
=\bigotimes_{k=1}^n\exp\left(-i\left(\sin^{-1}\left(\sqrt{\frac1{2^k+1}}\right)\right)Y\right).
\end{equation}
\end{proof}

Now, let us formalize the definition of $\ket{\beta_n}$.

\begin{Def}[$\ket{\beta_n}$]\index{$\ket{\beta_n}$} Define the $n$-qubit state $\ket{\beta_n}$ by:
\begin{equation}
\ket{\beta_n}=\frac1{\sqrt{1-\frac1{2^n}}}\sum_{k=1}^{n}\sqrt{\frac{1}{2^k}}\ket{1/2^k}.
\end{equation}
\end{Def}

$\ket{\beta_n}$ and $\ket{\alpha_n}$ are related by the following proposition.

\begin{Prop} The projection of $\ket{\alpha_n}$ on the span of $\{\ket{1/2^k}\}_{k=1}^n$ is $\ket{\beta_n}$.
\label{prop:projection}
\end{Prop}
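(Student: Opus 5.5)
The plan is a direct computation in the computational basis: expand the product state $\ket{\alpha_n}$, keep only the Hamming-weight-one components, and compare them term by term with $\ket{\beta_n}$. The statement should be read as saying that the projection is \emph{proportional} to $\ket{\beta_n}$, since a projection of a unit vector generally has norm less than one. I will therefore identify the exact scalar, which is the quantity $\sqrt{p_n}$ used in Eq.~\eqref{eq:psi_state_method}.

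\textbf{Step 1 (write $\ket{\alpha_n}$ in a convenient form).} Set $c_j=\sqrt{2^j/(2^j+1)}$ and $s_j=\sqrt{1/(2^j+1)}$. Then
\begin{equation}
\ket{\alpha_n}=\bigotimes_{j=1}^n\left(c_j\ket0+s_j\ket1\right).
\end{equation}

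\textbf{Step 2 (read off the weight-one coefficients).} The basis state $\ket{1/2^k}$ has its single $1$ on qubit $k$. Its coefficient in $\ket{\alpha_n}$ is therefore $s_k\prod_{j\neq k}c_j$. The key algebraic observation is $s_k=2^{-k/2}c_k$, so this coefficient equals $2^{-k/2}C_n$ with $C_n=\prod_{j=1}^n c_j$. Since $C_n$ does not depend on $k$, the orthogonal projector $P$ onto $\mathrm{span}\{\ket{1/2^k}\}_{k=1}^n$ gives
\begin{equation}
P\ket{\alpha_n}=C_n\sum_{k=1}^n 2^{-k/2}\ket{1/2^k}=C_n\sqrt{1-2^{-n}}\,\ket{\beta_n}.
\end{equation}
This proves the proposition up to the explicit constant $\sqrt{p_n}=C_n\sqrt{1-2^{-n}}$. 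As a check, $\ket{\beta_n}$ is indeed normalized, because $\sum_k 2^{-k}=1-2^{-n}$.

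\textbf{Step 3 (lower-bound $p_n$).} The constant-overlap claim $p_n\ge 1/6$ is the only place needing an estimate, and this is where I expect the (mild) obstacle. I will bound $C_n^2=\prod_{j}(1+2^{-j})^{-1}$ using $\log(1+t)\le t$, which gives
\begin{equation}
C_n^2\ \ge\ \exp\Bigl(-\sum_{j\ge1}2^{-j}\Bigr)=e^{-1}.
\end{equation}
Combining this with $1-2^{-n}\ge 1/2$ for $n\ge1$ yields $p_n\ge 1/(2e)>1/6$. Moreover $p_n$ converges to $\prod_{j\ge1}(1+2^{-j})^{-1}$ as $n\to\infty$, so the overlap is $\Theta(1)$, independent of $n$.
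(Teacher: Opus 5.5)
Your Steps 1--2 are correct and are exactly the paper's argument: the coefficient of $\ket{1/2^k}$ is $2^{-k/2}\prod_{j=1}^n\sqrt{1-\frac{1}{2^j+1}}$, so the projection equals $\bigl(\prod_{j=1}^n\sqrt{1-\frac{1}{2^j+1}}\bigr)\sqrt{1-2^{-n}}\,\ket{\beta_n}$. As you rightly note, this is $\ket{\beta_n}$ only up to the scalar $\sqrt{p_n}$; your Step 3 goes beyond this proposition (it proves the paper's next result, $p_n\ge 1/6$), and your bound via $\log(1+t)\le t$, giving $p_n\ge 1/(2e)$, is correct and cleaner than the paper's concavity argument.
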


\begin{proof} Note that for each $1\leq k \leq n$,
\begin{equation}
\braket{1/2^k|\alpha_n}=\sqrt{\frac{\frac{1}{2^k+1}}{1-\frac1{2^k+1}}}\prod_{j=1}^n\sqrt{1-\frac1{2^j+1}}=\frac1{\sqrt{2^k}}\prod_{j=1}^n\sqrt{1-\frac1{2^j+1}}.
\end{equation}
It follows that
\begin{equation}
\sum_{k=1}^n\ket{1/2^k}\braket{1/2^k|\alpha_n}=\left(\prod_{j=1}^n\sqrt{1-\frac1{2^j+1}}\right)\sqrt{1-1/2^n}\ket{\beta_n}.
\end{equation} 
\end{proof}

As a consequence, the following definition is well-posed.

\begin{Def}[$p_n$]\index{$p_n$} Define $p_n\in [0, 1]$ such that:
\begin{equation}
\ket{\alpha_n}=\sqrt{p_n}\ket{\beta_n}+\sqrt{1-p_n}\ket{\gamma_n},
\end{equation}
for some state $\ket{\gamma_n}$ that is orthogonal to $\ket{\beta_n}$.
\end{Def}

Moreover, we can lower bound the overlap between $\ket{\alpha_n}$ and $\ket{\beta_n}$ uniformly in $n\in \mathbb N$.

\begin{Prop} For every integer $n\geq1$, 
\begin{equation}
p_n=\sum_{k=1}^n|\braket{1/2^k|\alpha_n}|^2\geq \frac16.
\label{eq:pn}
\end{equation}
\end{Prop}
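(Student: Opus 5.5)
From the computation in the proof of Proposition~\ref{prop:projection}, each amplitude is $\braket{1/2^k|\alpha_n}=2^{-k/2}\prod_{j=1}^n\sqrt{1-\frac1{2^j+1}}$. Squaring and summing over $k$ gives a closed form. Writing $1-\frac1{2^j+1}=\frac{2^j}{2^j+1}$, it reads
\begin{equation}
p_n=\left(1-\frac1{2^n}\right)\prod_{j=1}^n\frac{2^j}{2^j+1}=\left(1-\frac1{2^n}\right)\prod_{j=1}^n\frac{1}{1+2^{-j}}.
\end{equation}
The first equality in \eqref{eq:pn} holds because $\ket{\beta_n}$ is the normalized projection of $\ket{\alpha_n}$ onto the Hamming-weight-one span. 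Hence $p_n=\|\Pi\ket{\alpha_n}\|^2=\sum_k|\braket{1/2^k|\alpha_n}|^2$. (The proposition's ``projection is $\ket{\beta_n}$'' should be read up to this normalization.)

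It then remains to lower bound the product uniformly in $n$. Every factor $\frac1{1+2^{-j}}$ lies in $(0,1)$, so
\begin{equation}
\prod_{j=1}^n\frac1{1+2^{-j}}\ \ge\ \prod_{j=1}^\infty\frac1{1+2^{-j}}.
\end{equation}
To bound the infinite product, I keep the first factors exactly and control the tail with $\log(1+t)\le t$:
\begin{equation}
\prod_{j\ge 3}(1+2^{-j})\le \exp\Big(\sum_{j\ge3}2^{-j}\Big)=e^{1/4}.
\end{equation}
Then
\begin{equation}
\prod_{j\ge1}(1+2^{-j})\le \frac32\cdot\frac54\cdot e^{1/4}=\frac{15}{8}e^{1/4}\approx 2.41,
\end{equation}
so the infinite product of reciprocals is at least about $0.415$.

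For $n\ge 2$ we have $1-2^{-n}\ge 3/4$, which gives $p_n\ge \frac34\cdot 0.415>0.31>1/6$. For $n=1$ I compute directly: $p_1=\frac12\cdot\frac23=\frac13\ge\frac16$. Alternatively, for all $n$, $1-2^{-n}\ge 1/2$ gives $p_n\ge 0.2>1/6$ at once. I will use the latter to avoid a case split.

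The only delicate step is bounding the infinite product $\prod_{j}(1+2^{-j})$ by an explicit constant below $3$. The crude bound $e^{\sum 2^{-j}}=e\approx 2.72$ gives $\frac12\cdot e^{-1}\approx0.18$, which already suffices but with little slack. Keeping the first two factors exactly makes the margin comfortable.
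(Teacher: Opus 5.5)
Your proof is correct, and it takes a different and cleaner route than the paper's.

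\textbf{What you do differently.} You use the fact, already visible in the proof of Proposition~\ref{prop:projection}, that $\braket{1/2^k|\alpha_n}$ is $2^{-k/2}$ times a $k$-independent product. This gives the exact value $p_n=(1-2^{-n})\prod_{j=1}^n(1+2^{-j})^{-1}$. A single uniform bound on $\prod_{j\ge1}(1+2^{-j})$, obtained from $\log(1+t)\le t$, then finishes the argument.

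\textbf{What the paper does.} It bounds each term separately. It writes $|\braket{1/2^k|\alpha_n}|^2=\frac{1}{2^k+1}\prod_{j\neq k}\bigl(1-\frac{1}{2^j+1}\bigr)$ and loosens $\frac{1}{2^k+1}\ge 2^{-(k+1)}$. It then appeals to concavity of $t\mapsto\log(1-t)$ to reach $|\braket{1/2^k|\alpha_n}|^2\ge 2^{-k}/3$, and sums over $k$.

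\textbf{What each buys.} Your route gives an exact formula, which shows $p_n$ increasing from $p_1=1/3$ to about $0.419$. It also recovers the paper's per-term bound at once, since $\prod_{j\ge1}(1+2^{-j})^{-1}>0.4>1/3$.

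More importantly, it avoids a genuine flaw in the paper's chain of inequalities. Concavity only yields chord bounds such as $\log(1-t)\ge 3t\log(2/3)$ on $[0,1/3]$. It does not give the asserted $\log(1-t)\ge t\log(2/3)$, which fails for every $t>0$. Moreover, after the factor-two loosening, the intermediate quantity $2^{-(k+1)}\prod_{j\neq k}\bigl(1-\frac{1}{2^j+1}\bigr)$ already falls below the target $2^{-k}/3$ at $n=k=3$: it equals $1/30<1/24$. So that chain cannot be repaired as written.

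The per-term bound $2^{-k}/3$, and hence the proposition, are nevertheless true. This is precisely because of the $k$-independent common factor your argument exploits directly. Your proof is therefore the one that actually establishes the statement.
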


\begin{proof} Note that $p_n=|\braket{\beta_n|\alpha_n}|^2$. According to Proposition \ref{prop:projection}, $p_n$ is the square of the norm of the orthogonal projection of $\ket{\alpha_n}$ on the subspace spanned by $\{\ket{1/2^k}\}_{k=1}^n$. We can compute this squared norm by summing the square overlap with each of the generating vector. If $1\leq k\leq n$, then
\begin{equation}
|\braket{1/2^k|\alpha_n}|^2=\frac1{2^k+1}\prod_{j=1, j\neq k}^n\left(1-\frac1{2^j+1}\right)\geq \frac1{2^{k+1}}\prod_{j=1, j\neq k}^n\left(1-\frac1{2^j+1}\right).
\end{equation}
Therefore,
\begin{equation}
\log\left(|\braket{1/2^k|\alpha_n}|^2\right)\geq -(k+1)+\sum_{j=1, j\neq k}^n\log\left(1-\frac1{2^j+1}\right).
\end{equation}
By concavity of the function $t\mapsto \log(1-t)$ on the interval $[0, 2/3]$,
\begin{equation}
\begin{split}
\log\left(|\braket{1/2^k|\alpha_n}|^2\right)&\geq -(k+1)+\sum_{j=1, j\neq k}^n\log(2/3)\frac1{2^j+1}\\
&\geq-(k+1)+\log(2/3)(1-1/2^n)-\log(2/3)/(2^{k}+1)\\
&\geq-(k+1)-\log(3/2)\\
&=-k-\log(3).
\end{split}
\end{equation}
Taking the exponential on both sides:
\begin{equation}
|\braket{1/2^k|\alpha_n}|^2\geq \frac{2^{-k}}{3}.
\end{equation}
Summing over all $k$:
\begin{equation}
p_n\geq\frac{1-1/2^n}3\geq \frac16.
\end{equation}
\end{proof}

\subsubsection{EXACT-one in logarithmic depth}
We now present a modified version of $\ket{\beta_n}$. We add a so-called flag ancilla qubits that is active only in the subspace spanned by $\{\ket{1/2^k}\}_{k=1}^n$.

\begin{Def}[$\ket{\psi_n}$]\index{$\ket{\psi_n}$} Define the $n$-qubit state $\ket{\psi_n}$ by:
\begin{equation}
\ket{\psi_n}=\sqrt{p_n}\ket{\beta_n}\ket 1+\sqrt{1-p_n}\ket{\gamma_n}\ket0.
\end{equation}
\end{Def}

Whether the so-called flag ancilla qubit should be in state $0$ or $1$ can be computed classically as follows.

\begin{Prop} Let $n\in\mathbb N$, $n = 2m + r$ where $m\in \mathbb N, r\in \{0,1\}$. Define the function $g_n:\{0, 1\}^n\to\{0, 1,2\}$ by:
\begin{equation}
g_n(x)=\left\{\begin{aligned}
&0&\text {if $\sum_{i=1}^nx_i=0$,}\\
&1&\text{if $\sum_{i=1}^nx_i=1$,}\\
&2&\text{else.}
\end{aligned}\right.
\end{equation}
We can construct a circuit $G_n$ such that $G_n\ket{x}\ket{0}^{\otimes 2}=\ket{x}\ket{g_n(x)}$ with $\mathcal O(\log(n))$ depth, $\mathcal O(n)$ size using $2(n-2)+r$ ancillae.
\end{Prop}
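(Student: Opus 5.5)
The plan is a balanced binary tree of ``saturating counters.'' Each node stores the value $g$ of its subtree, encoded on two qubits, and the root is reached after $\lceil\log_2 n\rceil$ levels.

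\textbf{Encoding.} A value $g\in\{0,1,2\}$ is stored on two qubits $(a,b)$ as $\ket{00}$ for $0$, $\ket{10}$ for $1$, and $\ket{01}$ for $2$. Here $a$ means ``exactly one'' and $b$ means ``at least two.''

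\textbf{Merge gadget.} The key local step combines two encoded values $(a_1,b_1)$ and $(a_2,b_2)$ into the encoding of $\min(g_1+g_2,2)$ on a fresh pair of zeroed qubits. The target values are
\[
a=a_1\oplus a_2 \quad\text{when } b_1=b_2=0,
\]
and $a=0$ otherwise. One checks that this equals $(a_1\oplus a_2)\wedge\neg b_1\wedge\neg b_2$, since $b_i=1$ forces $a_i=0$. The other bit is
\[
b=b_1\vee b_2\vee(a_1\wedge a_2).
\]
Because $b_1$, $b_2$ and $a_1\wedge a_2$ are mutually exclusive (each $b_i=1$ forces $a_i=0$), this OR is an XOR. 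So $b$ is obtained by CNOTs from $b_1$ and $b_2$ plus one Toffoli on $a_1,a_2$. The bit $a$ is a constant number of Toffoli/CNOT gates with $X$ conjugations on $b_1$ and $b_2$. For example, copy $a_1$ and then $a_2$ into the target using a doubly negatively-controlled Toffoli controlled on $b_1=b_2=0$; no more ancillae are needed, since $\neg b_1\wedge\neg b_2$ holds automatically whenever $a_1$ or $a_2$ is set. Each merge therefore has constant depth and constant size and writes out-of-place into two fresh qubits.

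\textbf{Tree.} At the leaves, a single input bit $x_i$ already has encoding $(x_i,0)$. So I take $a=x_i$ directly and an implicit $b=0$, which lets the first level be merged with no leaf qubits. I pair the inputs into $m$ pairs, with one left over if $r=1$, and merge level by level. A binary tree with $n$ leaves has $n-1$ internal merges, each producing two qubits. The last merge writes into the two output qubits of $G_n$, and the others use ancillae.

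The precise count $2(n-2)+r$ comes from this bookkeeping: first-level merges of two raw bits produce pairs, and when $n$ is odd the unpaired bit $x_n$ is promoted with one extra zeroed qubit for its (trivial) $b$ component, or a similar adjustment. I expect to fix the exact allocation by counting $n-2$ non-root internal nodes at two qubits each, plus $r$. Depth is $\mathcal O(\log n)$, since all merges at a level act on disjoint qubits and run in parallel. Size is $\mathcal O(n)$, since there are $n-1$ constant-size merges.

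\textbf{Uncomputation.} The proposition, as stated, only requires $G_n\ket x\ket{00}=\ket x\ket{g_n(x)}$ with ancillae starting in $\ket0$. It does not state whether they must return to $\ket0$. For use in the EXACT-one oracle they should, so after the root computation I apply the inverse of all non-root levels in reverse order. This at most doubles depth and size. Finally, correctness of the whole tree follows by induction on the tree height, using correctness of the merge gadget checked on the six valid input pairs (the truth table above).

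The main obstacle is matching exactly the ancilla count $2(n-2)+r$ rather than just $\mathcal O(n)$. I would handle this by careful accounting of leaves versus internal nodes, including the odd leftover leaf. If the count comes out slightly different, I would adjust by reusing uncomputed lower-level qubits. Correctness and the depth/size bounds are routine.
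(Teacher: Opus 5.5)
\textbf{There is a genuine error in your merge gadget.} Your overall architecture is the paper's, and your count of $n-2$ non-root internal nodes at two qubits each, plus $r$, gives the stated $2(n-2)+r$. The shared pieces are:
\begin{itemize}
\item the encoding $0\mapsto(0,0)$, $1\mapsto(1,0)$, $2\mapsto(0,1)$;
\item a balanced tree of constant-size out-of-place merges;
\item raw bits as leaves, so the first level is just $g_2$ (two CNOTs and a Toffoli);
\item uncomputation by reversing the non-root levels.
\end{itemize}
The problem is in the merge gadget, the one nontrivial ingredient. You claim that $b_1$, $b_2$ and $a_1\wedge a_2$ are mutually exclusive. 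But the constraint $b_i=1\Rightarrow a_i=0$ only relates the two bits of the \emph{same} child, and nothing prevents $b_1=b_2=1$. On input $(g_1,g_2)=(2,2)$ your formula gives $b=1\oplus 1\oplus 0=0$ and $a=0$, so the output encodes $0$ instead of $2$. This is not a harmless corner case:
\begin{itemize}
\item for $x=1111$ your tree outputs $g_4=0$;
\item for $x=11110001$ the left half is then read as weight $0$, the root outputs $(1,0)$, and the EXACT-one oracle would flag a weight-five string.
\end{itemize}
The truth-table check you defer to would have caught this, since $\{2,2\}$ is one of your six unordered input pairs.

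\textbf{The fix.} Use exclusivity only where it actually holds. If $a_1\wedge a_2=1$, then both children have value $1$, so $b_1=b_2=0$. Hence
\[
b=(b_1\vee b_2)\oplus(a_1\wedge a_2).
\]
Implement $b_1\vee b_2$ by an $X$ on the target followed by a Toffoli with both controls negated on $b_1,b_2$, then apply one ordinary Toffoli on $a_1,a_2$. This is exactly the paper's ``one X gate and two Toffoli gates'' for the second output bit, i.e.\ $1\oplus(\neg b_1\wedge\neg b_2)\oplus(a_1\wedge a_2)$ in its displayed computation of $h$. Alternatively, add a Toffoli on $b_1,b_2$ to your XOR.

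\textbf{The $a$ bit.} Your Boolean formula $a=(a_1\oplus a_2)\wedge\neg b_1\wedge\neg b_2$ is correct. However, your justification for implementing it cheaply has the same confusion. The condition $\neg b_1\wedge\neg b_2$ does \emph{not} hold automatically when $a_1=1$; take $(g_1,g_2)=(1,2)$. Dropping those controls would output the invalid pair $(1,1)$. What is true is $a_i=1\Rightarrow b_i=0$, which gives
\[
a=(a_1\wedge\neg b_2)\oplus(a_2\wedge\neg b_1),
\]
i.e.\ two Toffolis, each with a single negated control. This is precisely the paper's formula for the first bit of $h$.

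With the corrected gadget, still of constant depth and size, the rest of your argument goes through as you describe: induction over the tree, $\mathcal O(\log n)$ depth, $\mathcal O(n)$ size, the ancilla bookkeeping, and uncomputation.
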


\begin{proof}
We define another function $h:\{0, 1,2\}^2\to\{0, 1,2\}$ by:

\begin{equation}
h(a,b)=\left\{\begin{aligned}
&0&\text {if $(a,b)=(0,0)$,}\\
&1&\text{if $(a,b)=(0,1)$ or $(a,b)=(1,0)$,}\\
&2&\text{else.}
\end{aligned}\right.
\end{equation}

For every bitstring $x$ of length $n$, write $x=x^{(l)}x^{(k)}$ the concatenation of a bistring $x^{(l)}$ of length $l$ and a bistring $x^{(k)}$ of length $k$ such that $k+l=n$.\\ \\

Then, note that:
\begin{equation}
g_n(x)=h\left(g_{l}\left(x^{(l)}\right),g_{k}\left(x^{(k)}\right)\right).
\label{eq:eq_4}
\end{equation}

A first step is to compute $g_n$ recursively with NOT, CNOT and Toffoli gates and with $\mathcal O(n)$ ancillae. With the following binary encryptions $0 = (0,0)$, $1=(1,0)$ and $2=(0,1)$, we can encode outputs on pairs of qubits. $g_1$ is obtained by adding one zeroed ancilla to create a two qubits circuit storing directly the output (see Figure~\ref{fig:A_1}). $g_2$ can be implemented using two zeroed ancillae, and using these ancillae to store the output (see Figure~\ref{fig:A_2}). \\ \\

\begin{figure}[H]
    \centering
    \begin{subfigure}[t]{0.4\textwidth}
    \centering
        \includegraphics[width=0.5\textwidth]{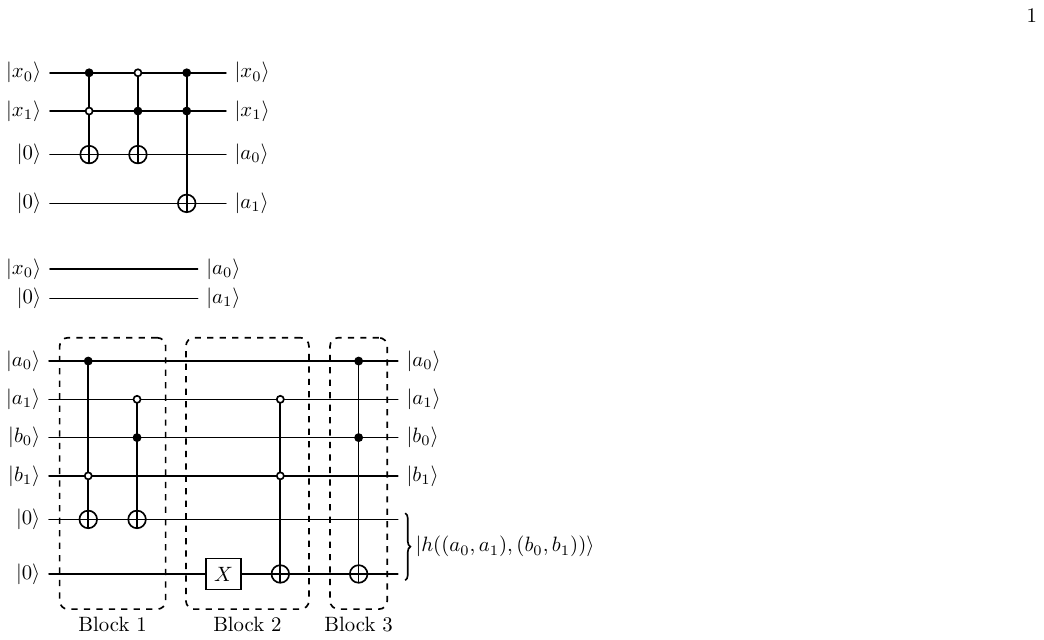}
        \caption{Quantum circuit for $g_1$: since the input is a single qubit, adding an extra zero-initialized ancilla yields the full binary encoding of the state.  \\
        }
        \label{fig:A_1}
    \end{subfigure}
    \hspace{5em}
    \begin{subfigure}[t]{0.4\textwidth}
    \centering
        \includegraphics[width=0.5\textwidth]{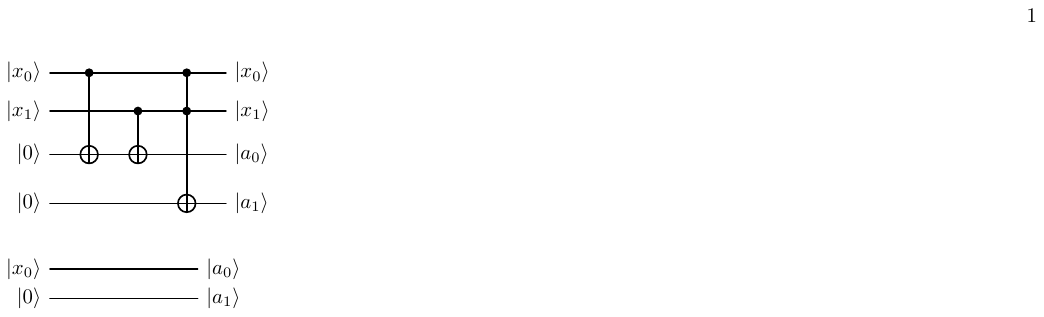}
        \caption{Quantum circuit for $g_2$: The first two CNOT gates detect whether the qubit pair has Hamming weight one and conditionally set the first ancilla qubit. The third Toffoli gate sets the second ancilla if the pair has Hamming weight two.}
        \label{fig:A_2}
    \end{subfigure}
    \caption{}
    \label{fig:A_22}
\end{figure}

Considering a bitstring $x$, we can split it in a collection of pairs and compute $g_2$ on all this pairs in parallel. If the length of $x$ is odd, we will use $g_1$ for the last qubit. This is the encoding part : we now have the information if there is zero one, exactly one one, or two ones in each pair of qubits.\\ \\
Then, we can compute $g_4(yz) = h(g_2(y),g_2(z))$ and store the result on two zeroed qubits, where $yz$ is the concatenation of $y$ and $z$ two-bits strings (see Figure~\ref{fig:A_3}). This is the transmission part. Each merging requires two new zeroed ancillae. \\

\begin{figure}[H]
    \centering
    \begin{subfigure}[b]{0.45\textwidth}
    \centering
        \hspace*{10mm}\includegraphics[width=\textwidth]{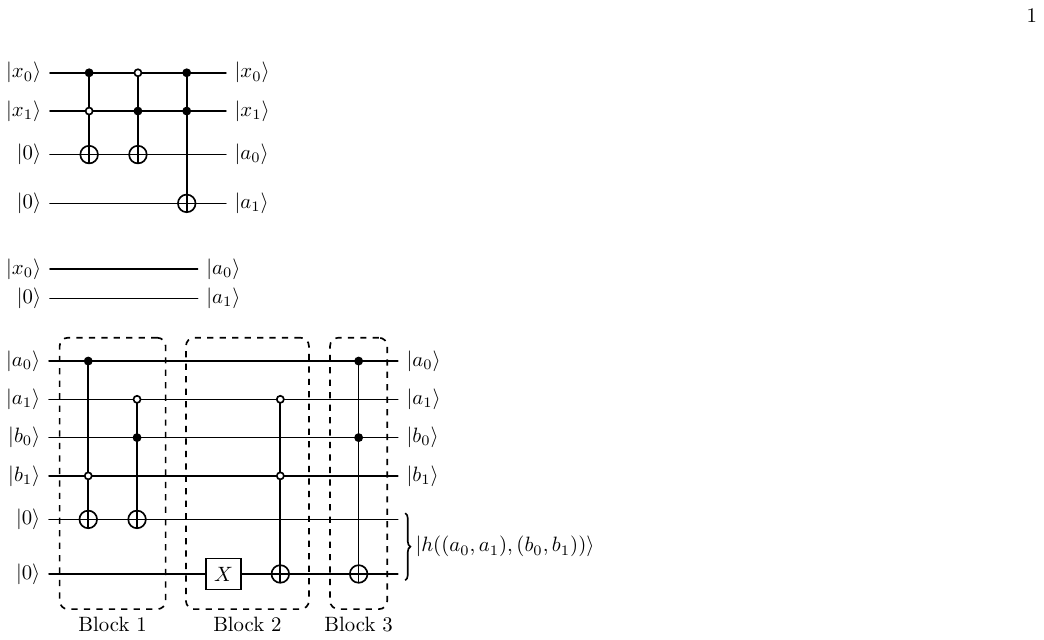}
        \caption{}
        \label{fig:A_31}
    \end{subfigure}
    \hspace{1em}
    \begin{subfigure}[b]{0.45\textwidth}
    \centering
        \hspace*{7mm}\includegraphics[width=\textwidth]{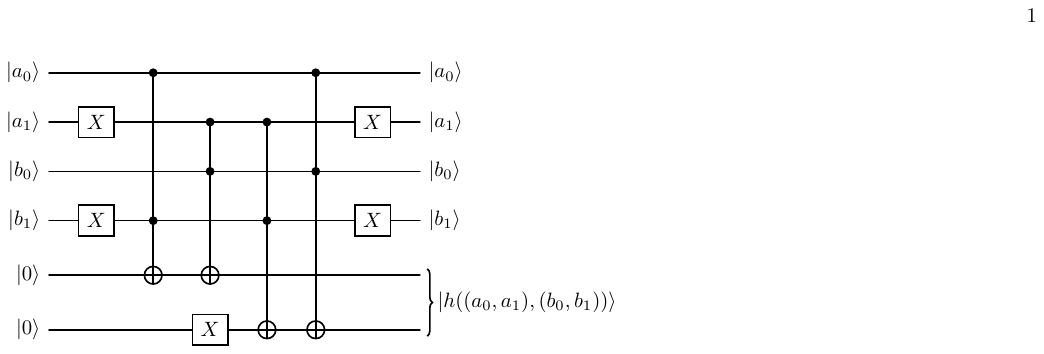}
        \caption{}
        \label{fig:A_32}
    \end{subfigure}
    \caption{The quantum circuit for $h$ can be divided into three blocks (Figure~\ref{fig:A_31}).
    The first block acts on the first ancilla. When checking whether the state is $(1,0)$, we only need to control the first qubit, since the other accessible states are $(0,0)$ and $(0,1)$. Using two Toffoli gates, it sets the ancilla to one if exactly one of the two pairs is in the state $(1,0)$. If both pairs are $(1,0)$, the Toffoli gates cancel each other out, leaving the ancilla in the zero state. \\
    The second and third blocks act on the second ancilla, setting it to one if the concatenation of the two pieces of information indicates a Hamming weight greater than two. This can be implemented using only one X gate and two Toffoli gates. The second block sets the ancilla to one if at least one of the inputs is $(0,1)$. In this case, the third block will not be activated, since the state $(1,1)$ is not accessible. If the second block is not active, the third block will activate and set the ancilla to one when both pairs are $(1,0)$. Merging all blocks, we get a 5-depth circuit (Figure~\ref{fig:A_32}).}
    \label{fig:A_3}
\end{figure}

Thus using formula~\eqref{eq:eq_4}, for every bitstring $x$ of length $n = \sum_{i=0}^{r}{a_{i}2^i}$, we can construct a circuit $G_n$ with a ladder of $h$ circuits applied in parallel that merge the informations from smaller bitstrings to get $g_n(x)$, with depth $\mathcal O(\log(n))$ and $2(n-2) + a_0$ zeroed ancillae (see example in Figure~\ref{fig:A_4} for a 7 qubits bitstring) as the encoding part requires $n$ zeroed ancillae and the transmission part requires $\sum_{i=2}^{r}{a_{i}(\sum_{j=0}^{i-1}{2^j})}+2(\sum_{i=0}^{r}{a_{i}}-1) -2 = \sum_{i=2}^{r}{a_{i}(2^i}-2)+2\sum_{i=0}^{r}{a_{i}} - 4 = n-4+a_0 $ zeroed ancillae. The ancillae can be uncomputed at the expense of doubling the circuit depth.\\ \\

\begin{figure}[H]
    \centering
    \includegraphics[width=0.6\linewidth]{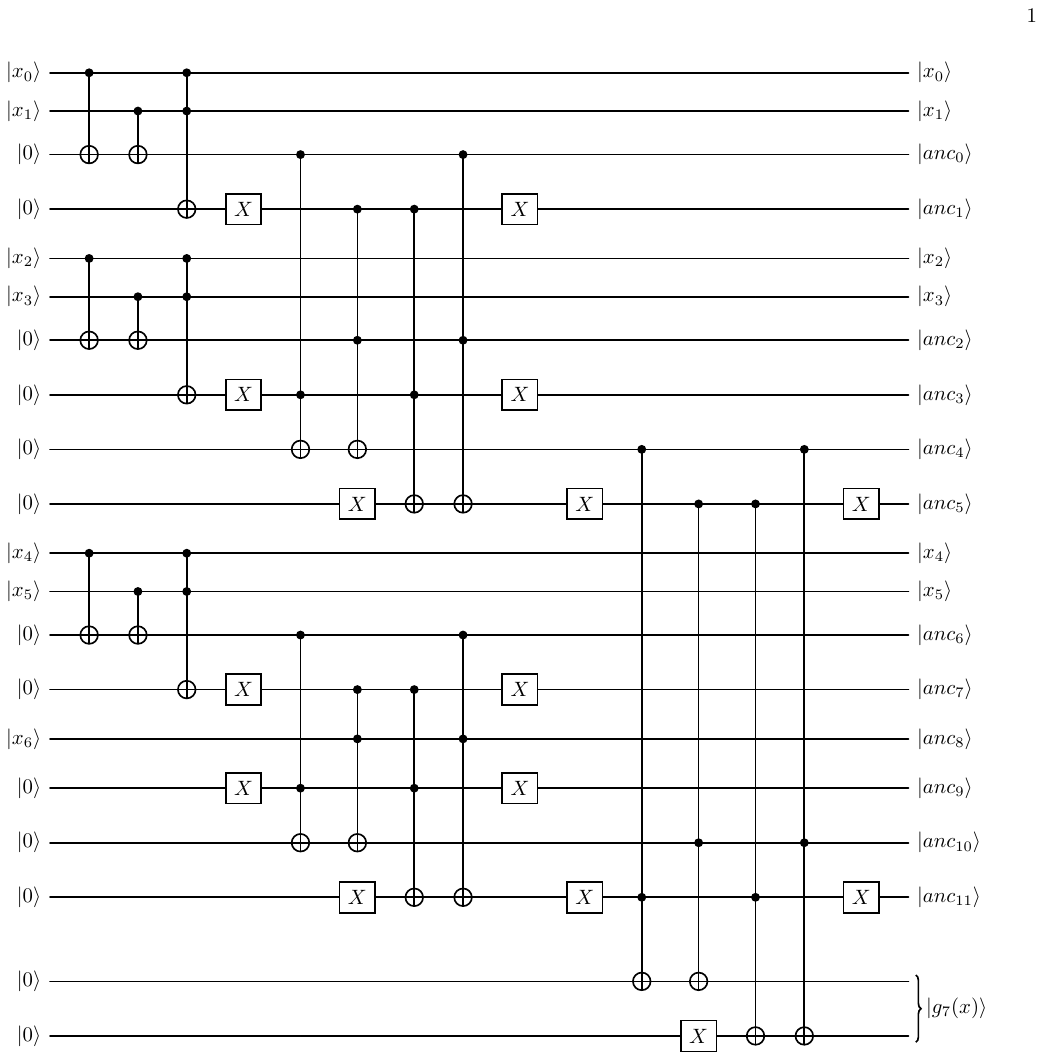}
    \caption{\centering $G_7$ circuit.}
    \label{fig:A_4}
\end{figure}

Mathematically, we can write it as follows: \\

Let $\left| \rho_2 \right\rangle = \left| x \right\rangle \left| 0 \right\rangle ^{\otimes 2} = \left| x_0 \right\rangle \left| x_1 \right\rangle \left| 0 \right\rangle ^{\otimes 2} $ be a 2-bits input state. \\
$G_2 \left| \rho_2 \right\rangle = \left| x \right\rangle \left| x_0 \oplus x_1 \right\rangle \left| x_0 * x_1 \right\rangle = \left| x \right\rangle \left| \mathbf{1}_{\left\{ \sum\limits_{i=0}^{1} x_i = 1 \right\}} \right\rangle \left| \mathbf{1}_{\left\{ \sum\limits_{i=0}^{1} x_i \ge 2 \right\}} \right\rangle = \left| x \right\rangle \left| g_2(x) \right\rangle $, where $\mathbf{1}_{\left\{A\right\}}$ is the indicator function of $A$.\\

If we note $Q_h$, the quantum circuit for h, and $x = x^{(l)} x^{(k)}$ the concatenation of a bistring $x^{(l)}$ of length $l$ and a bistring $x^{(k)}$ of length $k$ such that $k+l=n$:

\[
\resizebox{1.09\textwidth}{!}{$
\begin{aligned}
&Q_h \left| g_l(x^{(l)}) \right\rangle \left| g_k(x^{(k)}) \right\rangle \left| 0 \right\rangle^{\otimes 2} 
= Q_h \left| \mathbf{1}_{\left\{ \sum_{i=0}^{l-1} x^{(l)}_i = 1 \right\}} \right\rangle
  \left| \mathbf{1}_{\left\{ \sum_{i=0}^{l-1} x^{(l)}_i \ge 2 \right\}} \right\rangle
  \left| \mathbf{1}_{\left\{ \sum_{i=0}^{k-1} x^{(k)}_i = 1 \right\}} \right\rangle
  \left| \mathbf{1}_{\left\{ \sum_{i=0}^{k-1} x^{(k)}_i \ge 2 \right\}} \right\rangle
  \left| 0 \right\rangle^{\otimes 2} \\
&= \left| g_l(x^{(l)}) \right\rangle \left| g_k(x^{(k)}) \right\rangle 
  \left| \mathbf{1}_{\left\{ \sum_{i=0}^{l-1} x^{(l)}_i = 1 \right\}} 
  \Big( 1 - \mathbf{1}_{\left\{ \sum_{i=0}^{k-1} x^{(k)}_i \ge 2 \right\}} \Big) 
  \oplus 
  \mathbf{1}_{\left\{ \sum_{i=0}^{k-1} x^{(k)}_i = 1 \right\}} 
  \Big( 1 - \mathbf{1}_{\left\{ \sum_{i=0}^{l-1} x^{(l)}_i \ge 2 \right\}} \Big) \right\rangle 
  \left| 1 - \Big( \Big( 1 - \mathbf{1}_{\left\{ \sum_{i=0}^{l-1} x^{(l)}_i \ge 2 \right\}} \Big) 
  \Big( 1 - \mathbf{1}_{\left\{ \sum_{i=0}^{k-1} x^{(k)}_i \ge 2 \right\}} \Big) 
  \oplus \mathbf{1}_{\left\{ \sum_{i=0}^{l-1} x^{(l)}_i = 1 \right\}} \mathbf{1}_{\left\{ \sum_{i=0}^{k-1} x^{(k)}_i = 1 \right\}} \Big) \right\rangle \\
&= \left| g_l(x^{(l)}) \right\rangle \left| g_k(x^{(k)}) \right\rangle 
  \left| \mathbf{1}_{\left\{ \sum_{i=0}^{l-1} x^{(l)}_i = 1 \ \&\ \sum_{i=0}^{k-1} x^{(k)}_i = 0 \right\}} 
  \oplus 
  \mathbf{1}_{\left\{ \sum_{i=0}^{k-1} x^{(k)}_i = 1 \ \&\ \sum_{i=0}^{l-1} x^{(l)}_i = 0 \right\}} \right\rangle 
  \left| 1 - \Big( \mathbf{1}_{\left\{ \sum_{i=0}^{n-1} x_i \le 2 \right\}} 
  \oplus \mathbf{1}_{\left\{ \sum_{i=0}^{n-1} x_i = 2 \right\}} \Big) \right\rangle \\
&= \left| g_l(x^{(l)}) \right\rangle \left| g_k(x^{(k)}) \right\rangle 
  \left| \mathbf{1}_{\left\{ \sum_{i=0}^{n-1} x_i = 1 \right\}} \right\rangle 
  \left| \mathbf{1}_{\left\{ \sum_{i=0}^{n-1} x_i \ge 2 \right\}} \right\rangle 
  = \left| g_l(x^{(l)}) \right\rangle \left| g_k(x^{(k)}) \right\rangle \left| g_n(x) \right\rangle. \\
\end{aligned}
$}
\]

Thus, if we note $\left| \rho \right\rangle = \left| x \right\rangle \left| 0 \right\rangle ^{\otimes 2(n-1) + a_0}$, the initial state including the ancillae and the flag at the end, we have: \\
\[
\resizebox{1.09\textwidth}{!}{$
\begin{aligned}
&G_n \left| \rho \right\rangle = \left| x \right\rangle 
\left(
\prod_{r=1}^{\left\lfloor \log_2(n) \right\rfloor - \left(1-\left\lceil \log_2(n) - \left\lfloor \log_2(n) \right\rfloor \right\rceil \right)}
\left(
\prod_{i=0}^{\left\lfloor n/2^r \right\rfloor -1} 
\left| \mathbf{1}_{\left\{ \sum_{k=2^r i}^{2^r(i+1)-1} x_k = 1 \right\}} \right\rangle
\left| \mathbf{1}_{\left\{ \sum_{k=2^r i}^{2^r(i+1)-1} x_k \ge 2 \right\}} \right\rangle
\right)
\left| R_r \right\rangle^{\otimes a_{r-1} \mathbf{1}_{\left\{ \sum_{k=0}^{r-1} a_k \ge 2 \ \text{or} \ r = 1\right\}}}
\right)
\left| g_n(x) \right\rangle, \\
\end{aligned}
$}
\]
where $ \left| R_r \right\rangle = 
\begin{cases}
\left| \mathbf{1}_{\left\{ \sum\limits_{k = 2^r \lfloor n/2^r \rfloor}^{n-1} x_k = 1 \right\}} \right\rangle 
\left| \mathbf{1}_{\left\{ \sum\limits_{k = 2^r \lfloor n/2^r \rfloor}^{n-1} x_k \ge 2 \right\}} \right\rangle & \text{if $r \ge 2$,} \\
\left| 0 \right\rangle & \text{if $r = 1$.}
\end{cases}
$
\end{proof}

\begin{Prop} Let $n\in\mathbb N$. Define the function $f_n:\{0, 1\}^n\to\{0, 1\}$ by:
\begin{equation}
f_n(x)=\left\{\begin{aligned}
&1&\text{if $\sum_{i=1}^nx_i=1$,}\\
&0&\text{else.}
\end{aligned}\right.
\end{equation}
We can construct a circuit $E_n$ such that $E_n\ket{x}\ket0=\ket{x}\ket{f_n(x)}$ with $\mathcal O(\log(n))$ depth, $\mathcal O(n)$ size using $2$ zeroed ancillae qubits.
\end{Prop}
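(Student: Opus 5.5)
We start from the circuit $G_n$ of the previous proposition and keep its tree of $h$-merges, but remove its $\mathcal O(n)$ clean ancillae using the conditionally clean ancillae technique of~\cite{Khattar2025riseofconditionally}. The idea is that the input register itself supplies the workspace. A qubit of $x$ whose value is known within a given branch of the computation can serve as a temporary ancilla there, provided it is restored afterwards. The two clean ancillae will hold the running pair $(\mathbf 1_{\{|x|=1\}},\mathbf 1_{\{|x|\ge 2\}})$ at the root, and the flag is then copied or read from them.

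\textbf{Step 1: a reversible compressed encoding.} For each pair $(x_{2i},x_{2i+1})$ we apply a CNOT from $x_{2i}$ onto $x_{2i+1}$, which is a bijection on the pair. The pair then stores $(x_{2i}, x_{2i}\oplus x_{2i+1})$. In the original bits, the pair has weight one iff the second qubit is $1$ and the first is "free," and weight two iff $(1,0)$ holds in the new encoding. Continuing this way, we will design an in-place merging gadget $\tilde h$. It acts on two blocks, each holding a pair-encoded value $g\in\{0,1,2\}$ stored in two of its own data qubits (following the $(0,0),(1,0),(0,1)$ encoding). It writes $h$ of the two values into two data qubits of the merged block, using only CNOT and Toffoli gates and constant depth.

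The key point is that the data qubits not used to store $g$ of a block are, conditioned on that $g$ value, in a known state up to a reversible relabeling. They can therefore act as conditionally clean targets. The whole transformation is a permutation of basis states that remains invertible, because the leaves are never erased. They are only relabeled, in the spirit of the in-place prefix computations of Khattar--Gidney.

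\textbf{Step 2: the tree.} We apply the pairing layer and then $\lceil\log_2 n\rceil$ layers of $\tilde h$ in parallel on disjoint blocks. This gives depth $\mathcal O(\log n)$ and size $\mathcal O(n)$, since there are $n/2^r$ gadgets at level $r$. If $n$ is odd, the leftover qubit is handled with one of the two clean ancillae playing the role of $g_1$'s extra qubit. The root value is then copied onto the output qubit $\ket 0$, using the second clean ancilla when a Toffoli or a $\overline{\text{ctrl}}$ combination is needed to extract $\mathbf 1_{\{|x|=1\}}$. Finally we run the inverse of the tree, which restores $x$ in depth $\mathcal O(\log n)$. Correctness follows by induction on the tree using the identity $g_n(x)=h(g_l(x^{(l)}),g_k(x^{(k)}))$ of~\eqref{eq:eq_4}.

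\textbf{Main obstacle.} The hard part is constructing $\tilde h$ so that it is in place and bijective. Its output must overwrite input qubits without losing information about $x$. The plan is to exploit the fact that when a block has $g=2$, we no longer need any information below it, since the result is $2$ regardless, except for reversibility, which the inverse pass guarantees. When $g\in\{0,1\}$, the unit-weight position can be kept as a relabeled one-hot pattern.

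If a fully in-place gadget proves elusive, the fallback is the Khattar--Gidney scheme directly. We use one clean ancilla to toggle conditionally clean qubits borrowed from the other half of the register, compute the left half's tree while the right half is borrowed, and vice versa. This doubles the depth per level by a constant, which keeps the total at $\mathcal O(\log n)$ depth, $\mathcal O(n)$ size, and two ancillae.
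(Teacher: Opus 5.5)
\textbf{There is a genuine gap: neither your main route nor your fallback, as written, proves the statement.}

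\textbf{The in-place gadget cannot exist.} The gadget $\tilde h$ is never constructed, and in the form you specify it is impossible. After the gadget, a merged block of $m$ data qubits must carry its value $g\in\{0,1,2\}$ on two designated qubits in the $(0,0),(1,0),(0,1)$ encoding. Then the $2^m-m-1$ block strings with $g=2$ must map injectively into the $2^{m-2}$ basis states whose designated pair reads $(0,1)$. Even if two of the four patterns may mean $2$, there are only $2^{m-1}$ slots, which is fewer than $2^m-m-1$ once $m\ge 4$. So no reversible circuit on the block, your CNOT pair re-encoding included, can do this. Your remark that information below a $g=2$ block is unneeded ``except for reversibility, which the inverse pass guarantees'' is circular. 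An inverse pass exists only if the forward map is already injective, and that is exactly what fails.

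\textbf{The fallback is the right tool but misses two things.} Borrowing input qubits as conditionally clean ancillae is also what the paper does.

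First, the workspace does not fit. $G_k$ needs about $2k$ clean qubits ($2(k-2)+r$ ancillae plus its two outputs). The other half of the register, about $n/2$ qubits, is therefore not enough to run $G_{n/2}$. The paper instead splits $x=x^{(k)}x^{(k+l_1)}x^{(k+l_2)}$ into three parts. The two borrowed parts plus one clean ancilla then supply the roughly $2k$ qubits that $G_k$ requires.

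Second, you never give the gadget or the correctness identity. For each part $j$, the paper does the following.
\begin{enumerate}
\item A white-controlled MCX (Khattar--Gidney, depth $\mathcal O(\log n)$, with the first clean ancilla and the flag as helpers) writes into the second clean ancilla the bit $z_j$: whether the other two parts are all zero.
\item $G_k$ runs on part $j$ on the borrowed qubits, with the first clean ancilla as its parity output.
\item A Toffoli controlled on that output and on $z_j$ targets the flag.
\item $G_k$ and the MCX are undone.
\end{enumerate}
When $z_j=0$ the Toffoli is idle, and $G_k^\dagger G_k=1$ restores the dirty borrowed qubits whatever garbage $G_k$ produced. When $z_j=1$ the borrowed qubits really were clean.

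The flag therefore ends in $\bigoplus_j f(x^{(j)})\,z_j$. This equals $f_n(x)$: a term can be $1$ only if $|x|=1$, and when $|x|=1$ exactly the part containing the one contributes. Three such passes, each using the non-recursive $G_k$ once, give depth $\mathcal O(\log n)$, size $\mathcal O(n)$ and two clean ancillae.
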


\begin{proof}
First, note that :
\begin{equation}
f_n(x) \equiv g_n(x) \pmod2.
\end{equation}

The idea is to subdivide bitstring $x$ of lenght $n$ in 3 parts, to use conditionnaly clean ancillae~\cite{Khattar2025riseofconditionally}. As $n=3k+r$, $r \in {\{0,1,2\}}$, we can write $x = x^{(k)}x^{(k+l_1)}x^{(k+l_2)}$, the concatanation of bitstrings of length respectively $k$, $k+l_1$ and $k+l_2$, where $l_1 = \mathds{1}_{r \geq 1}(r)$ and $l_2 = \mathds{1}_{r = 2}(r)$. \\ \\
This decomposition enable us to compute $g_k(x^{(k)})$ with $G_k$ circuit : we can use a first zeroed ancilla and the other $2k+r$ qubits from bistring $x$ as conditionnaly clean ancillae to get the 2 output qubits and the $2k-4$ or $2k-3$ ancillae needed. We apply a white MCX gate (controlling if qubit in state zero) on $x^{(k+l_1)}x^{(k+l_2)}$, using Khattar \& Gidney decomposition~\cite{Khattar2025riseofconditionally} in $\mathcal O(\log(n))$ depth with the first zeroed ancilla and the target qubit of the circuit as ancillae and the second zeroed ancilla to store the result. If it is $1$, this means that bitstring $x^{(k+l_1)}x^{(k+l_2)}$ has only zeros, and thus computing $G_k$ with these ancillae gives the right output $g_k(x^{(k)})$. If it is $0$,  this means that bitstring $x^{(k+l_1)}x^{(k+l_2)}$ has at least one one and using these qubits as ancillae for $G_k$ will lead to a wrong output. However, we just have to store $f_k(x^{(k)}) \equiv g_k(x^{(k)}) \pmod2$ on the target qubit when $x^{(k+l_1)}x^{(k+l_2)}$ has only zeros. Choosing the first zeroed ancilla as the first target qubit of $G_k$ to store $f_k(x^{(k)}) \equiv g_k(x^{(k)}) \pmod2$ , we just have to apply then a CCX gate with the two zeroed ancillae of the circuit on the target qubit (Note: if $k=1$, the result is already stored on qubit, so the first control will be directly on it. This happens only for $n \leq 5$). We can now uncompute the conditionally clean  ancillae with $G_k^\dagger$ and a MCX gate. \\ 
This procedure will compute $f_k(x^{(k)})$ on flag qubit only if $f_{2k+r}(x^{(k+l_1)}x^{(k+l_2)})=0$. \\ \\
Applying the procedure 3 times, for each part of bitstring $x = x^{(k)}x^{(k+l_1)}x^{(k+l_2)}$ (with the first zeroed ancilla and the conditionally clean ones, there are always enough ancillae to compute $A$. The worst-case scenario occurs when $n = 3k + 1$, $k + 1$ is odd, and we compute $A_{k+1}$. In that case, we require $2k - 1$ zeroed ancillae and 2 target qubits, i.e., $2k + 1$ conditionally clean ancillae, which are available thanks to the two blocks of $k$ qubit wires and the first ancilla), we will have flag qubit in state $\ket{f_n(x)}$ at the end (see Figure~\ref{fig:A_5}).

\begin{figure}[H]
    \centering
    \includegraphics[width=1\linewidth]{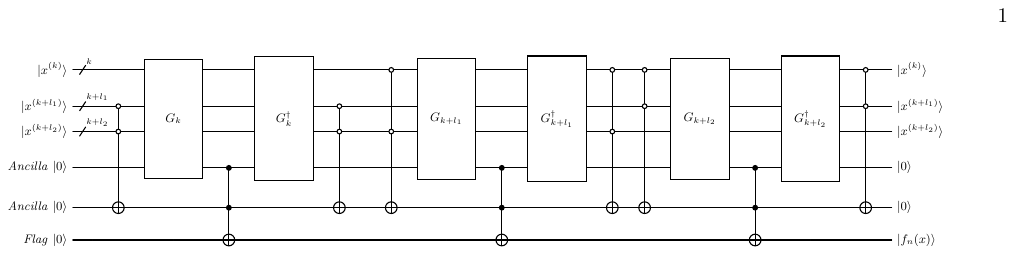}
    \caption{\centering $E_n$ circuit for $n \geq 6$.}
    \label{fig:A_5}
\end{figure}

Mathematically, we can write it as follows: \\

Let $\left| \gamma_0 \right\rangle = \left| x \right\rangle \left| 0 \right\rangle^{\otimes 3} 
= \left| x^{(k)} \right\rangle \left| x^{(k+l_1)} \right\rangle \left| x^{(k+l_2)} \right\rangle \left| 0 \right\rangle^{\otimes 3}
$ 
be the input state, including the two ancillae. \\

After the first MCX, we have: \\ \\
$\left| \gamma_1 \right\rangle = \left| x \right\rangle \left| 0 \right\rangle 
\left| \prod\limits_{i=0}^{k+l_1-1} (1-x_i^{(k+l_1)}) \prod\limits_{i=0}^{k+l_2-1} (1-x_i^{(k+l_2)}) \right\rangle
\left| 0 \right\rangle $ 
\\ \\
Then, applying $G_k$ to the first register $x^{(k)}$, we get:
\[
\resizebox{\textwidth}{!}{$
\left| \gamma_2 \right\rangle = \left| x \right\rangle 
\left| f_k(x^{(k)}) \mathbf{1}_{\left\{ \prod\limits_{i=0}^{k+l_1-1} (1-x_i^{(k+l_1)}) \prod\limits_{i=0}^{k+l_2-1} (1-x_i^{(k+l_2)}) = 1 \right\}} 
+ \sigma \mathbf{1}_{\left\{ \prod\limits_{i=0}^{k+l_1-1} (1-x_i^{(k+l_1)}) \prod\limits_{i=0}^{k+l_2-1} (1-x_i^{(k+l_2)}) = 0 \right\}} \right\rangle
\left| \prod\limits_{i=0}^{k+l_1-1} (1-x_i^{(k+l_1)}) \prod\limits_{i=0}^{k+l_2-1} (1-x_i^{(k+l_2)}) \right\rangle
\left| 0 \right\rangle
$}
\]
where $\sigma$ is an output obtained from $G_k$ when the ancillae are not all zeroed. Its exact value is irrelevant for the next steps. Applying the Toffoli gate gives:
\[
\resizebox{\textwidth}{!}{$
\left| \gamma_3 \right\rangle = \left| x \right\rangle 
\left| f_k(x^{(k)}) \mathbf{1}_{\left\{ \prod\limits_{i=0}^{k+l_1-1} (1-x_i^{(k+l_1)}) \prod\limits_{i=0}^{k+l_2-1} (1-x_i^{(k+l_2)}) = 1 \right\}} 
+ \sigma \mathbf{1}_{\left\{ \prod\limits_{i=0}^{k+l_1-1} (1-x_i^{(k+l_1)}) \prod\limits_{i=0}^{k+l_2-1} (1-x_i^{(k+l_2)}) = 0 \right\}} \right\rangle
\left| \prod\limits_{i=0}^{k+l_1-1} (1-x_i^{(k+l_1)}) \prod\limits_{i=0}^{k+l_2-1} (1-x_i^{(k+l_2)}) \right\rangle
\left| f_k(x^{(k)}) \prod\limits_{i=0}^{k+l_1-1} (1-x_i^{(k+l_1)}) \prod\limits_{i=0}^{k+l_2-1} (1-x_i^{(k+l_2)}) \right\rangle
$}
\]

Uncomputing $G_k$ and the MCX yields:
\[
\resizebox{\textwidth}{!}{$
\left| \gamma_4 \right\rangle = \left| x \right\rangle \left| 0 \right\rangle \left| 0 \right\rangle 
\left| f_k(x^{(k)}) \prod\limits_{i=0}^{k+l_1-1} (1-x_i^{(k+l_1)}) \prod\limits_{i=0}^{k+l_2-1} (1-x_i^{(k+l_2)}) \right\rangle
= \left| x \right\rangle \left| 0 \right\rangle \left| 0 \right\rangle 
\left| f_n(x) \prod\limits_{i=0}^{k+l_1-1} (1-x_i^{(k+l_1)}) \prod\limits_{i=0}^{k+l_2-1} (1-x_i^{(k+l_2)}) \right\rangle
$}
\]
since, if $\prod\limits_{i=0}^{k+l_1-1} (1-x_i^{(k+l_1)}) \prod\limits_{i=0}^{k+l_2-1} (1-x_i^{(k+l_2)}) =1 $, $f_k(x^{(k)}) = f_n(x)$ ; else $\prod\limits_{i=0}^{k+l_1-1} (1-x_i^{(k+l_1)}) \prod\limits_{i=0}^{k+l_2-1} (1-x_i^{(k+l_2)}) =0$. \\

After repeating this for each segment of the bitstring $x$, we finally get:
\[
\resizebox{\textwidth}{!}{$
E_n \left| x \right\rangle \left| 0 \right\rangle^{\otimes 3} = \left| \gamma \right\rangle = 
\left| x \right\rangle \left| 0 \right\rangle \left| 0 \right\rangle 
\left| f_n(x) \prod\limits_{i=0}^{k+l_1-1} (1-x_i^{(k+l_1)}) \prod\limits_{i=0}^{k+l_2-1} (1-x_i^{(k+l_2)}) 
+ f_n(x) \prod\limits_{i=0}^{k-1} (1-x_i^{(k)}) \prod\limits_{i=0}^{k+l_2-1} (1-x_i^{(k+l_2)}) 
+ f_n(x) \prod\limits_{i=0}^{k-1} (1-x_i^{(k)}) \prod\limits_{i=0}^{k+l_1-1} (1-x_i^{(k+l_1)}) \right\rangle
$}
\]

Since 
\[
\resizebox{\textwidth}{!}{$
\prod\limits_{i=0}^{k+l_1-1} (1-x_i^{(k+l_1)}) \prod\limits_{i=0}^{k+l_2-1} (1-x_i^{(k+l_2)}) 
+ \prod\limits_{i=0}^{k-1} (1-x_i^{(k)}) \prod\limits_{i=0}^{k+l_2-1} (1-x_i^{(k+l_2)}) 
+ \prod\limits_{i=0}^{k-1} (1-x_i^{(k)}) \prod\limits_{i=0}^{k+l_1-1} (1-x_i^{(k+l_1)}) 
= 1 - \mathbf{1}_{\left\{ g_k(x^{(k)}) \ge 1 \ \&\ g_{k+l_1}(x^{(k+l_1)}) \ge 1 \ \&\ g_{k+l_2}(x^{(k+l_2)}) \ge 1 \right\}}
$}
\]

and $ f_n(x) \mathbf{1}_{\left\{ g_k(x^{(k)}) \ge 1 \ \&\ g_{k+l_1}(x^{(k+l_1)}) \ge 1 \ \&\ g_{k+l_2}(x^{(k+l_2)}) \ge 1 \right\}} = 0 $,  we finally obtain: \\ 

$ E_n \left| x \right\rangle \left| 0 \right\rangle^{\otimes 3} = \left| \gamma \right\rangle = \left| x \right\rangle \left| 0 \right\rangle \left| 0 \right\rangle \left| f_n(x) \right\rangle$.

\end{proof}

This yields the following corollary.

\begin{Cor}
We can construct a unitary $\Psi_n$ such that $\Psi_n\ket0=\ket{\psi_n}$ in $\mathcal O(\log(n))$ depth with $2$ ancillae qubits.
\end{Cor}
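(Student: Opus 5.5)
We define $\Psi_n = E_n A_n$, where the flag qubit is the output register of $E_n$. In words: first prepare $\ket{\alpha_n}$ with the constant-depth rotation layer $A_n$ from the earlier proposition, then apply the EXACT-one oracle $E_n$ of the preceding proposition, using the flag qubit as its target. The resource bounds are then immediate. $A_n$ has depth $1$ and size $n$ with no ancilla. $E_n$ has depth $\mathcal O(\log(n))$ and size $\mathcal O(n)$, and uses $2$ zeroed ancillae that it returns to $\ket{0}$. So the composition has depth $\mathcal O(\log(n))$, and besides the flag it uses only the $2$ ancillae of $E_n$, which end clean and can therefore be omitted from the output state.

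For correctness, decompose $\ket{\alpha_n}$ on the computational basis. Use the orthogonal decomposition of $\ket{\alpha_n}$ defining $p_n$: by Proposition~\ref{prop:projection}, $\sqrt{p_n}\ket{\beta_n}$ is exactly the component of $\ket{\alpha_n}$ in $\mathrm{span}\{\ket{1/2^k}\}_{k=1}^n$. Hence $\sqrt{1-p_n}\ket{\gamma_n}$ is the component in the orthogonal complement, which is spanned by basis states of Hamming weight $\neq 1$. The map $E_n$ acts as $\ket{x}\ket{0}\mapsto\ket{x}\ket{f_n(x)}$ on basis states, and extends by linearity. It therefore sends the first component to $\sqrt{p_n}\ket{\beta_n}\ket{1}$ and the second to $\sqrt{1-p_n}\ket{\gamma_n}\ket{0}$. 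The result is exactly $\ket{\psi_n}$.

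The one point needing care is that the identity $\ket{\gamma_n}\in\mathrm{span}\{\ket{x}: |x|\neq 1\}$ holds. It does, because Proposition~\ref{prop:projection} identifies the projection exactly, including the phase: all amplitudes are nonnegative, so $\sqrt{p_n}\ket{\beta_n}$ matches the projection with no extra phase. A second point is that the ancillae of $E_n$ must be genuinely uncomputed on every branch of the superposition. This follows because $E_n$ is exact on each basis input, $\ket{x}\ket{0}^{\otimes 3}\mapsto\ket{x}\ket{0}\ket{0}\ket{f_n(x)}$.
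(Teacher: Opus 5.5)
Your proposal is correct and follows the paper's own argument. The paper's proof is the one-line observation $\ket{\psi_n}=E_n\ket{\alpha_n}\ket0$, i.e.\ $\Psi_n=E_n(A_n\otimes 1)$, with depth and ancilla count inherited from $A_n$ and $E_n$; your check that $\sqrt{1-p_n}\ket{\gamma_n}$ lies in the span of Hamming-weight-$\neq 1$ basis states (via Proposition~\ref{prop:projection} and $\sqrt{p_n}=\braket{\beta_n|\alpha_n}$) makes explicit a step the paper leaves implicit.
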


\begin{proof}
Note that $\ket{\psi_n}=E_n\ket{\alpha_n}\ket0$.
\end{proof}

\subsection{Selection operator}
In the following, we show how the unary selection operator can be implemented in logarithmic depth. First, define a flag-controlled version of:
\begin{equation}
\sum_{k=1}^n\ket{1/2^k}\bra{1/2^k}\otimes Z_k+\left(1-\sum_{k=1}^n\ket{1/2^k}\bra{1/2^k}\right)\otimes 1=\prod_{k=1}^n\mathcal C_k(Z_k).
\end{equation}

\begin{Def}[Selection operator]\index{Selection operator} Define the so-called selection operator acting $2n+1$ qubits as follows. First, divide the $2n+1$ qubits into a flag qubit, $n$ control qubits and $n$ target qubits. Then, define
\begin{equation}
S_n=\mathcal C\left(\prod_{j=1}^n\mathcal C_j(Z_j)\right).
\end{equation}
For every $1\leq j\leq n$, $\mathcal C_j(Z_j)$ denotes a $Z$ gate acting on the $j$-th target qubit if and only if the $j$-th control qubit is in state $1$. The outer control is on the state of the flag qubit.
\end{Def}

The following proposition describes a common technique to control unitaries with only a logarithmic depth overhead using ancilla qubits.

\begin{Prop} $S_n$ can be constructed in logarithmic depth using $n$ zeroed ancillae.
\end{Prop}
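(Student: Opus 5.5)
We prove the statement with the standard fan-out-and-uncompute technique: copy the flag into $n$ zeroed ancillae, apply all controlled gates in parallel, then erase the copies.

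\emph{Step 1 (fan-out).} Starting from the flag qubit and $n$ zeroed ancillae, we copy the computational-basis value $f$ of the flag into every ancilla by a doubling tree of CNOT gates. At layer $t$, each of the $2^{t}$ qubits already holding $f$ acts as control for a CNOT targeting a fresh zeroed ancilla. After $\lceil\log_2(n+1)\rceil$ layers, all $n$ ancillae are in state $\ket f$. This uses $n$ CNOTs (Lemma A.1 of \cite{zylberman2024fast}). On a basis state $\ket f\ket 0^{\otimes n}$ the map is $\ket f\ket 0^{\otimes n}\mapsto\ket f\ket f^{\otimes n}$, and we extend it by linearity.

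\emph{Step 2 (parallel controlled gates).} For each $j$ we apply a doubly controlled $Z$ gate, $\mathcal C\mathcal C(Z)$, with controls on the $j$-th copy of the flag and the $j$-th control qubit, and with target the $j$-th target qubit. The $n$ gates act on pairwise disjoint triples of qubits, so they fit in a single layer. Each $\mathcal C\mathcal C(Z)$ has a constant-size decomposition into primitive gates, e.g.\ a Toffoli conjugated by Hadamards, so this step has depth $\mathcal O(1)$ and size $\mathcal O(n)$.

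\emph{Step 3 (uncompute).} We apply the fan-out tree of Step 1 in reverse. Since the flag is untouched and all the gates involved are diagonal in the flag and copy registers, the ancillae return to $\ket 0^{\otimes n}$.

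\emph{Correctness.} We verify correctness on computational basis states $\ket f\ket c\ket t\ket 0^{\otimes n}$ of the flag, control, target and ancilla registers. After Step 1 every copy equals $f$. Step 2 then multiplies the state by $(-1)^{f\sum_j c_jt_j}$, which is exactly the action of $S_n=\mathcal C\bigl(\prod_j\mathcal C_j(Z_j)\bigr)$. Step 3 restores the ancillae. Since the phase depends only on basis labels and the ancillae are restored to zero, linearity gives $S_n\otimes\ket{0^n}\!\bra{0^n}$ on the zeroed-ancilla subspace.

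\emph{Resources.} The total depth is $2\lceil\log_2(n+1)\rceil+\mathcal O(1)=\mathcal O(\log n)$, the size is $\mathcal O(n)$, and exactly $n$ zeroed ancillae are used. One copy could be saved by letting the flag itself drive one of the gates, which would need only $n-1$ ancillae.

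\emph{Main obstacle.} The only point needing care is the ancilla cleanliness argument in Step 3. The CNOTs in the fan-out only XOR basis values, so uncomputation is exact even when the flag is in superposition. This holds because no gate in Step 2 changes the basis values of the copies.
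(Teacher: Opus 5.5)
Your proposal is correct and follows the same route as the paper. You fan out the flag into $n$ zeroed ancillae with a logarithmic-depth CNOT tree, apply the $n$ gates $\bigotimes_{j=1}^n\mathcal C(\mathcal C(Z))$ in one parallel layer, and uncompute the copies; you also write out the details the paper leaves implicit, namely the doubling tree, the phase check $(-1)^{f\sum_j c_jt_j}$, and the argument that the ancillae return clean.
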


\begin{proof}
Note that:
\begin{equation}
\left(\prod_{j=1}^n\mathcal C_j(Z_j)\right)=\bigotimes_{j=1}^n\mathcal C(Z).
\end{equation}
To control this operation efficiently, we propagate the state of the flag on $n$ qubits in $\mathcal O(\log(n))$ depth, perform
\begin{equation}
\bigotimes_{j=1}^n\mathcal C(\mathcal C(Z)),
\end{equation}
and restore the ancillae in logarithmic depth.
\end{proof}

\subsection{Symmetric projected unitary encoding of a linear function}

We now show how the Prepare and Select operators can be used to construct a SPUE of $L_n$ in logarithmic depth. We begin by demonstrating that the Prepare and Select operators yield an affine combination of $L_n$.
\begin{Prop}
$(S_n, \ket{\psi_n})$ is a SPUE of 
\begin{equation}
1-2p_nL_n.
\end{equation}
\label{prop:spue_1m2pLn}
\end{Prop}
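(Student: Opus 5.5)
We need to show that $(\ket{\psi_n}\otimes 1)^\dagger S_n (\ket{\psi_n}\otimes 1) = 1-2p_nL_n$ as an operator on the $n$ target qubits. Here $\ket{\psi_n}$ lives on the $n$ control qubits and the flag. Symmetry is immediate, because $S_n$ is a product of commuting controlled-$Z$-type gates, hence Hermitian and unitary. The plan is to expand $\ket{\psi_n}=\sqrt{p_n}\ket{\beta_n}\ket1+\sqrt{1-p_n}\ket{\gamma_n}\ket0$ and compute the four blocks of the sandwich separately.

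\textbf{Cross terms.} $S_n$ is controlled on the flag, so it is block diagonal with respect to the flag's computational basis. Hence the cross terms between $\ket{\beta_n}\ket1$ and $\ket{\gamma_n}\ket0$ vanish.

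\textbf{Flag $0$ block.} On the flag-$0$ block $S_n$ acts as the identity. This contributes $(1-p_n)\braket{\gamma_n|\gamma_n}\,1=(1-p_n)1$.

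\textbf{Flag $1$ block.} On the flag-$1$ block $S_n$ acts as $\prod_{j}\mathcal C_j(Z_j)$. Restricted to the control subspace spanned by $\{\ket{1/2^k}\}$, it equals $\sum_k\ket{1/2^k}\bra{1/2^k}\otimes Z_k$, since exactly one control is set. Because $\ket{\beta_n}$ lies in this span, the contribution is
\[
p_n\sum_{k,l}\braket{\beta_n|1/2^k}\braket{1/2^l|\beta_n}\,\delta_{kl}Z_k=\frac{p_n}{1-2^{-n}}\sum_{k=1}^n\frac{Z_k}{2^k}=p_n(1-2L_n),
\]
where the last equality is Proposition~\ref{prop:pauli_decomposition}. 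Summing the blocks gives $(1-p_n)+p_n-2p_nL_n=1-2p_nL_n$.

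The only delicate point is the bookkeeping of the flag convention. Definition of $\ket{\psi_n}$ puts $\ket{\beta_n}$ with flag $1$, and $S_n$ must fire on flag $1$; one should check these agree, which they do by the definitions given. Equivalently, one can phrase the argument with $\square=\Psi_n\ket0$ and note that the projector $\ket{\psi_n}\bra{\psi_n}$ is what matters. No further obstacle is expected.
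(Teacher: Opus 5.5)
Your proof is correct and follows the paper's argument. You expand $\ket{\psi_n}$, use that $S_n$ acts as the identity on the flag-$0$ component and as $\sum_k\ket{1/2^k}\bra{1/2^k}\otimes Z_k$ on the flag-$1$ component $\ket{\beta_n}$, and then apply Proposition~\ref{prop:pauli_decomposition}; you are only more explicit than the paper about the vanishing cross terms and the Hermiticity of $S_n$ that the SPUE definition requires.
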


\begin{proof}
Compute the action of $S_n$ on $\ket{\psi_n}$:
\begin{equation}
S_n\ket{\psi_n}=\sqrt{\frac{p_n}{1-1/2^n}}\sum_{k=1}^n\frac1{\sqrt{2^k}}\ket{1/2^k}\ket 1\otimes Z_k+\sqrt{1-p_n}\ket{\gamma_n}\ket0.
\end{equation}
Project the resulting state on $\ket{\psi_n}$ and use Proposition \ref{prop:pauli_decomposition},
\begin{equation}
\begin{split}
\bra{\psi_n}S_n\ket{\psi_n}&=\frac{p_n}{1-1/2^n}\sum_{k=1}^n\frac{Z_k}{2^k}+1-p_n\\
&=p_n(1-2L_n)+1-p_n\\
&=1-2p_nL_n.
\end{split}
\end{equation}
\end{proof}

\begin{Cor} $(\Psi_n^\dag S_n\Psi_n, \ket0)$ is a SPUE of $1-2p_nL_n$.
\label{cor:spue_zero}
\end{Cor}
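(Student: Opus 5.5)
We prove the Corollary by reducing it to Proposition~\ref{prop:spue_1m2pLn} through the identity $\Psi_n\ket0=\ket{\psi_n}$. Here $\ket0$ denotes the all-zero state on the whole ancilla system: the $n$ control qubits, the flag qubit, and the two zeroed ancillae used by $E_n$. The $n$ target qubits are left free.

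First we fix conventions. The SPUE $(S_n,\ket{\psi_n})$ is understood with the partial isometry $\square=\ket{\psi_n}\otimes 1_{\text{target}}$, mapping the target register into the full space. Likewise $(\Psi_n^\dag S_n\Psi_n,\ket0)$ uses $\square_0=\ket0\otimes 1_{\text{target}}$. Since $\Psi_n=E_n(A_n\otimes 1)$ acts only on the ancilla registers, and by the previous Corollary $\Psi_n\ket0=\ket{\psi_n}$, we have
\[
\Psi_n\square_0=\square.
\]
The two clean ancillae of $E_n$ are returned to $\ket0$ and so are absorbed into $\ket{\psi_n}$. Then
\[
\square_0^\dag\Psi_n^\dag S_n\Psi_n\square_0=(\Psi_n\square_0)^\dag S_n(\Psi_n\square_0)=\square^\dag S_n\square=1-2p_nL_n,
\]
where the last equality is Proposition~\ref{prop:spue_1m2pLn}. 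This is the required encoding identity.

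It remains to check the symmetry condition in the SPUE definition, namely that $\Psi_n^\dag S_n\Psi_n$ is Hermitian (a reflection). $S_n$ is a product of controlled-controlled-$Z$ gates and CNOT fan-outs conjugated by their inverses. It is therefore a diagonal $\pm1$ operator on the relevant registers, up to the zeroed fan-out ancillae, which are restored. Hence $S_n^\dag=S_n$. Conjugating by the unitary $\Psi_n$ preserves Hermiticity:
\[
(\Psi_n^\dag S_n\Psi_n)^\dag=\Psi_n^\dag S_n^\dag\Psi_n=\Psi_n^\dag S_n\Psi_n.
\]

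The only real subtlety is bookkeeping. We must make sure the fan-out ancilla register used to implement $S_n$ is included in the all-zero projector, and that it is returned to $\ket0$ inside $S_n$, so that it factors out cleanly. Once this is settled, the remaining steps are immediate.
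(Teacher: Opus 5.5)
Your proposal is correct and follows the paper's proof: both reduce the statement to Proposition~\ref{prop:spue_1m2pLn} through $\Psi_n\ket0=\ket{\psi_n}$, so that $\braket{0|\Psi_n^\dag S_n\Psi_n|0}=\braket{\psi_n|S_n|\psi_n}=1-2p_nL_n$. The paper leaves implicit the details you add: the explicit partial isometries and ancilla bookkeeping, and the check that $\Psi_n^\dag S_n\Psi_n$ is Hermitian (the ``symmetric'' requirement in the SPUE definition, which holds because $S_n$ is diagonal with $\pm1$ entries).
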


\begin{proof}
Recall that, by definition, $\Psi_n\ket0=\ket{\psi_n}$:
\begin{equation}
\braket{0|\Psi_n^\dag S_n\Psi_n|0}=\braket{\psi_n|S_n|\psi_n}=1-2p_nL_n,
\end{equation}
by Proposition \ref{prop:spue_1m2pLn}.
\end{proof}

This translates into a SPUE of a monomial, up to a multiplicative constant. The following theorem is a formal version of Theorem~\ref{the:informal_xketxbrax}

\begin{The}[Formal version of Theorem~\ref{the:informal_xketxbrax}]Let $n\in\mathbb N^*$, $(\ket0\bra0\otimes1+\ket1\bra1\otimes (-\Psi_n^\dag S_n\Psi_n), \ket+\ket0)$ is a SPUE of 
\begin{equation}
p_nL_n.
\end{equation}
Equivalently, the \((n+\mathcal O(n))\)-qubit unitary 
\begin{equation}
\left( H\otimes1 \right) \left(\ket0\bra0\otimes1+\ket1\bra1\otimes (-\Psi_n^\dag S_n\Psi_n)\right)\left( H\otimes1 \right)
\label{eq:spueeq}
\end{equation}
is a block-encoding of $L_n$ with normalization $1/p_n\in \mathcal O(1)$, where \(p_n\) is defined in Equation~\ref{eq:pn}, and can be implemented by a quantum circuit of size \(\mathcal O(n)\) and depth \(\mathcal O(\log (n))\).
\label{the:formal_xketxbrax}
\end{The}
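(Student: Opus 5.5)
The proof splits into three parts: an algebraic verification that the controlled circuit encodes $p_nL_n$, a check that the two formulations in the statement agree, and a resource count.

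\textbf{The SPUE identity.} Write $V=\Psi_n^\dag S_n\Psi_n$ and $W=\ket0\bra0\otimes1+\ket1\bra1\otimes(-V)$. By Corollary~\ref{cor:spue_zero}, $(\bra0\otimes 1)V(\ket0\otimes 1)=1-2p_nL_n$, where $\ket0$ denotes the all-zero state of the PREPARE, flag and copy ancillae. Taking $\square=\ket+\ket0$, the bilinearity of the projection gives
\begin{equation}
\square^\dag W\square=\tfrac12\left[\braket{0|1|0}-\braket{0|V|0}\right]=\tfrac12\left[1-(1-2p_nL_n)\right]=p_nL_n .
\end{equation}
I will also check that $W$ is symmetric in the sense required by the definition of a SPUE. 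I expect this to be the most delicate point. $S_n$ is a product of controlled-$Z$-type gates, so it is Hermitian, and therefore $V$ is Hermitian, and hence so is $W$. If the required notion is the reflection/qubitization property $W^2=1$, this follows as well: $S_n^2=1$, so $V^2=1$ and $W^2=1$.

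\textbf{The block-encoding form.} Since $H\ket0=\ket+$, conjugating $W$ by $H\otimes1$ on the new ancilla turns the projector $\ket+\ket0$ into $\ket0\ket0$. So $\braket{0|(H\otimes1)W(H\otimes1)|0}=p_nL_n$, which is a block-encoding of $L_n$ with normalization $1/p_n\le6$ by Eq.~\eqref{eq:pn}.

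\textbf{Implementation and costs.} The key observation is that the control only needs to reach $S_n$, because $\Psi_n^\dag\Psi_n=1$. Hence
\begin{equation}
\ket0\bra0\otimes1+\ket1\bra1\otimes V=\Psi_n^\dag\,\mathcal C(S_n)\,\Psi_n .
\end{equation}
The controlled $S_n$ is obtained from the flag: a Toffoli combining the flag with the new ancilla, or equivalently the CNOT/X trick described in the Methods section, which conditions the flag on the new ancilla. This is a constant number of extra gates. The sign $-1$ is a $Z$ gate on the new ancilla. Each component has the following cost:
\begin{itemize}
\item $\Psi_n$ is the layer $A_n$ (depth $1$, size $n$) followed by $E_n$. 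By the EXACT-one proposition, $E_n$ has depth $\mathcal O(\log n)$, size $\mathcal O(n)$ and uses $2$ ancillae.
\item $S_n$ uses the fan-out of the flag onto $n$ ancillae, at depth $\mathcal O(\log n)$ and size $\mathcal O(n)$, followed by one layer of $CCZ$ gates and the uncomputation of the fan-out.
\item The remaining gates (Hadamards, $Z$, and the control on the flag) have constant depth and size.
\end{itemize}
Summing, the total depth is $\mathcal O(\log n)$, the size is $\mathcal O(n)$, and the number of ancillae is $n+3+1=\mathcal O(n)$.

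**One subtlety to address.** I must make sure the flag-based control implements $\ket1\bra1\otimes S_n$ exactly and not merely on the $\ket{\psi_n}$ subspace. Because the flag is controlled in every branch, the controlled flag correctly realizes $\mathcal C(S_n)$ on all inputs.
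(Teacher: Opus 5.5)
Your proposal is correct and follows the paper's proof: you use the same computation $\bra{+,0}W\ket{+,0}=\tfrac12\bigl(1-\braket{0|\Psi_n^\dag S_n\Psi_n|0}\bigr)=p_nL_n$ from Corollary~\ref{cor:spue_zero}, and your resource count (control only on $S_n$, fan-out plus one layer of $CCZ$, the depth of $E_n$) spells out what the paper leaves as a direct consequence of the earlier propositions. Two small corrections: first, build $\mathcal C(S_n)$ with the Toffoli computing $\mathrm{flag}\wedge a$, because the CNOT-plus-$X$ trick is not equivalent to it; that trick sends the flag to $\mathrm{flag}\oplus a\oplus 1$, so when $a=0$ it applies the $Z_k$'s on the flag-$0$ branch $\ket{\gamma_n}$, which contains Hamming-weight-$\geq 2$ strings. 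Second, your ancilla tally omits the $n$-qubit register holding $\ket{\alpha_n}$, so the total is $2n+\mathcal O(1)$, which is still $\mathcal O(n)$.
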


\begin{proof}
Using Corollary \ref{cor:spue_zero}:
\begin{equation}
\begin{split}
\braket{+, 0|\ket0\bra0\otimes1+\ket1\bra1\otimes (-\Psi_n^\dag S_n\Psi_n)|+,0}&=\left\langle0\left|\frac{1-\Psi_n^\dag S_n\Psi_n}2\right|0\right\rangle\\
&=\frac{1+2p_nL_n-1}2\\
&=p_nL_n.
\end{split}
\end{equation}
The circuit size and depth are a direct consequence of the previous propositions and corrolaries.
\end{proof}

\subsection{Arbitrary success probability encoding}
\label{sec:Arbitrary success probability encoding}

When one needs to use $L_n$ many times, one needs to encode it up to a multiplicative constant that is as close as possible to $1$. This can be achieved with logarithmic overhead by using amplitude amplification. Amplitude amplification can be seen as a singular value transform with the polynomial of Lemma~25 in~\cite{qsvt}. 

\begin{Cor} Let $\delta\in ]0, 1[$. We can construct a circuit $\Psi_n^\delta$ using $\mathcal O(\log(1/\delta))$ times the unitary $\Psi_n$ such that:
\begin{equation}
\exists\epsilon\in [0, \delta]:\Psi_n^\delta\ket0=\sqrt{1-\epsilon}\ket{\beta_n}\ket1+\sqrt{\epsilon}\ket{\gamma_n^\delta}\ket0,
\end{equation}
for some state $\left|\gamma_n^\delta\right\rangle$ that is orthogonal to $\ket{\beta_n}$.
\end{Cor}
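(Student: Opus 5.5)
The idea is to treat $\Psi_n$ as a state-preparation unitary in which the flag qubit marks the "good" subspace, and to amplify the good component with fixed-point amplitude amplification, implemented as a singular value transform.

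\textbf{Setup.} Write $\Pi=1\otimes\ket1\bra1_{\text{Flag}}$ for the projector onto flag $=1$. By the definition of $\ket{\psi_n}$,
\begin{equation}
\Pi\Psi_n\ket0=\sqrt{p_n}\ket{\beta_n}\ket1 .
\end{equation}
So $(\Psi_n,\Pi,\ket0\bra0)$ is a PUE of the rank-one operator $\sqrt{p_n}\ket{\beta_n}\ket1\bra0$. Its single nonzero singular value is $\sqrt{p_n}$, and by Eq.~\eqref{eq:pn} we have $\sqrt{p_n}\ge 1/\sqrt6$. The left singular vector is $\ket{\beta_n}\ket1$ and the right singular vector is $\ket0$.

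\textbf{Polynomial.} Apply QSVT to this PUE with the odd polynomial of Lemma~25 in~\cite{qsvt}. That lemma gives a polynomial $P$ of degree $\mathcal O(\log(1/\delta)/\eta)$ with $|P|\le1$ on $[-1,1]$ and $|P(x)|\ge\sqrt{1-\delta}$ for $x\ge\eta$. Choosing $\eta=1/\sqrt6$, which is independent of $n$, makes the degree $\mathcal O(\log(1/\delta))$.

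The resulting circuit alternates $\Psi_n$ and $\Psi_n^\dagger$ with projector-controlled phase rotations. The rotation about $\Pi$ is a single-qubit $Z$-rotation on the flag, possibly with one extra ancilla. The rotation about $\ket0\bra0$ is a controlled phase on the all-zero state, implementable in logarithmic depth as an MCX gate. The circuit therefore uses $\Psi_n$ $\mathcal O(\log(1/\delta))$ times.

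\textbf{Output state.} Because the singular value decomposition has rank one, the QSVT output satisfies
\begin{equation}
\Pi\,\Psi_n^\delta\ket0=P(\sqrt{p_n})\ket{\beta_n}\ket1 .
\end{equation}
The phases can be chosen so that $P(\sqrt{p_n})$ is real and nonnegative; otherwise a global phase is absorbed. Set $1-\epsilon=P(\sqrt{p_n})^2$, so that $\epsilon\in[0,\delta]$.

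The remaining component lies in the flag-$0$ subspace, and normalizing it defines $\ket{\gamma_n^\delta}\ket0$. To match the statement, the flag-$0$ part must be shown orthogonal to $\ket{\beta_n}$ on the system register. This follows because the QSVT circuit acts within the span of $\Psi_n\ket0$ and $\Pi\Psi_n\ket0$. The flag-$0$ component of that span is proportional to $\ket{\gamma_n}\ket0$, and $\ket{\gamma_n}\perp\ket{\beta_n}$ by definition. It may also contain the ancilla of the $\ket0$-reflection, which can be assumed to be returned to zero by the standard construction.

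\textbf{Main obstacle.} The delicate step is this invariant-subspace bookkeeping. One must check that every projector rotation used in the QSVT circuit preserves the two-dimensional space spanned by $\ket{\beta_n}\ket1$ and $\ket{\gamma_n}\ket0$ (together with its $\Psi_n^\dagger$-image $\ket0$). Otherwise the residual could acquire a component along $\ket{\beta_n}\ket0$, which would break the claimed form.
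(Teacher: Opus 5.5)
Your proposal is correct and takes the same route as the paper, which justifies this corollary only by remarking that amplitude amplification is a singular value transform with the sign-approximating polynomial of Lemma~25 in~\cite{qsvt}; your use of $p_n\ge 1/6$ to bound the degree by $\mathcal O(\log(1/\delta))$, and your invariant-subspace argument (the flag rotation and the $\ket0$-reflection preserve the relevant two-dimensional subspaces, so the output lies in $\mathrm{span}\{\ket{\beta_n}\ket1,\ket{\gamma_n}\ket0\}$), supply details the paper leaves implicit, including the orthogonality of the residual. One minor imprecision: the plain alternating-phase circuit realizes a complex polynomial $P_{\mathbb C}$ with $\mathrm{Re}\,P_{\mathbb C}=P$, not the real $P$ itself, but since $|P_{\mathbb C}(\sqrt{p_n})|\ge P(\sqrt{p_n})\ge\sqrt{1-\delta}$ and the leftover phase is absorbed as you say, the conclusion is unaffected.
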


\begin{Cor}
$\left({\Psi_n^\delta}^\dag S_n\Psi_n^\delta, \ket0\right)$ is a SPUE of $1-2(1-\epsilon)L_n$.
\end{Cor}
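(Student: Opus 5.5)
The plan is to mirror Proposition~\ref{prop:spue_1m2pLn} and Corollary~\ref{cor:spue_zero}, replacing $\Psi_n$ by $\Psi_n^\delta$ from the preceding corollary. Write $\ket{\psi_n^\delta}=\Psi_n^\delta\ket0=\sqrt{1-\epsilon}\ket{\beta_n}\ket1+\sqrt{\epsilon}\ket{\gamma_n^\delta}\ket0$. We will check that $(S_n,\ket{\psi_n^\delta})$ is a SPUE of $1-2(1-\epsilon)L_n$. Then
\begin{equation}
\braket{0|{\Psi_n^\delta}^\dag S_n\Psi_n^\delta|0}=\braket{\psi_n^\delta|S_n|\psi_n^\delta},
\end{equation}
exactly as in the proof of Corollary~\ref{cor:spue_zero}, gives the claim.

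The key computation proceeds in three steps.

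\textbf{Step 1: block decomposition.} $S_n$ is controlled by the flag, so it acts as the identity on the flag-$\ket0$ branch. It also preserves the flag value. Hence there are no cross terms between the two branches, and
\begin{equation}
\braket{\psi_n^\delta|S_n|\psi_n^\delta}=(1-\epsilon)\braket{\beta_n|\textstyle\prod_k\mathcal C_k(Z_k)|\beta_n}+\epsilon\braket{\gamma_n^\delta|\gamma_n^\delta}.
\end{equation}
Here the inner products are partial inner products over the control register, leaving operators on the target register.

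\textbf{Step 2: the $\ket{\beta_n}$ term.} The state $\ket{\beta_n}$ is supported on the Hamming-weight-one states $\ket{1/2^k}$. On each such state $\prod_j\mathcal C_j(Z_j)$ acts as $Z_k$, and distinct $k$ give orthogonal control states, so cross terms vanish. This yields
\begin{equation}
\frac{1}{1-1/2^n}\sum_{k=1}^n\frac{Z_k}{2^k}=1-2L_n
\end{equation}
by Proposition~\ref{prop:pauli_decomposition}.

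\textbf{Step 3: the $\ket{\gamma_n^\delta}$ term.} This contributes $\epsilon\cdot 1$. Summing the two contributions gives $(1-\epsilon)(1-2L_n)+\epsilon=1-2(1-\epsilon)L_n$.

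The main obstacle is Step 3, where the identity action relies on the flag genuinely being $\ket0$ on the $\ket{\gamma_n^\delta}$ component. One must use the form provided by the amplitude amplification corollary: the output has flag $\ket1$ exactly on $\ket{\beta_n}$ and flag $\ket0$ on the remainder. Amplitude amplification is a QSVT of the flag projector, so it only rescales the amplitudes within the two flagged subspaces. In particular, it preserves the property from $E_n$ that the flag is $1$ precisely on the weight-one span.

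A secondary point is that $\ket{\psi_n^\delta}$ carries ancillae beyond the flag: the conditionally clean and zeroed ancillae of $E_n$, and the QSVT ancilla. These must be returned to $\ket0$ so that they factor out of the inner product. This holds because $E_n$ cleans its ancillae and the amplitude amplification circuit uncomputes its own.
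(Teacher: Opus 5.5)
Your proposal is correct and matches the paper's intended argument: the paper states this corollary without proof, as the direct analogue of Proposition~\ref{prop:spue_1m2pLn} and Corollary~\ref{cor:spue_zero} with $\Psi_n$ replaced by $\Psi_n^\delta$ and $p_n$ by $1-\epsilon$, which is exactly your branch-by-branch computation giving $(1-\epsilon)(1-2L_n)+\epsilon$. The one point neither you nor the paper spells out is that ${\Psi_n^\delta}^\dag S_n\Psi_n^\delta$ is Hermitian, which is the ``symmetric'' condition in the SPUE definition; this holds because $S_n$ is a controlled product of commuting controlled-$Z$ gates, hence a self-adjoint involution.
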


\begin{Cor} $\left(\ket0\bra0\otimes1+\ket1\bra1\otimes \left(-{\Psi_n^\delta}^\dag S_n\Psi_n^\delta\right), \ket+\ket0\right)$ is a SPUE of 
\begin{equation}
(1-\epsilon)L_n.
\end{equation}
\label{cor:1mepsLn}
\end{Cor}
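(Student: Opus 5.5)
The argument follows Theorem~\ref{the:formal_xketxbrax}, with Corollary~\ref{cor:spue_zero} replaced by the preceding corollary, which states that $\left({\Psi_n^\delta}^\dag S_n\Psi_n^\delta, \ket0\right)$ is a SPUE of $1-2(1-\epsilon)L_n$. Write $V={\Psi_n^\delta}^\dag S_n\Psi_n^\delta$ and $W=\ket0\bra0\otimes1+\ket1\bra1\otimes(-V)$. We first check that $W$ is unitary, since it is a controlled version of the unitary $-V$. It is also symmetric in the sense required by the SPUE definition whenever $V$ is. Here $S_n$ is a product of commuting controlled-$Z$ gates, hence Hermitian, so $V$ is Hermitian and therefore so is $W$. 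Thus only the projected block needs to be computed.

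For the block computation, expand $\ket+\ket0=\frac1{\sqrt2}(\ket0+\ket1)\ket0$. The cross terms vanish because $W$ is block diagonal in the control qubit. This gives
\begin{equation}
\bra{+}\bra{0}W\ket{+}\ket{0}=\frac12\left(1-\braket{0|V|0}\right).
\end{equation}
Substituting $\braket{0|V|0}=1-2(1-\epsilon)L_n$ from the previous corollary yields $\frac12\cdot 2(1-\epsilon)L_n=(1-\epsilon)L_n$, which is exactly the claimed encoded operator. Here $\braket{0|V|0}$ denotes the partial inner product over the ancilla registers only, namely the PREPARE register, the flag, and the propagation ancillae. The identity is then an operator identity on the $n$ target qubits. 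The projector $\ket+\ket0$ should be understood as $\ket+\otimes\ket0_{\text{anc}}\otimes 1_{\text{target}}$.

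The only point needing care is the meaning of $\epsilon$. The corollary defining $\Psi_n^\delta$ gives $\Psi_n^\delta\ket0=\sqrt{1-\epsilon}\ket{\beta_n}\ket1+\sqrt\epsilon\ket{\gamma_n^\delta}\ket0$ with $\ket{\gamma_n^\delta}\perp\ket{\beta_n}$. Reproducing the computation of Proposition~\ref{prop:spue_1m2pLn} with $p_n$ replaced by $1-\epsilon$ works only if the flag-$0$ branch is untouched by $S_n$, which holds because $S_n$ acts trivially when the flag is $0$. It also requires that no cross terms survive. This holds because the two branches carry orthogonal flag states, and $S_n$ is diagonal in the flag. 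I expect this to be the main check, but it is already contained in the preceding corollary, so we simply invoke it.
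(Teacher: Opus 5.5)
Your proposal is correct and is the argument the paper intends. The paper gives no separate proof of this corollary; your proof mirrors that of Theorem~\ref{the:formal_xketxbrax}, computing $\bra{+}\bra{0}W\ket{+}\ket{0}=\frac12\left(1-\braket{0|V|0}\right)$ and substituting the preceding corollary's block $1-2(1-\epsilon)L_n$. Your additional check that $W$ is Hermitian, because $S_n$ is a product of commuting controlled-$Z$ gates, covers the symmetry requirement in the SPUE definition, which the paper leaves implicit.
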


Depending on the user preferences, one can also performed directly the amplitude amplification on $\Psi_n^\dag S_n\Psi_n$, preparing similarly a SPUE of $(1-\epsilon)L_n$ with an overheard $\mathcal{O}(\log(\delta))$ in complexity, where $\epsilon \leq \delta$.

\section{Polynomials of the position operator and state preparation}

We can now obtain polynomial diagonal operators from SPUE's of $L_n$. In the present appendix, we only consider polynomials $p$ such that $\max_{x\in [-1, 1]}|p(x)|\leq 1$\ \footnote{This condition can be removed by simply replacing each occurrence of $p$ in the following computations with $p/\|p\|_{\infty}$.}.

\begin{Prop}
Let $\eta>0$ and $p$ be a degree-$d$ complex polynomial with scaling factor $\beta$ and such that $\max_{x\in [-1, 1]}|p(x)|\leq 1$. Then, we can construct a PUE 
\begin{equation}\left(\ket0\bra0\otimes1-\ket1\bra1\otimes \left(\Psi_n^{\eta/\|p'\|_\infty}\right)^\dag S_n\Psi_n^{\eta/\|p'\|_\infty}, \ket+\ket0\ket0, \ket+\ket0\ket0\right)
\end{equation}
of
\begin{equation}
\frac1\beta p(L_n)=\frac1\beta\sum_{x\in B_n}p(x)\ket x\bra x,
\end{equation}
up to spectral norm error $\eta/\beta$ with depth $\mathcal O(d\log(\|p'\|_\infty/\eta)\log(n))$. Here $\|\cdot\|_\infty$ is the maximum norm on $[-1, 1]$
\end{Prop}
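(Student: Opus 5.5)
The plan is to chain three earlier results: Corollary~\ref{cor:1mepsLn} to get a SPUE of $(1-\epsilon)L_n$, Theorem~\ref{the:gqet} to apply the polynomial, and a mean-value estimate to control the resulting error. The PUE in the statement should be read as the GQET circuit $\mathcal G$ built from the SPUE unitary displayed there, with the projectors padded by the single GQSP ancilla.

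\textbf{Step 1 (amplified linear encoding).} Set $\delta=\eta/\|p'\|_\infty$. Corollary~\ref{cor:1mepsLn} then gives a SPUE
\[
\left(W_\delta,\ket+\ket0\right),\qquad W_\delta=\ket0\bra0\otimes1-\ket1\bra1\otimes \left(\Psi_n^{\delta}\right)^\dag S_n\Psi_n^{\delta},
\]
of $A=(1-\epsilon)L_n$ for some $\epsilon\in[0,\delta]$. Since $\Psi_n^\delta$ uses $\mathcal O(\log(1/\delta))$ calls to $\Psi_n$, and $\Psi_n$ and $S_n$ both have depth $\mathcal O(\log n)$, the unitary $W_\delta$ has depth $\mathcal O(\log(\|p'\|_\infty/\eta)\log n)$. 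Controlling $W_\delta$, as the GQSP step requires, costs only $\mathcal O(1)$ extra gates. The reason is the flag trick already described: only $S_n$ must be controlled, and this is done by a CNOT onto the flag qubit. In the same way, the controlled qubitized walk operator built from $W_\delta$ and the reflection about $\ket+\ket0$ keeps this depth, since a multi-controlled reflection on $\mathcal O(n)$ qubits has logarithmic depth with the available ancillae.

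\textbf{Step 2 (GQET).} Apply Theorem~\ref{the:gqet} to the polynomial $\upsilon=p/\beta$. By the definition of the scaling factor,
\[
\max_{|z|=1}\left|\Upsilon(z)\right|=\frac{\beta\max_{[-1,1]}|p|}{\beta}\le 1,
\]
so the hypothesis of the theorem holds. We obtain a PUE of $\frac1\beta p\big((1-\epsilon)L_n\big)$ using $d$ controlled walk operators. The total depth is therefore $\mathcal O(d\log(\|p'\|_\infty/\eta)\log n)$.

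\textbf{Step 3 (error bound).} The operators $(1-\epsilon)L_n$ and $L_n$ are simultaneously diagonal with eigenvalues $(1-\epsilon)x$ and $x$, where $x\in B_n/(1-2^{-n})\subset[0,1]$. By the mean value theorem applied to real and imaginary parts (or simply by integrating $p'$ along the segment),
\[
\left|p((1-\epsilon)x)-p(x)\right|\le \epsilon x\|p'\|_\infty\le\delta\|p'\|_\infty=\eta .
\]
Dividing by $\beta$ gives spectral-norm error at most $\eta/\beta$.

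\textbf{Main obstacle.} The only delicate point is bookkeeping: checking that the qubitized walk operator of a SPUE whose projector is $\ket+\ket0$ (rather than a computational-basis state) can be controlled in logarithmic depth. I would handle this by conjugating with $H$ on the extra ancilla, which turns the reflection into a reflection about an all-zero state, and then use a log-depth multi-controlled $Z$.
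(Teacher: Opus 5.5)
Your proposal is correct and follows the paper's proof essentially step for step: invoke Corollary~\ref{cor:1mepsLn} with $\delta=\eta/\|p'\|_\infty$, use that SPUE $d$ times in Theorem~\ref{the:gqet} applied to $p/\beta$, and bound $\|p((1-\epsilon)L_n)-p(L_n)\|\le\epsilon\|p'\|_\infty$. Your error step is more careful than the paper's, which applies Rolle's theorem in equality form to a complex polynomial; note, however, that only your integral-of-$p'$ route gives exactly $\eta$, since applying the mean value theorem separately to the real and imaginary parts loses a factor $\sqrt{2}$.
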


\begin{proof}
Let $\delta>0$. Corollary~\ref{cor:1mepsLn} provides a SPUE of $(1-\epsilon)L_n$, with $0\leq \epsilon<\delta$ and depth $\mathcal O(\log(1/\delta)\log(n))$. Using this $SPUE$ $d$ times in the construction of Theorem~\ref{the:gqet}, we obtain a PUE of
\begin{equation}
\frac1\beta p((1-\epsilon)L_n).
\end{equation}
By Rolle's Theorem, for every $x\in B_n$:
\begin{equation}
\exists y\in [(1-\epsilon)x, x]:p((1-\epsilon)x)-p(x)=\epsilon xp'(y).
\end{equation}
Therefore, 
\begin{equation}
\|p((1-\epsilon)L_n)-p(L_n)\|\leq \epsilon\|p'\|_\infty\leq \delta\|p'\|_\infty.
\end{equation}
Taking $\delta=\eta/\|p'\|_\infty$ yields the claim.
\end{proof}

\begin{The} Let $n\in\mathbb N^*$, $\{ a_k\}_{k=0}^d\in \mathbb C^{d+1}$, $\epsilon>0$ and $p$ be a $d$-degree polynomial such that $\forall x \in \mathbb C, p(x)=\sum_{k=0}^da_kx^k$ and $\max_{x\in[0,1]}|p(x)|\leq 1$.  One can prepare the state 
\begin{equation}
\frac1{\|p\|_{2,N}}\sum_{x\in B_n}p(x)\ket x,
\end{equation}
using a quantum circuit of logarithmic depth $\mathcal O(d\log(\frac{\|p'\|_2}{\|p\|_2}/\epsilon) \log(n))$, size $\mathcal O(nd\log(\frac{\|p'\|_2}{\|p\|_2}/\epsilon))$ and $\mathcal O(n)$ ancillae, with a
 success probability $\mathbb P =\left\|\frac1{\beta2^{n/2}}\sum_{x\in B_n}p(x)\ket x\right\|^2\to\frac1{\beta^2}\int_0^1p(x)^2dx.$ where $\beta$ is the scaling factor of the polynomials defined in Equation \ref{def:scalingfac}.
\label{thm:maintheorem}
\end{The}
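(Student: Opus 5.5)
We would prove Theorem~\ref{thm:maintheorem} by combining the preceding Proposition (the PUE of $\frac1\beta p(L_n)$ up to spectral error $\eta/\beta$) with preparation of the uniform superposition, then translating the operator error $\eta$ into a two-norm error on the normalized output state.

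\textbf{Circuit.} First prepare $\ket{s}=H^{\otimes n}\ket0$ in depth $1$. Next apply the PUE of the previous Proposition, built from $d$ uses of the amplified SPUE of $(1-\epsilon)L_n$ (Corollary~\ref{cor:1mepsLn}) inside the GQET construction of Theorem~\ref{the:gqet}. Finally postselect the ancillae on $\ket+\ket0\ket0$. One subtlety: the target is $p(x)$ rather than $p(x/(1-2^{-n}))$, since $L_n$ carries the factor $(1-2^{-n})^{-1}$. We would absorb this either by applying GQET to the polynomial $\tilde p(y)=p((1-2^{-n})y)$, which has the same degree, $\|\tilde p\|_\infty\le 1$ on $[0,1]$ and $\|\tilde p'\|_\infty\le\|p'\|_\infty$, or equivalently by folding $(1-2^{-n})$ into the contraction $1-\epsilon$. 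Before invoking the Proposition, which assumes a bound on $[-1,1]$, we would also note that GQET on an operator with spectrum in $[0,1]$ only needs $p$ controlled there. If necessary, we would rescale via the footnote's substitution $p\to p/\|p\|_\infty$, with the constant absorbed in $\beta$.

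\textbf{Resources.} Each SPUE use has depth $\mathcal O(\log(1/\delta)\log n)$, size $\mathcal O(n\log(1/\delta))$ and $\mathcal O(n)$ ancillae. These come from Theorem~\ref{the:formal_xketxbrax} and the $\mathcal O(\log(1/\delta))$ amplitude-amplification rounds. The ancillae are reused across the $d$ GQET steps. Controlling each use costs only $\mathcal O(1)$ extra gates through the flag qubit, as argued in the Methods section. Multiplying by $d$ gives depth $\mathcal O(d\log(1/\delta)\log n)$ and size $\mathcal O(nd\log(1/\delta))$.

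\textbf{Error.} The unnormalized output is $v=\frac{1}{\beta 2^{n/2}}\sum_x p((1-\epsilon)x)\ket x$, while the ideal vector is $u=\frac1{\beta2^{n/2}}\sum_x p(x)\ket x$. The pointwise Rolle bound gives $\|v-u\|\le \frac{\delta}{\beta 2^{n/2}}\big(\sum_x|x p'(\xi_x)|^2\big)^{1/2}$. We would use the standard estimate $\|a/\|a\|-b/\|b\|\|\le 2\|a-b\|/\|b\|$ to bound the normalized error by $2\delta\,\|p'\|_{2,N}/\|p\|_{2,N}$, up to the harmless substitution of $\xi_x$ by $x$. We then pass to the continuum: $2^{-n}\|q\|_{2,N}^2\to\|q\|_2^2$ as Riemann sums for polynomials, with a uniform constant for $n$ large. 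This yields an error of order $\delta\|p'\|_2/\|p\|_2$. Choosing $\delta=\Theta(\epsilon\|p\|_2/\|p'\|_2)$ gives the stated $\log(\frac{\|p'\|_2}{\|p\|_2}/\epsilon)$ factors.

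\textbf{Success probability.} The success probability is $\|v\|^2$. By the same Riemann-sum argument and $\epsilon\to0$, it tends to $\beta^{-2}\int_0^1|p|^2$.

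The main obstacle is the error step. Replacing sup-norms by the $L^2$ quantities $\|p'\|_2,\|p\|_2$ needs care. The evaluation points $\xi_x$ are shifted, and discrete norms must be compared to continuous ones uniformly in $n$. If a clean bound fails, we would first try a Bernstein/Markov-type inequality for degree-$d$ polynomials to control $\sup|p'|$ on small intervals by local $L^2$ norms. Otherwise we would settle for constants depending mildly on $d$, which do not affect the stated logarithmic scaling.
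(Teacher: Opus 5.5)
Your proposal follows the paper's proof essentially step by step. Like the paper, you apply the GQET-based PUE of the preceding Proposition, built on the amplitude-amplified SPUE of $(1-\epsilon)L_n$, to $\ket{s}$. You then bound the pointwise error via Rolle, convert it to a bound on normalized states with $\|a/\|a\|-b/\|b\|\|\le 2\|a-b\|/\|b\|$, pass to $\|p'\|_2/\|p\|_2$ through the $N\to\infty$ limit of the discrete norms, choose $\delta\propto\epsilon\|p\|_2/\|p'\|_2$, and multiply the per-use cost by $d$. Your handling of the factor $(1-2^{-n})^{-1}$ in $L_n$ via $\tilde p(y)=p((1-2^{-n})y)$ improves on the paper, which silently writes the output as $\sum_x p((1-\epsilon)x)\ket x$.

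One justification in your proposal is wrong. GQET on an operator with spectrum in $[0,1]$ does not only need $p$ controlled there. Theorem~\ref{the:gqet} requires $|\Upsilon(z)|\le 1$ on the whole unit circle, and the normalization actually implemented is $\max_{|z|=1}|\Upsilon(z)|=\beta\max_{[-1,1]}|p|$. A polynomial bounded by $1$ on $[0,1]$ can be exponentially large on $[-1,0]$: for example $p(x)=T_d(2x-1)$ has $|p(-1)|=|T_d(-3)|\approx(3+\sqrt8)^d/2$. Your fallback $p\to p/\|p\|_\infty$ therefore divides the success probability by $(\max_{[-1,1]}|p|)^2$, so ``absorbing the constant into $\beta$'' silently changes the quantity in the statement.

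The paper has the same mismatch. It invokes a Proposition assuming $\max_{[-1,1]}|p|\le 1$ (the blanket assumption of that appendix section), while the theorem only assumes the bound on $[0,1]$. You can fix this in one of two ways:
\begin{enumerate}
\item Adopt the $[-1,1]$ hypothesis explicitly.
\item Run GQET directly on the SPUE $({\Psi_n^\delta}^\dagger S_n\Psi_n^\delta,\ket0)$ of $1-2(1-\epsilon)L_n$, whose spectrum lies in $[-1,1]$, with $q(y)=p\big((1-2^{-n})(1-y)/2\big)$.
\end{enumerate}
The second route is the cleaner one. There $|q|\le 1$ on $[-1,1]$ follows from the $[0,1]$ hypothesis alone, and $q$ evaluated at the eigenvalue attached to $\ket x$ is exactly $p((1-\epsilon)x)$. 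It also absorbs the $(1-2^{-n})$ factor at the same time, and $\beta$ becomes the scaling factor of $q$.

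A minor point: for complex $p$, Rolle's theorem does not hold pointwise. Use $|p(x)-p((1-\epsilon)x)|\le\epsilon x\max_{[(1-\epsilon)x,x]}|p'|$ instead, which gives the same bound.
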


\begin{proof}
    The prepared quantum state after success is $\ket{p_\epsilon}=(1/\|p_\epsilon\|_{2,N})\sum_{x\in B_n}p((1-\epsilon)x)\ket x$. Thus,  the difference with the target quantum state $\ket{p}=(1/\|p\|_{2,N})\sum_{x\in B_n}p(x)\ket x$ can be bounded as
\begin{equation}
\begin{split}
    \|\ket{p}-\ket{p_\epsilon}\|_{2,N}&\leq \|\ket{p}-\frac{1}{\|p\|_{2,N}}\sum_{x\in B_n}p((1-\epsilon)x)\ket x\|_{2,N}+\|\frac{1}{\|p\|_{2,N}}\sum_{x\in B_n}p((1-\epsilon)x)\ket x-\ket{p_\epsilon}\|_{2,N} \\&  \leq \frac{1}{\|p\|_{2,N}}\sqrt{\sum_{x\in B_n} |p(x)-p((1-\epsilon)x)|^2 }+\frac{|\|p_\epsilon\|_{2,N}-\|p\|_{2,N} |}{\|p\|_{2,N}}  
    \\&\leq \frac{2}{\|p\|_{2,N}}\sqrt{\sum_{x\in B_n} |p(x)-p((1-\epsilon)x)|^2 }\leq\frac{2\epsilon}{\|p\|_{2,N}}\sqrt{\sum_{x\in B_n} |p'(y_x)|^2}\xrightarrow{N\rightarrow +\infty} 2\epsilon \|p'\|_2/\|p\|_2
\end{split}
\end{equation}
where the triangular inequality and the subadditivity of the norm have been used. One can choose $\epsilon=O(\epsilon'\|p\|_2/\|p'\|_2)$ to obtain an $\epsilon'$ approximation of the target state. 

Additionally, the success probability is $\mathbb P =\left\|\frac1{\beta2^{n/2}}\sum_{x\in B_n}p((1-\epsilon)x)\ket x\right\|^2\xrightarrow{N\rightarrow ±\infty,\epsilon\rightarrow 0}\frac1{\beta^2}\int_0^1|p(x)|^2dx$ where $\beta$ is the scaling factor of the polynomial $p$ defined Equation \ref{def:scalingfac}.
Therefore, by combining the SPUE of $p_nL_n$, the amplitude amplification step, and the GQET step, the overall quantum circuit has a depth $\mathcal O(d\log(\frac{\|p'\|_2}{\|p\|_2}/\epsilon)\log(n))$, a size $\mathcal O(nd\log(\frac{\|p'\|_2}{\|p\|_2}/\epsilon))$, and uses $\mathcal O(n)$ ancillae.
\end{proof}

\end{document}